\newcounter{hypothesis}
\definecolor{darkblue}{rgb}{0,0,.6}
\newtheorem{theorem}{Theorem}
\newtheorem{assumption}{Assumption}
\newtheorem{lemma}{Lemma}
\newtheorem{proposition}{Proposition}
\newtheorem{remark}{Remark}
\newtheorem{condition}{Identification Condition}
\newcommand{\overbar}[1]{\mkern 1.5mu\overline{\mkern-1.5mu#1\mkern-1.5mu}\mkern 1.5mu}
\newcommand{\xdownarrow}[1]{%
  {\left\downarrow\vbox to #1{}\right.\kern-\nulldelimiterspace}
}
\newcommand{\blind}{0}
\DeclareMathOperator*{\argmin}{\arg\!\min}
\newsavebox\CBox
\date{\today}
\begin{document}

\def\spacingset#1{\renewcommand{\baselinestretch}
{#1}\small\normalsize} \spacingset{1}

\if0\blind
{
  \title{\bf Factor-augmented Smoothing Model for Functional Data}
\author{Yuan Gao\thanks{Postal address: Research School of Finance, Actuarial Studies and Statistics, Level 4, Building 26C, Kingsley St, Australian National University, Canberra, ACT 2601, Australia; Email: yuan.gao@anu.edu.au}
 \hspace{.2cm}\\
Research School of Finance, Actuarial Studies and Statistics\\ 
Australian National University \\ 
\\
  Han Lin Shang
  \hspace{.2cm}\\
    Department of Actuarial Studies and Business Analytics \\
    Macquarie University \\
 \\ 
 Yanrong Yang \\  
Research School of Finance, Actuarial Studies and Statistics\\ 
Australian National University \\ 
 }
  \maketitle
} \fi

\if1\blind
{
   \title{\bf Factor-augmented Smoothing Model for Functional Data}
   \author{}
   \maketitle
} \fi

\bigskip

\begin{abstract}
We propose modeling raw functional data as a mixture of a smooth function and a high-dimensional factor component. The conventional approach to retrieving the smooth function from the raw data is through various smoothing techniques. However, the smoothing model is not adequate to recover the smooth curve or capture the data variation in some situations. These include cases where there is a large amount of measurement error, the smoothing basis functions are incorrectly identified, or the step jumps in the functional mean levels are neglected. To address these challenges, a factor-augmented smoothing model is proposed, and an iterative numerical estimation approach is implemented in practice. Including the factor model component in the proposed method solves the aforementioned problems since a few common factors often drive the variation that cannot be captured by the smoothing model. Asymptotic theorems are also established to demonstrate the effects of including factor structures on the smoothing results. Specifically, we show that the smoothing coefficients projected on the complement space of the factor loading matrix is asymptotically normal. As a byproduct of independent interest, an estimator for the population covariance matrix of the raw data is presented based on the proposed model. Extensive simulation studies illustrate that these factor adjustments are essential in improving estimation accuracy and avoiding the curse of dimensionality. The superiority of our model is also shown in modeling Canadian weather data and Australian temperature data.
\\

\noindent {\bf Keywords:} \  Basis function misspecification; Functional data smoothing;  High-dimensional factor model; Measurement error; Statistical inference on covariance estimation
\end{abstract}

\newpage
\spacingset{1.48}

\section{Introduction}

With the increasing capability to store data, functional data analysis (FDA) has received growing attention over the last 20 years. Functional data are considered realizations of smooth random objects in graphical representations of curves, images, and shapes. The monographs of \cite{RS02, RS05} and \cite{RH17} provide a comprehensive account of the methodology and applications of the FDA; other relevant monographs include \cite{FV06} and \cite{HK12}. More recent advances in this field can be found in many survey papers \citep[see, e.g.,][]{C14,FGG17, GV16, RGS+17, WCM16}. One main challenge in the FDA lies in the fact that we cannot observe functional curves directly, but only discrete points, which are often contaminated by measurement errors. 
To model a mixture of functional data and high-dimensional measurement error, we introduce a factor-augmented smoothing model (FASM).

We denote a random sample of $n$ functional data as $\mathcal{X}_i(u), i = 1,\dots, n$, and $u \in \mathcal{I}\subset \mathbb{R}$, where $\mathcal{I}$ is a compact interval on the real line $\mathbb{R}$. In practice, the observed data are discrete points and are often contaminated by noise or measurement error. We use $Y_{ij}$ to represent the $j$th observation on the $i$th subject; the observed data can then be expressed as a ``signal plus noise" model:
\begin{equation*}
Y_{ij} = \mathcal{X}_i(u_j) + \eta_{ij}, \quad  j = 1,\dots p, \ i=1, \ldots, n.
\end{equation*} 
We use $\mathcal{X}_i(u_j)$ to denote the realization of the $j$th discrete point on the curve $\mathcal{X}_i(\cdot)$, and $\eta_{ij}$ is the noise or measurement error. We assume that measurement error only occurs where the measurements are taken; thus, the error $\bm{\eta}_{i} = (\eta_{i1},\dots,\eta_{ip})$ is a multivariate term of dimension $p$. Though in practice, the signal function component $\bm{\mathcal{X}}_i = (\mathcal{X}_i(u_1),\dots, \mathcal{X}_i(u_j))$ is of the same $p$ dimension, it differs from $\bm{\eta}_{i}$ in nature. Although functions are potentially infinite-dimensional, we may impose smoothing assumptions on the functions, which usually implies functions possess one or more derivatives. This smoothness feature is used to separate the functions from measurement errors -- a functional smoothing procedure.

When the variance of the noise level is a tiny fraction of the variance of the function, we say the signal-to-noise ratio is high. In this case, classic smoothing tools apply to functional data, including kernel methods \citep[e.g.][]{WJ95}, local polynomial smoothing \citep[e.g.][]{FG96}, and spline smoothing \citep[e.g.][]{W90, E99, GS94}. With pre-smoothed functions, estimates, such as mean and covariance functions, can be further obtained. More recent studies on functional smoothing approaches include \cite{CY11, YL13}, and \cite{ZW16}. In this article, we apply basis smoothing to the functions $\mathcal{X}_i(u)$; that is, we represent $\mathcal{X}_i(u)$ as $\mathcal{X}_i(u) = \sum_{k=1}^K c_{ik}\phi_k(u)$, where $\{\phi_k(u), \ k = 1,\dots, K\}$ are the basis functions and $\{c_{ik}, \ i=1,\dots, n,\ k= 1,\dots, K\}$ are the smoothing coefficients. The smoothing model then becomes
\begin{equation*}
Y_{ij} =  \sum_{k=1}^K c_{ik}\phi_k(u_j) + \eta_{ij}, \quad  j = 1,\dots p, \ i=1, \ldots, n.
\end{equation*} 

When the signal-to-noise level is low, smoothing tools may not be adequate in removing the measurement error and may cause an inefficient estimation of the smoothing coefficients. Let us take a further look at the measurement error $\eta_{ij}$. In the FDA, the number of discrete points $p$ on each subject is often large compared with the sample size $n$. Hence the term $\bm{\eta}_{i}$ is a high-dimensional component. In this case, the observed data are, in fact, a mixture of functional data and high-dimensional data. The existence of the large measurement error $\eta_{ij}$ raises the curse of dimensionality problem, which naturally calls for the application of dimension reduction models to $\eta_{ij}$. Many studies have been conducted on various dimension reduction techniques for high-dimensional data; among theses, factor models are widely used \citep[e.g.][]{FFL08, LYB11}. 

We propose using a factor model for the measurement error term. Without further information on the measurement error, factor model is appropriate since the estimation of latent factors does not require any observed variables. The high-dimensional measurement error is assumed to be driven by a small number of unobserved common factors.
\begin{equation*}
\eta_{ij} = \bm{a}_j^\top\bm{f}_i + \epsilon_{ij}, \quad i = 1,\dots, n, \  j = 1,\dots, p,
\end{equation*}
where $\bm{f}_i \in \mathbb{R}^r$ are the unobserved factors, $\bm{a}_j \in \mathbb{R}^r$ are the unobserved factor loadings, $r$ is the number of latent factors, and $\epsilon_{ij}$ are idiosyncratic errors with mean zero. Thus, the observed data $Y_{ij}$ can be written as the sum of two components:
\begin{equation*}
Y_{ij} =  \sum_{k=1}^K c_{ik}\phi_k(u_j) + \bm{a}_j^\top\bm{f}_i  + \epsilon_{ij}, \quad i = 1,\dots, n,\  j = 1,\dots, p.
\end{equation*}

This is a basis smoothing model with the factor-augmented form. This proposed model can be easily modified to adopt nonparametric smoothing methods. In Section~\ref{se:nonparam}, we illustrate the use of spline smoothing approaches. In Section~\ref{se:6.7}, the nonparametric smoothing model is applied to simulated data.

In this paper, we motivate the FASM in three considerations, as listed below. In these three cases, using the proposed model remedies the defects of the traditional smoothing model. Examples of the following three motivations are provided in Section~\ref{se:2}.
\begin{enumerate}
\item
In traditional smoothing models, the measurement error $\eta_{ij}$ is assumed to be non-informative and independently and identically distributed (i.i.d.) in both directions. This is an unrealistic assumption when the measurement errors contain information. With the factor model applied, we assume that a small number of unobserved factors can capture the covariance in the measurement error. This is usually reasonable in practice because a few common factors often drive the occurrence of systematic measurement error.
\item
When the smoothing basis functions are incorrectly identified, the smoothing model will lead to an erroneous coefficient estimate and large residuals. The proposed model deals with this problem since the unexplained variation resulting from the basis's misidentification can be modeled with a small number of unobserved common factors.
\item
When there are step jumps in the mean level of the functions, neglecting the mean shift in smoothing models will result in large residuals at the point where the jumps occur. The changes in the mean levels of the functions come from a universal source and can be modeled by common factors.
\end{enumerate}

Since the latent factors are unobserved, we propose an iterative approach to estimate the smooth function and the factors simultaneously. Principal component analysis (PCA) is used as a tool in estimating the factor model, and penalized least squares estimation is applied to construct the estimator for the smoothing coefficient $c_{ik}$. We establish the asymptotic theories of the smoothing coefficient estimator, where the consistency of the estimator is proved. We also provide the asymptotic distribution of the projected estimator in the orthogonal complement of the space spanned by the factors $\bm{f}_i$. The interplay between the smooth component and the factor model component is manifested.



In the remainder of this article, we elaborate on the previously mentioned three motivations in detail, with examples given in Section~\ref{se:2}. In Section~\ref{se:3}, the model is formally stated, and the iterative estimation approach is provided. We discuss the asymptotic properties of the smoothing coefficients under various assumptions in Section~\ref{se:4}. We extend the proposed model to a nonparametric smoothing approach in Section~\ref{se:nonparam}. In Section~\ref{se:5}, we consider the statistical inference aspect of the model and propose a covariance matrix estimator for the raw data. In Section~\ref{se:6}, we conduct Monte-Carlo simulations on the proposed model under different settings. A few real data examples are given in Section~\ref{se:7}, and conclusions are drawn in Section~\ref{se:8}. Last, we provide proofs of the relevant theorems and lemmas in the Appendix.

\section{Motivation}\label{se:2}

We introduce three examples to motivate the proposed model. In these cases, the smoothing model is not adequate to capture the raw data's signal information. In the first example, when large measurement error exists, the residuals after smoothing are large with some extreme values. In the second example, when the basis functions are selected incorrectly, part of the functions' variation cannot be captured by the smoothing model. In the third example, when there are step jumps in the functional data, the residuals after smoothing contain gaps. These examples demonstrate that further modeling of the residuals is needed.

\subsection{Functional data with measurement error}\label{se:2.1}

Figure~\ref{fig:2} shows the rainbow plots of the average daily temperature and log precipitation at 35 locations in Canada. Due to the nature of the two kinds of data, it is reasonable to assume that temperature and log precipitation are functions over time. The two graphs, however, display distinct features. In the temperature plot, though there are some perturbations, it is relatively easy to discern each curve's shape. In the precipitation plot, there is a tremendous amount of variability in the raw data, such that it is almost impossible to observe the underlying shape of the curves. 

Smooth temperature data can be retrieved without much difficulty using basic smoothing techniques. The residuals are small, with constant variation. On the other hand, the residuals after smoothing exhibit a high level of variation for the precipitation data and even contain some extreme values. Our model endeavors to further explain the large residuals in similar cases to the precipitation data; we will show the fitting result in Section~\ref{se:7}.

\begin{figure}[!htbp]
\center
\includegraphics[width = 8.65cm]{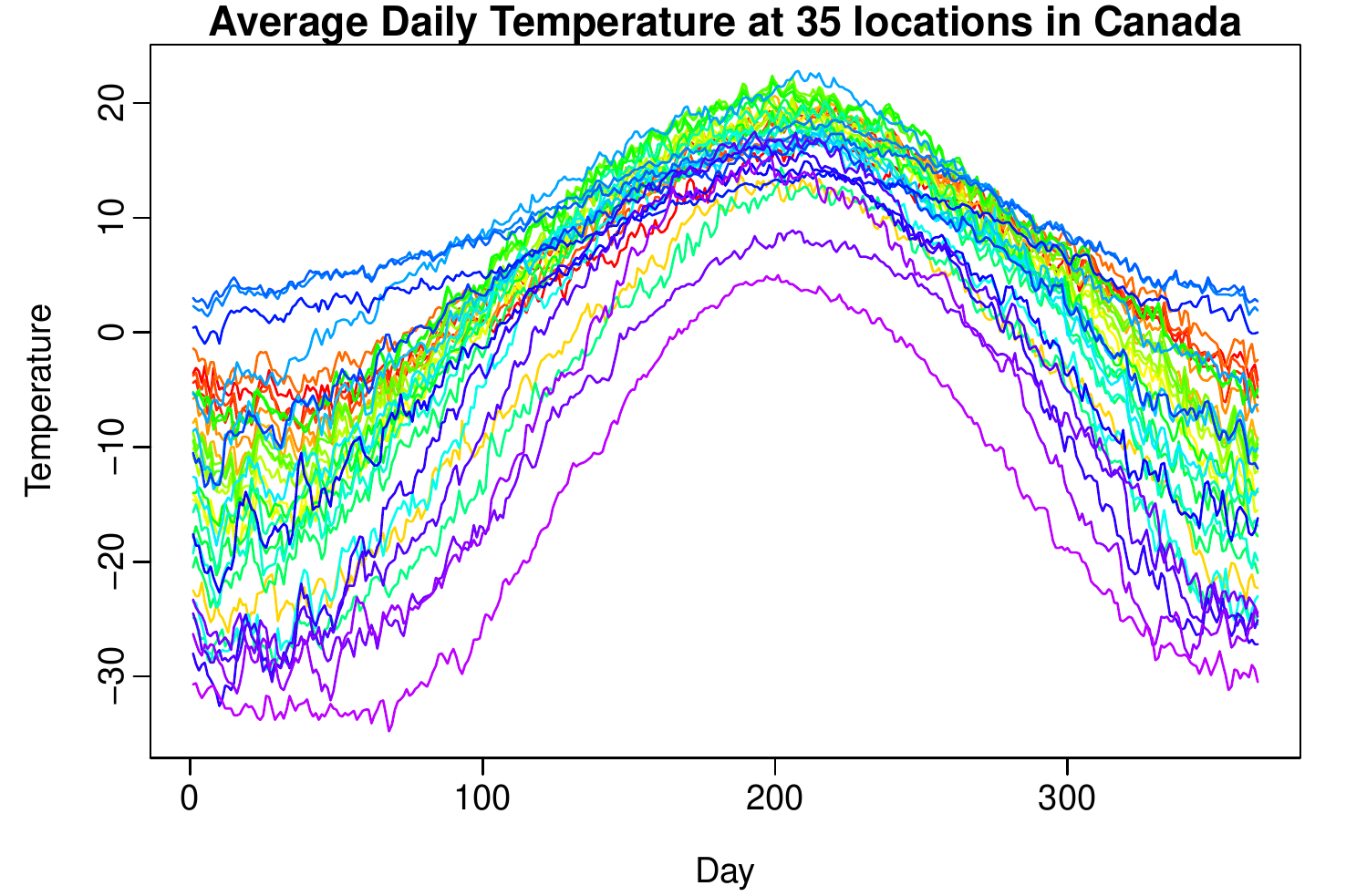}
\quad
\includegraphics[width = 8.65cm]{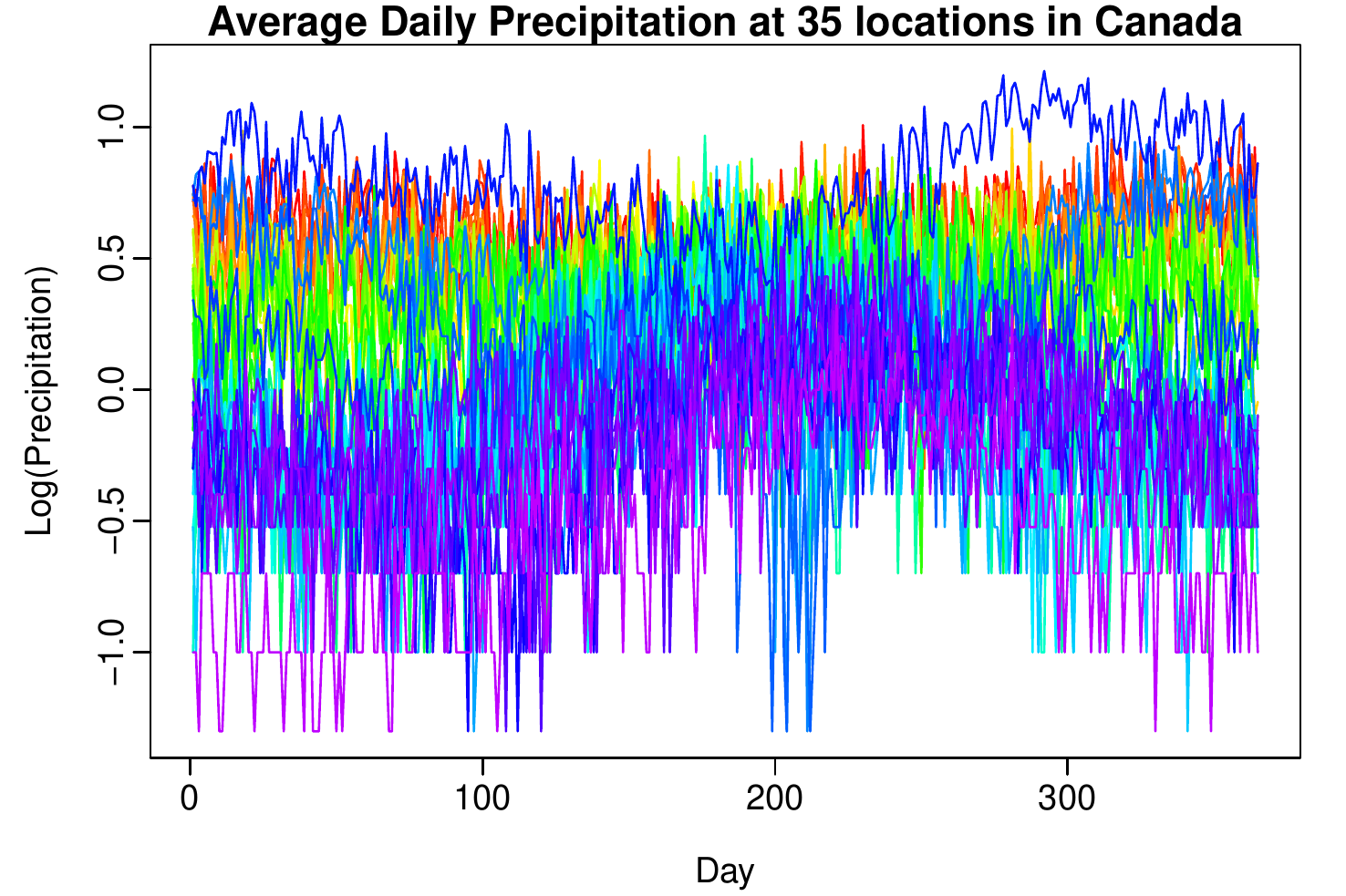}
\caption{Average daily temperature and log precipitation in 35 Canadian weather stations averaged over the year 1960 to 1994.}\label{fig:2}
\end{figure}

\subsection{Misidentification of the basis function}\label{se:2.2}

It is important to choose the appropriate basis functions in the smoothing method. In this example, we show the inadequacy of the smoothing model when the basis functions are misidentified. We generate functional data using basis functions with changing frequencies. The raw data are shown in Figure~\ref{fig:9a}. Fourier basis functions are used. In the second half of the data, the frequency of the Fourier basis functions increases, so the data set exhibits more variation toward the right end. Suppose that we were not aware of the change in the frequencies in the basis functions, and still used the basis of the first half of the data for the whole curves. The consequence of misidentifying the basis functions when a smoothing model is applied can be observed in Figure~\ref{fig:9b}. The residuals are large in the second half. The smoothing model fails to reduce the residuals; a factor model can be used to further model the signal hidden in the large residuals. The data generation process and further analysis can be found in Section~\ref{se:6.5}.

\begin{figure}[!htbp]
\center
\subcaptionbox{Raw data \label{fig:9a}}
{\includegraphics[width = 8.65cm]{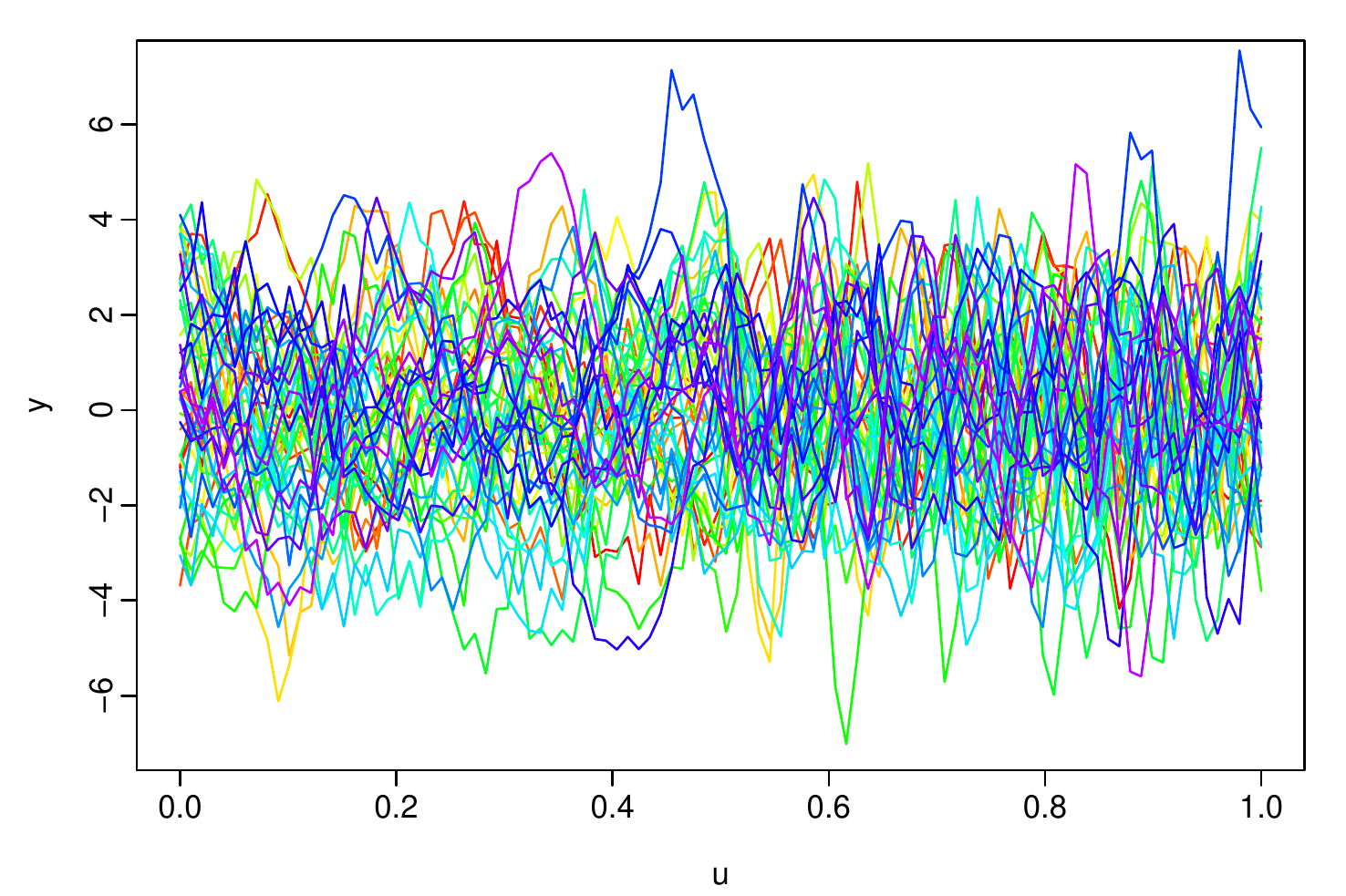}}
\subcaptionbox{Residuals\label{fig:9b}}
{\includegraphics[width = 8.65cm]{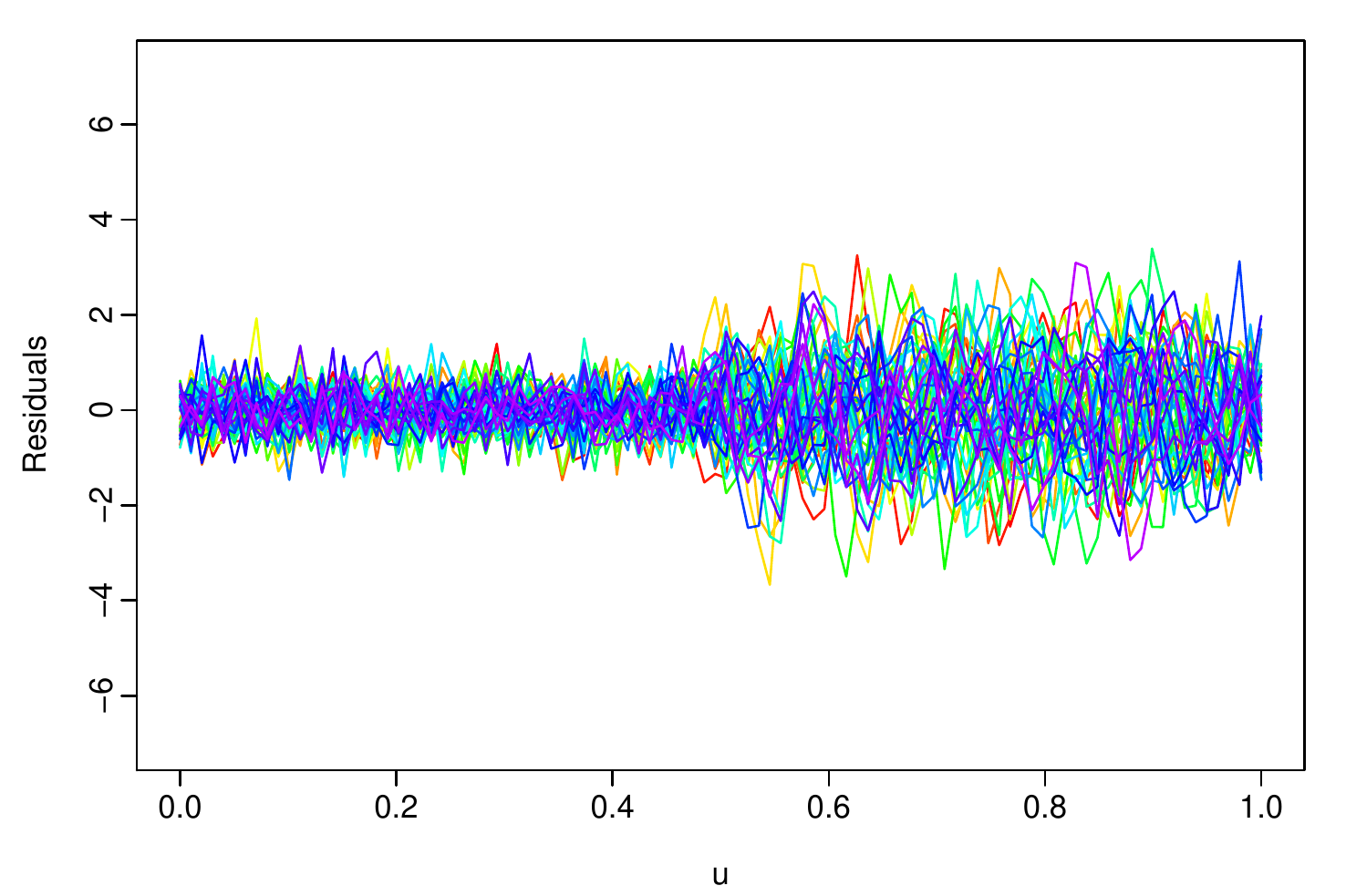}}
\caption{A simulated sample of functional data with changing basis functions.}\label{fig:9}
\end{figure}

\subsection{Functional data with step jumps in the mean level}\label{se:2.3}

We provide another example of functional data with step jumps to motivate our proposed model. Suppose we observed a sample of the raw functional data, as shown in Figure~\ref{fig:1a}. It can be seen that there is a jump at around $u = 0.5$. The jump applies to all the sample data, so this sudden shift is at the mean level. We will explain how the data are generated in Section~\ref{se:6.6}. The residuals after smoothing are presented in Figure~\ref{fig:1b}. The large residuals around the jump clarify that without measures to deal with the step jumps, smoothing itself is not enough to model these kinds of data. We show in Section~\ref{se:6.6} that the proposed model applied to the same data generates smaller residuals and has less flexibility. This is indeed one of the main goals because of model selection.

\begin{figure}[!htbp]
\center
\subcaptionbox{Raw data\label{fig:1a}}
{\includegraphics[width = 8.65cm]{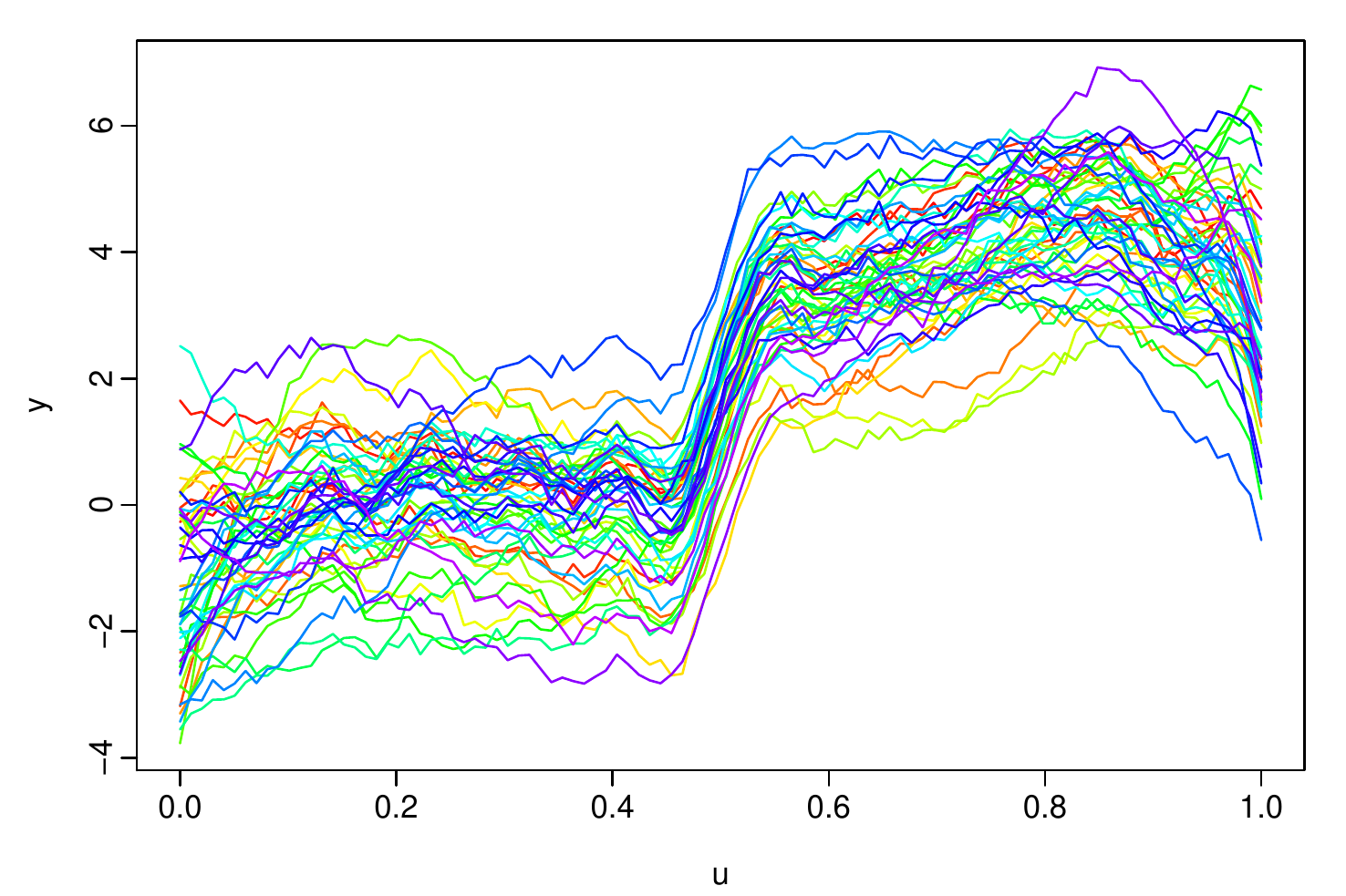}}
\subcaptionbox{Residuals\label{fig:1b}}
{\includegraphics[width = 8.65cm]{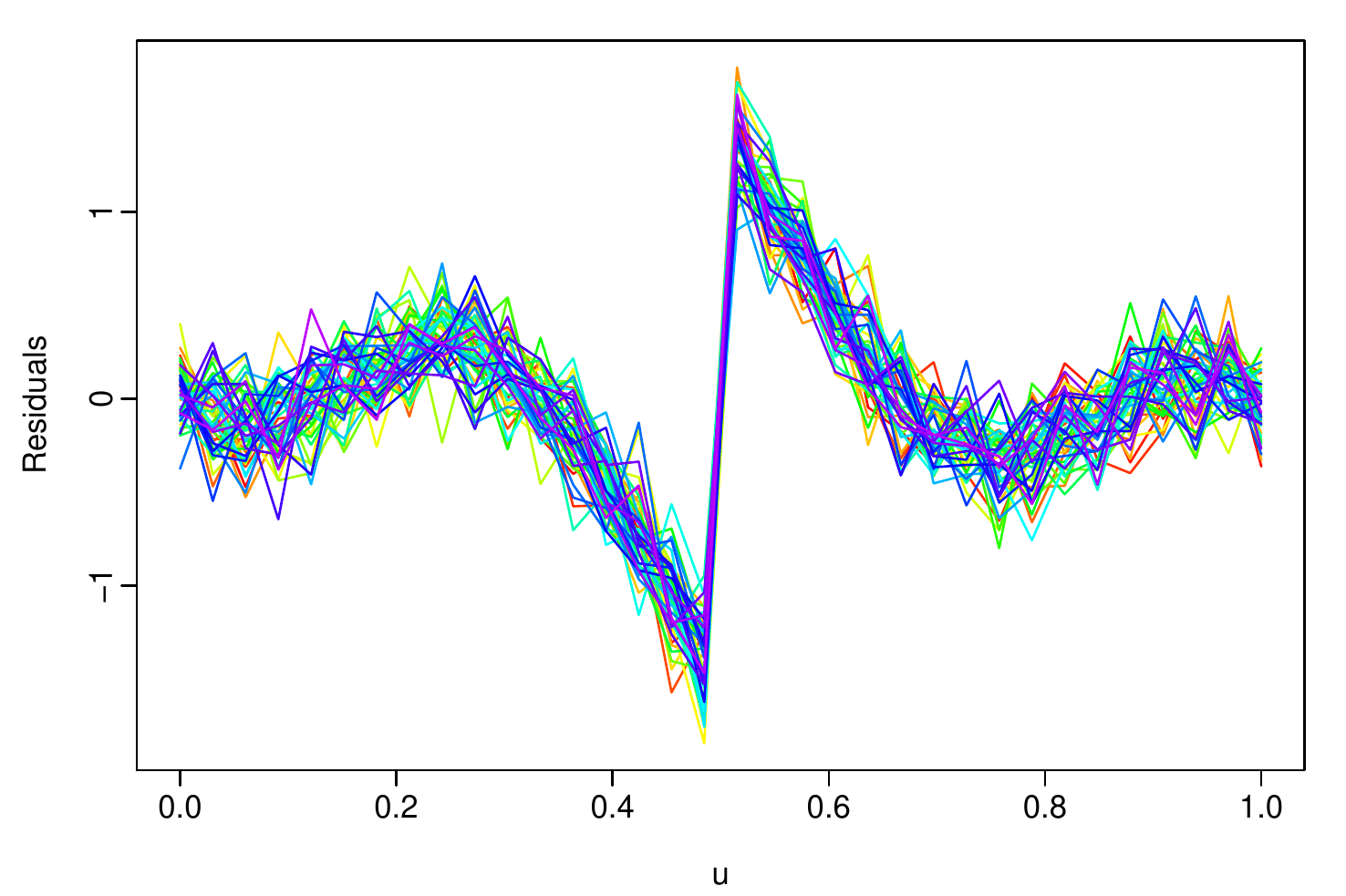}}
\caption{A simulated sample of functional data with step jump.}\label{fig:1}
\end{figure}

\section{Model specification and estimation}\label{se:3}

In this section, we formally state the proposed model in Section~\ref{se:3.1} and provide the estimation method in Section~\ref{se:3.2}. We first show how the smoothing coefficient $\bm{c}_i$ and the latent factors $\bm{f}_i$ are estimated separately and then introduce an iterative approach to simultaneously find these estimates. 

\subsection{Model statement}\label{se:3.1}

We consider a sample of functional data $\mathcal{X}_i(u)$, which takes values in the space $H:= L^2(\mathcal{I})$ of real-valued square integrable functions on $\mathcal{I}$. The space $H$ is a Hilbert space, equipped with the inner product $\langle x, y\rangle := \int x(u)y(u) du$. The function norm is defined as $ \|x\| := \langle x, x \rangle^{1/2}. $ The functional nature of $\mathcal{X}_i(u)$ allows us to represent it as a linear expansion of a set of $K$ smooth basis functions.
\begin{equation*}
\mathcal{X}_i(u) = \sum_{k=1}^K c_{ik} \phi_k(u), \quad u \in \mathcal{I},
\end{equation*}
where $\phi_k(u)$ is a set of common basis functions and $c_{ik}$ is the $k$th coefficient for the $i$th curve. Therefore, we can express the full model as
\begin{align*}
Y_{ij} &= \sum_{k=1}^K c_{ik}\phi_k(u_j) + \eta_{ij},\\
\eta_{ij} &= \bm{a}_j^\top\bm{f}_i  + \epsilon_{ij}, \quad i = 1,\dots, n, \  j = 1,\dots, p,
\end{align*}
where $\bm{f}_i\in \mathbb{R}^r$ are the unobserved common factors, $\bm{a}_j\in \mathbb{R}^r$ are the unobserved factor loadings and $r$ is the number of factors. We call this model the FASM. For the model to be identifiable, we require the following condition.

\begin{condition}\label{id}
We require
\begin{enumerate}
\item[(i)] $\mathcal{X}_i(u_j)$ is independent of $\eta_{ij}$ for $i=1,\dots, n,\ j=1,\dots, p$, and 
\item[(ii)]
$\frac{1}{p}\sum_{j=1}^p \bm{a}_j\bm{a}_j^\top \overset{p}{\to} \Sigma_{\bm{a}} > 0$ for some $r\times r$ matrix $\Sigma_{\bm{a}}$, as $p\rightarrow \infty$;

$ \frac{1}{n}\sum_{i=1}^n\bm{f}_i \bm{f}_i^\top \overset{p}{\to} \Sigma_{\bm{f}} > 0$ for some $r\times r$ matrix $\Sigma_{\bm{f}}$, as $n\rightarrow \infty$.
\end{enumerate}
\end{condition}

The first part of the identification condition ensures the signal function component and the factor model component are independent. The second part ensures the existence of $r$ factors, each of which makes a non-trivial contribution to the variance of $\eta_{ij}$, which in turn guarantees the identifiability between the factors and the error term $\epsilon_{ij}$.

We treat the basis functions $\phi_k(u)$ as known, and the number $K$ fixed. This is, of course, a simplification to accommodate for the theoretical proofs. In real data analysis, there are various choices for the basis functions, and the decision can be quite subjective. For example, Fourier bases are preferred for periodic data, while spline basis systems are most commonly used for non-periodic data. Other bases include wavelet, polynomial, and some ad-hoc basis functions. 


\subsection{Estimation}\label{se:3.2}

We can write the model for the $i$th object as
\begin{equation}\label{eq:14}
\bm{Y}_i = \bm{\Phi}\bm{c}_i + \bm{A}\bm{f}_i + \bm{\epsilon}_i,\quad i=1,\dots, n
\end{equation}
where 
\begin{align*}
\bm{Y}_i = \begin{bmatrix}
Y_{i1}\\ \vdots \\ Y_{ip}
\end{bmatrix},\quad
\bm{c}_i = \begin{bmatrix}
c_{i1}\\\vdots\\c_{iK}
\end{bmatrix},\quad
\bm{\Phi}  = \begin{bmatrix}
\phi_1(u_1) & \dots & \phi_K(u_1)\\\vdots && \vdots\\ \phi_1(u_p)&\dots& \phi_K(u_p)
\end{bmatrix}, \quad
\bm{A} = \begin{bmatrix}
\bm{a}_1^\top\\ \vdots\\ \bm{a}_p^\top 
\end{bmatrix}, \quad
\bm{\epsilon} = \begin{bmatrix}
\epsilon_{i1}\\ \vdots\\ \epsilon_{ip} 
\end{bmatrix}.
\end{align*}
Combining all the objects, we have in matrix form
\begin{equation}\label{eq:16}
\bm{Y} = \bm{\Phi}\bm{C} + \bm{A} \bm{F}^\top + \bm{E},
\end{equation}
where $\bm{Y}$ is $p\times n$ and $\bm{C} = (\bm{c}_1,\dots, \bm{c}_n)$ is a $K \times n$ matrix containing all the smoothing coefficients. The matrix $\bm{F} = (\bm{f}_1, \dots, \bm{f}_n)^\top$ is $n\times r$ and $\bm{E} = (\bm{\epsilon}_1, \dots, \bm{\epsilon}_n)$ is $p\times n.$ Since $\bm{\Phi}$ is assumed to be known, we illustrate how the parameters $\bm{C}, \bm{A}$ and $\bm{f}$ are estimated in the following.

For the latent factor estimation, there is an identification problem such that $\bm{A}\bm{F}^\top = \bm{A}\bm{U}\bm{U}^{-1}\bm{F}^\top$ for any $r\times r$ invertible matrix $\bm{U}$. Thus we impose the normalization restriction on the matrices $\bm{A}$ and $\bm{F}$
\begin{equation}\label{eq:50}
\bm{A}^\top \bm{A}/p = \bm{I}_r, \quad \text{and} \ \bm{F}^\top\bm{F}\ \text{is a diagonal matrix}.
\end{equation}

We propose to implement penalized least squares, where the objective function is defined as
\begin{equation*}
\text{SSR}(\bm{c}_i, \bm{A}, \bm{f}) = \sum_{i=1}^n \left[(\bm{Y}_i - \bm{\Phi}\bm{c}_i - \bm{A}\bm{f}_i)^\top  (\bm{Y}_i - \bm{\Phi}\bm{c}_i - \bm{A}\bm{f}_i ) + \alpha\times \text{PEN}_2(\mathcal{X}_i) \right],
\end{equation*}
where $\text{PEN}(\mathcal{X}_i)$ is a penalty term used for regularization, and $\alpha$ is the tuning parameter controlling the degree of regularization. The same $\alpha$ is used for all the functional observations $i$. This is a simplified case, where we assume a similar degree of smoothness for all curves. The tuning parameter can be chosen by cross-validation or information criteria. We intend to penalize the ``roughness" of the function term. To quantify the notion of ``roughness" in a function, we use the square of the second derivative. Define the measure of roughness as 
\begin{equation*}
\text{PEN}_2(\mathcal{X}_i) = \int_{\mathcal{I}} \left[D^2\mathcal{X}_i(s)\right]^2 ds,
\end{equation*}
where $D^2\mathcal{X}_i$ denotes taking the second derivative of the function $\mathcal{X}_i$, the larger the tuning parameter $\alpha$, the smoother the estimated functions we obtain. Further, we denote
\begin{equation}\label{eq:2}
\bm{\Phi}(u) = \left[\phi_1(u), \dots,\phi_K(u)\right]^\top.
\end{equation}
Then
\begin{equation*}
\mathcal{X}_i(u) = \bm{c}_i^\top\bm{\Phi}(u).
\end{equation*}
We can re-express the roughness penalty $\text{PEN}_2(\mathcal{X}_i)$ in matrix form as the following:
\begin{align*}
\text{PEN}_2(\mathcal{X}_i) &= \int_{\mathcal{I}} \left[D^2\mathcal{X}_i(s)\right]^2 ds\\
&= \int_{\mathcal{I}} \left[D^2\bm{c}_i^\top\bm{\Phi}(s)\right]^2 ds\\
&= \int_{\mathcal{I}} \bm{c}_i^\top D^2\bm{\Phi}(s)D^2\bm{\Phi}^\top (s) \bm{c}_i ds\\
&= \bm{c}_i^\top \left[\int_{\mathcal{I}} D^2\bm{\Phi}(s)D^2\bm{\Phi}^\top (s)ds\right] \bm{c}_i\\
&= \bm{c}_i^\top \bm{R} \bm{c}_i, \qquad i = 1,\dots, n
\end{align*}
where 
\begin{equation}\label{eq:55}
\bm{R} \equiv \int_{\mathcal{I}} D^2\bm{\Phi}(s)D^2\bm{\Phi}^\top (s)ds.
\end{equation}
Thee matrix $\bm{R}$ is the same for all subjects and the penalty term $\text{PEN}_2(\mathcal{X}_i)$ differs for each subject only by the coefficient $\bm{c}_i$. 

\begin{remark}
The number of smoothing coefficient $\bm{c}_i$ needed increases as the sample size increases. The inclusion of a penalty term not only penalizes the "roughness" of the smoothed function but also mitigates the effect of increasing number of parameters to control the model flexibility.
\end{remark}

Thus, the objective function can be written as 
\begin{equation*}
\text{SSR}(\bm{c}_i, \bm{A}, \bm{f}) = \sum_{i=1}^n \left[(\bm{Y}_i - \bm{\Phi}\bm{c}_i - \bm{A}\bm{f}_i)^\top  (\bm{Y}_i - \bm{\Phi}\bm{c}_i - \bm{A}\bm{f}_i ) + \alpha \bm{c}_i^\top \bm{R}\bm{c}_i \right],
\end{equation*}
subject to the constraint $\bm{A}^\top\bm{A}/p = \bm{I}_r$. 

We aim to estimate the smoothing coefficient of $\bm{c}_i$. We left multiply a matrix to each term in~\eqref{eq:14} to project the factor model term onto a zero matrix. Define the projection matrix 
\begin{equation}\label{eq:51}
\bm{M}_{\bm{A}} \equiv \bm{I}_p -\bm{A}(\bm{A}^\top \bm{A})^{-1}\bm{A}^\top = \bm{I}_p -\bm{A}\bm{A}^\top/p.
\end{equation}
Then
\begin{equation*}
\bm{M}_{\bm{A}}\bm{A}\bm{f}_i = \left(\bm{I}_p -\bm{A}\bm{A}^\top/p\right)\bm{A}\bm{f}_i = \left(\bm{A} - \bm{AA}^\top\bm{A}/p\right)\bm{f}_i = \bm{0}.
\end{equation*}
So we estimate $\bm{c}_i$ from the projected equation
\begin{equation*}
\bm{M}_{\bm{A}}\bm{Y}_i = \bm{M}_{\bm{A}}\bm{\Phi}\bm{c}_i + \bm{M}_{\bm{A}}\bm{\epsilon}_i.
\end{equation*}

The projected objective function becomes
\begin{equation}\label{eq:22}
\text{SSR}(\bm{c}_i, \bm{A}) = \sum_{i=1}^n \left[(\bm{M}_{\bm{A}}\bm{Y}_i - \bm{M}_{\bm{A}}\bm{\Phi}\bm{c}_i)^\top  (\bm{M}_{\bm{A}}\bm{Y}_i - \bm{M}_{\bm{A}}\bm{\Phi}\bm{c}_i ) + \alpha \bm{c}_i^\top \bm{R}\bm{c}_i \right].
\end{equation}
By taking the derivative of $\text{SSR}(\bm{c}_i, \bm{A})$ with respective to each $\bm{c}_i$, we can solve for the estimator $\widehat{\bm{c}}_i$.
\begin{equation*}
\frac{\partial \text{SSR}(\bm{c}_i\bm{A})}{\partial \bm{c}_i} = (\bm{M}_{\bm{A}}\bm{Y}_i - \bm{M}_{\bm{A}}\bm{\Phi}\bm{c}_i)^\top (\bm{M}_{\bm{A}}\bm{\Phi}) + 2\alpha \bm{c}_i^\top \bm{R}.
\end{equation*}
Setting the derivative to zero and rearranging the terms, we have
\begin{equation*}
\left(\bm{\Phi}^\top \bm{M}_{\bm{A}}^\top\bm{M}_{\bm{A}}\bm{\Phi} + \alpha \bm{R} \right)\bm{c}_i = \bm{\Phi}^\top \bm{M}_{\bm{A}}^\top\bm{M}_{\bm{A}}\bm{Y}_i.
\end{equation*}
Using the fact that 
\begin{equation*}
\bm{M}_{\bm{A}}^\top \bm{M}_{\bm{A}} = \left(\bm{I}_p -\bm{A}\bm{A}^\top/p\right)^\top \left(\bm{I}_p -\bm{A}\bm{A}^\top/p\right) = \bm{M}_{\bm{A}},
\end{equation*}
we obtain the least squares estimator for $\bm{c}_i$ given $\bm{A}$
\begin{equation*}
\widehat{\bm{c}}_i = \left(\bm{\Phi}^\top \bm{M}_{\bm{A}}\bm{\Phi} + \alpha \bm{R} \right)^{-1} \bm{\Phi}^\top \bm{M}_{\bm{A}}\bm{Y}_i.
\end{equation*}

Next, to estimate $\bm{A}$ and $\bm{f}_i$, we focus on the factor model
\begin{equation*}
\bm{\eta}_i = \bm{A}\bm{f}_i + \bm{\epsilon}_i,
\end{equation*}
and in matrix form 
\begin{equation*}
\bm{Z} = \bm{A}\bm{F}^\top + \bm{E},
\end{equation*}
where $\bm{Z} = (\bm{\eta}_1, \dots, \bm{\eta}_n)$. In high dimensions, the unknown factors and loadings are typically estimated by least squares (i.e., the principal component analysis; see, e.g., \cite{FFL08, O12}. The least squares objective function is
\begin{equation}\label{eq:49}
\text{tr}\left[(\bm{Z} - \bm{AF}^\top)(\bm{Z} - \bm{AF}^\top)^\top\right].
\end{equation}
Minimizing the objective function with respect to $\bm{F}^\top$, we have $\bm{F}^\top = (\bm{A}^\top\bm{A})^{-1}\bm{A}^\top\bm{Z} = \bm{A}^\top\bm{Z}/p$ using~\eqref{eq:50}. Substituting in~\eqref{eq:49}, we obtain the objective function
\begin{align*}
&\text{tr}\left[(\bm{Z} - \bm{AA}^\top\bm{Z}/p)(\bm{Z} - \bm{AA}^\top\bm{Z}/p)^\top\right] \\
&= 
\text{tr}\left(\bm{ZZ}^\top - \bm{ZZ}^\top\bm{AA}^\top/p - \bm{ZZ}^\top\bm{AA}^\top/p + \bm{AA}^\top\bm{ZZ}^\top\bm{AA}^\top/p^2\right)\\
&= \text{tr}(\bm{ZZ}^\top) - \text{tr}(\bm{A}^\top\bm{ZZ}^\top\bm{A})/p,
\end{align*}
where the last equality uses~\eqref{eq:50} and that $\text{tr}(\bm{ZZ}^\top\bm{AA}^\top) = \text{tr}(\bm{A}^\top\bm{ZZ}^\top\bm{A})$. Thus, minimizing the objective function is equivalent to maximizing $\text{tr}(\bm{A}^\top\bm{ZZ}^\top\bm{A})/p$. The estimator for $\bm{A}$ is obtained by finding the first $r$ eigenvectors corresponding to the $r$ largest eigenvalues of the matrix $\bm{ZZ}^\top$ in descending order, where
\begin{equation*}
\bm{Z}\bm{Z}^\top = \sum_{i=1}^n\bm{\eta}_i\bm{\eta}_i^\top = \sum_{i=1}^n(\bm{Y}_i-\bm{\Phi}\bm{c}_i)(\bm{Y}_i-\bm{\Phi}\bm{c}_i)^\top.
\end{equation*} 
Therefore, knowing $\bm{c}_i$, we solve for $\widehat{\bm{A}}$ using
\begin{equation}\label{eq:9}
\left[\frac{1}{np}\sum_{i=1}^n(\bm{Y}_i-\bm{\Phi} \bm{c}_i)(\bm{Y}_i-\bm{\Phi} \bm{c}_i)^\top\right]\widehat{\bm{A}} = \widehat{\bm{A}}\bm{V}_{np},
\end{equation}
where $\bm{V}_{np}$ is a $r\times r$ diagonal matrix containing the $r$ eigenvalues of the matrix in the square brackets in decreasing order. The coefficient $\frac{1}{np}$ is used for scaling.

\begin{remark}
The number of factors $r$ is assumed to be known in this paper. In practice, $r$ is selected based on some criteria regarding the eigenvalues. There have been many studies on this topic. Examples include \cite{BN02}, where two model selection criteria functions were proposed; and \cite{O10}, where the number of factors was estimated using differenced eigenvalues; and \cite{AH13}, where this number was selected based on the ratio of two adjacent eigenvalues.
\end{remark}

It can be seen that $\bm{A}$ is needed to find $\widehat{\bm{c}}_i$, and in turn $\bm{c}_i$ is needed to find $\widehat{\bm{A}}$. The final estimator $(\widehat{\bm{c}}_i, \widehat{\bm{A}})$ is the solution of the set of equations
\begin{align}\label{eq:1}
\begin{cases}
\widehat{\bm{c}_i} = \left(\bm{\Phi}^\top \bm{M}_{\widehat{\bm{A}}}\bm{\Phi} + \alpha \bm{R} \right)^{-1} \bm{\Phi}^\top \bm{M}_{\widehat{\bm{A}}}\bm{Y}_i, \qquad i = 1,\dots, n\\
\left[\frac{1}{np}\sum_{i=1}^n\left(\bm{Y}_i-\bm{\Phi} \widehat{\bm{c}}_i\right)\left(\bm{Y}_i-\bm{\Phi} \widehat{\bm{c}}_i\right)^\top\right]\widehat{\bm{A}} = \widehat{\bm{A}}\bm{V}_{np}.\end{cases}
\end{align}
Since there is no closed-form expression of $\widehat{\bm{A}}$ and $\widehat{\bm{c}}_i$, we propose using numerical iterations to find the estimates. The details of these iterations are as follows:

\hfill

\begin{algorithm}[H]
\SetAlgoLined
\begin{enumerate}
\item
Denote the initial value as $\widehat{\bm{A}}^{(0)}$. Using~\eqref{eq:1}, we obtain $\widehat{\bm{c}}_i^{(0)} =  \left(\bm{\Phi}^\top \bm{M}_{{\widehat{\bm{A}}}^{(0)}}\bm{\Phi} + \alpha \bm{R} \right)^{-1} \bm{\Phi}^\top \bm{M}_{{\widehat{\bm{A}}}^{(0)}}\bm{Y}_i$.
\item
With $\widehat{\bm{c}}_i^{(t)}$, we substitute into the second equation of~\eqref{eq:1}, to obtain $\widehat{\bm{A}}^{(t+1)} = (\widehat{\bm{a}}_1^{(t+1)}, \dots, \widehat{\bm{a}}_r^{(t+1)})^\top$, where $\widehat{\bm{a}}_j^{(t+1)}$ is the eigenvector of the matrix $\frac{1}{np}\sum_{i=1}^n(\bm{Y}_i-\bm{\Phi} \widehat{\bm{c}}_i^{(t+1)})(\bm{Y}_i-\bm{\Phi} \widehat{\bm{c}}_i^{(t+1)})^\top$ corresponding to its $j$th largest eigenvalue.
\item
With $\widehat{\bm{A}}^{(t+1)}$, we obtain $\widehat{\bm{c}}_i^{(t+1)} =  \left(\bm{\Phi}^\top \bm{M}_{{\widehat{\bm{A}}}^{(t+1)}}\bm{\Phi} + \alpha \bm{R}\right)^{-1} \bm{\Phi}^\top \bm{M}_{{\widehat{\bm{A}}}^{(t+1)}}\bm{Y}_i$ using~\eqref{eq:1}
\item
We then repeat steps 2 and 3 until $\|\widehat{\bm{c}}_i^{(t+1)} - \widehat{\bm{c}}_i^{(t)}\| < \delta$, where $\delta$ is a small positive constant.
\end{enumerate}
 \caption{Iterations for estimating FASM}\label{ag}
\end{algorithm}

\begin{remark}
In this paper, we use $\widehat{\bm{A}}^{(0)} = \bm{0}$. This means we start by ignoring the factor model component so the initial value for the smoothing coefficient $\widehat{\bm{c}}_i^{(0)} =  \left(\bm{\Phi}^\top \bm{\Phi} + \alpha \bm{R} \right)^{-1} \bm{\Phi}^\top\bm{Y}_i$, which is simply the ridge estimator. The convergence of Newton's numeric iteration requires the convergence of this estimator, which in turn requires the factor model component $\eta_{ij}$ to have an expectation of zero. The stopping criterion only focuses on $\widehat{\bm{c}}_i$ because we are interested in estimating $\eta_{ij}$ as a whole.
\end{remark}

\begin{remark}
Common methods for selecting the shrinkage parameter $\alpha$ include the Akaike’s Information Criterion \citep[AIC][]{A74}, the Bayesian Information Criterion \citep[BIC][]{S78}, and cross-validation. In this paper, we use the mean generalized cross-validation (mGCV) method \citep{GHW79}. We define, at step $t$,
\begin{equation}\label{eq:53}
\text{mGCV}^{(t)} = \frac{1}{n}\sum_{i=1}^n \frac{\text{pSSE}_i^{(t)}}{[p-df^{(t)}(\alpha)]^2},
\end{equation} 
where $\text{SSE}_i^{(t)}$ is the sum of squares residual for the $i$th object at step $t$ and $df^{(t)}(\alpha)$ is the equivalent degrees of freedom measure, which can be calculated as
\begin{equation}\label{eq:8}
\textnormal{df}^{(t)}(\alpha) = \textnormal{trace}\left[\bm{\Phi}\left(\bm{\Phi}^\top \bm{M}_{{\widehat{\bm{A}}}^{(t)}}\bm{\Phi} + \alpha \bm{R}\right)^{-1}\bm{\Phi}^\top \bm{M}_{{\widehat{\bm{A}}}^{(t)}}\right].
\end{equation}
At each step of the iteration, the tuning parameter $\alpha$ is chosen by minimizing the $mGCV^{(t)}$.
\end{remark}

\begin{remark}
Algorithm~\ref{ag} is an iteration procedure in which ridge regression and PCA are iterated. 
The convergence of this iterative algorithm is studied in \cite{JY20}. For instance, Theorem 2 of \cite{JY20} provides some sufficient conditions under which the recursive algorithm converges to the true value or some other values. In particular, when the regressors are independent of the common factors, or the factors involved in regressors are weaker than the common factors, this algorithm will converge to the true parameter.  
\end{remark}

After we obtain the estimates $\widehat{\bm{A}}$ and $\widehat{\bm{c}}_i$,  the estimated coefficient matrix $\widehat{\bm{C}}$ is constructed as $
\widehat{\bm{C}} = \left(\widehat{\bm{c}}_1, \dots, \widehat{\bm{c}}_n\right)$, and
the estimated factor can be obtained by 
\begin{equation*}
\widehat{\bm{F}}^\top = \widehat{\bm{A}}^\top (\bm{Y}-\bm{\Phi}\widehat{\bm{C}}).
\end{equation*}
Finally, the functional component can be estimated by $\widehat{\mathcal{X}}_i(u) = \widehat{\bm{c}}_i^\top\bm{\Phi}(u)$, where $\bm{\Phi}(u) $ is defined in~\eqref{eq:2}.

\begin{remark}
Although we have imposed the constraint in \eqref{eq:50} and the identification condition~\ref{id}, $\bm{A}$ and $\bm{f}_i$ are not uniquely determined, since the model~\eqref{eq:14} is unchanged if we replace $\bm{A}$ and $\bm{f}_i$ with $\bm{AU}$ and $\bm{U}^\top\bm{f}_i$ for any orthogonal $r\times r$ matrix $\bm{U}$. However, the linear space spanned by the columns of $\bm{A}$ is uniquely defined. Although we are not able to estimate $\bm{A}$, we can still estimate a rotation of $\bm{A}$, which spans the same space as $\bm{A}$ does. The matrix $\bm{M}_{\bm{A}}$ defined in~\eqref{eq:51} is a projecting matrix onto the orthogonal complement of the linear space spanned by the columns of $\bm{A}$. It is shown in the next section that the estimator $\bm{M}_{\widehat{\bm{A}}}$ for $\bm{M}_{\bm{A}}$ is consistent.
\end{remark}

\section{Asymptotic theory}\label{se:4}

In this section, we study the asymptotic properties of the coefficient estimator $\widehat{\bm{c}}_i$ with growing sample size and dimension. We state the assumptions in Section~\ref{se:4.1} and provide the asymptotic results of $\widehat{\bm{c}}_i$ in Section~\ref{se:4.2}. 

\subsection{Assumptions}\label{se:4.1}
We use $(\bm{c}_i^0, \bm{A}^0)$ to denote the true parameters. In this paper, the norm of a vector or matrix $\bm{U}$ is defined as the Frobenius norm; that is,  $\|\bm{U}\| = [\text{tr}(\bm{U}^\top \bm{U})]^{1/2}$. We introduce the matrix
\begin{equation}\label{eq:15}
\bm{D}_i(\bm{A}) \equiv \frac{1}{p}\bm{\Phi}^\top \bm{M}_{\bm{A}}\bm{\Phi} - \frac{1}{p}\bm{\Phi}^\top \bm{M}_{\bm{A}}\bm{\Phi}\bm{f}_i^\top \left(\frac{\bm{F}^\top\bm{F}}{n}\right)^\top\bm{f}_i.
\end{equation}
This matrix plays an important role in this article. It is used in the proof of the consistency of $\widehat{\bm{c}}_i$, as can be found in Appendix A. The identifying condition for $\bm{c}^0_i$ is that $\bm{D}_i(\bm{A})$ is positive definite for all $i$, which is stated in Assumption~\ref{as:3}.

First, we state the assumptions.

\begin{assumption}\label{as:1}
\begin{equation*}
\sup_u |\phi_k(u)| = O(1), \quad k = 1,\dots, K.
\end{equation*}
\end{assumption}
The above assumption declares that the basis functions are bounded in the norm. This is quite natural as some of the most commonly used basis functions are bounded; for instance, the Fourier basis, B-spline basis, and wavelet basis functions \citep{RS05}.

\begin{assumption}\label{as:2}
\begin{equation*}
\|\bm{c}_i^0\| = O(1), \ \text{for all }i.
\end{equation*}
\end{assumption}
Above, we assume the smoothing coefficients $\bm{c}^0_i$ are bounded uniformly for all $i$. This assumption is introduced to ensure the uniform consistency of the estimated coefficients of $\widehat{\bm{c}}^0_i$.
\begin{assumption}\label{as:3}
Let $\mathcal{A} = \{\bm{A}: \bm{A}^\top\bm{A}/p = \bm{I}, \text{ and } \bm{A} \text{ independent of }\bm{\Phi}\}$. We assume
\begin{equation*}
\inf_{\bm{A}\in \mathcal{A}} \bm{D}_i(\bm{A}) > 0.
\end{equation*}
\end{assumption}

This assumption is the identification condition for $\bm{c}^0_i$. The usual assumption for the least-squares estimator only contains the first term on the right-hand side of~\eqref{eq:15}. The second term on the right-hand side of~\eqref{eq:15} arises because of the unobservable matrices $\bm{F}$ and $\bm{A}$.

\begin{assumption}\label{as:4}
For some constant $M>0$,
$\mathbb{E}\|\bm{a}^0_j\|^4 \le M$, $j = 1,\dots, p$ and $\mathbb{E}\|\bm{f}_i\|^4 \le M$.
\end{assumption}

\begin{assumption}\label{as:5}
For some constant $M>0$, the error terms $\epsilon_{ji}, j = 1,\dots, p, i = 1,\dots, n$ are i.i.d. in both directions, with 
$\mathbb{E}(\epsilon_{ji}) = 0$, $\text{Var}(\epsilon_{ji}) = \sigma^2$, and $\mathbb{E}|\epsilon_{ji}|^8 \le M$.
\end{assumption}

\begin{assumption}\label{as:6}
$\epsilon_{ji}$ is independent of $\phi_{s}$, $\bm{f}_t$, and $\bm{a}^0_s$ for all $j,i,s,t$.
\end{assumption}
We require that the errors are independent in themselves and also of the functional term $\phi(u)$ and factor model terms $\bm{f}_i$ and $\bm{a}^0_j$. To not mask the main contribution of our method, we use a simplified setting on the error terms to exclude endogeneity. Nevertheless, with simple but tedious modifications, Assumption~\ref{as:5} can be relaxed, and our model can be extended to more complicated settings where correlations between the error term and the factor model term are allowed. 

\begin{assumption}\label{as:7}
The tuning parameter $\alpha$ satisfies $\alpha = o(p)$.
\end{assumption}
This is conventionally assumed in ridge regression \citep[see, e.g.,][]{KF00} and assures that the estimator's asymptotic bias is zero.

Before stating the next assumption, we introduce some notations. Let $\bm{\omega}_j, \ j =  1, \dots, p$ denote the $j$th column of the $K\times p$ matrix $\bm{\Phi}^\top\bm{M}_{{\bm{A}}^0}$, and let $\psi_{ik}$ denote the $(i,k)$th element of the matrix $\bm{M}_{\bm{F}}$, where
\begin{equation}\label{eq:54}
\bm{M}_{\bm{F}} \equiv \bm{I}_n - \bm{F}\left(\bm{F}^\top\bm{F}\right)\bm{F}^\top.
\end{equation}

Then, for any vector  $\bm{b} = (b_1, \dots, b_n)^\top$, we can write
\begin{equation}\label{eq:44}
\frac{1}{\sqrt{np}}\bm{\Phi}^\top \bm{M}_{\bm{A}^0}\bm{E}\bm{M}_{\bm{F}} \bm{b} = \frac{1}{\sqrt{np}}\sum_i^n\sum_j^p\bm{\omega}_j\epsilon_{ji}\sum_k^n\psi_{ik}b_k \equiv \frac{1}{\sqrt{np}}\sum_i^n\sum_j^p\bm{x}_{ij}.
\end{equation}
In~\eqref{eq:44}, for notation simplicity, we define $\bm{x}_{ij}$ as $\bm{\omega}_j\epsilon_{ji}\sum_k^n\psi_{ik}b_k$. The matrix $\bm{\Phi}^\top \bm{M}_{\bm{A}^0}\bm{E}\bm{M}_{\bm{F}}$ is of interest because it is the main component that contributes to the asymptotic distribution of the estimators, as shall be seen in the next section. 

Let
\begin{equation}\label{eq:48}
\bm{L}_{np} \equiv \frac{\sigma^2}{np}\sum_i^n\sum_j^p \bm{\omega}_j^\top \bm{\omega}_j\left(\sum_k^n \psi_{ik}b_k\right)^2.
\end{equation}
We make the following assumption.

\begin{assumption}\label{as:8}
We assume there exists a $K\times K$ matrix $\bm{L}$ such that 
\begin{equation}\label{eq:45}
\bm{L} \equiv \lim_{n,p\rightarrow \infty}\bm{L}_{np},
\end{equation}
where $\bm{L}_{np}$ is defined in~\eqref{eq:48}. Let $\nu^2$ be the smallest eigenvalue of the matrix $\bm{L}$ defined in~\eqref{eq:45}, then assume that $\nu^2 >0$, and that, for all $\varepsilon >0$,
\begin{equation}\label{eq:56}
\lim _{n, p \rightarrow \infty} \frac{1}{np \nu^{2}}\sum_{i=1}^{n}\sum_{j=1}^p \mathbb{E}\left[\left\|\bm{x}_{ij}\right\|^{2} \mathbb{1}\left(\left\|\bm{x}_{ij}\right\|^{2} \geq \varepsilon np \nu^{2}\right)\right]=0.
\end{equation}
\end{assumption}
This assumption is the multivariate Lindeberg condition, which is needed in constructing the central limit theorem in the next section. This is by no means a strong condition; for instance, when the factor model component is ignored, $\bm{\omega}_j$ is simply $\bm{\phi}_j$, and $\bm{x}_{ij} = \bm{\phi}_jb_i\epsilon_{ji}$. Since we assume $\bm{\phi}_j = O(1)$ in Assumption~\ref{as:1}, the Lindeberg condition in \eqref{eq:56} is met.

\subsection{Asymptotic properties}\label{se:4.2}
As we have mentioned previously, the identification problem of the latent factor implies that we actually use the estimator $\widehat{\bm{A}}$ to estimate a rotation of $\bm{A}^0$. Based on the objective function~\eqref{eq:22} in Section~\ref{se:3}, we use a center-adjusted objective function, defined as below.
\begin{equation}\label{eq:23}
S_{np}(\bm{c}_i, \bm{A}) = 
\frac{1}{np}\sum_{i=1}^n \left[(\bm{Y}_i - \bm{\Phi}\bm{c}_i)^\top  \bm{M}_{\bm{A}}(\bm{Y}_i - \bm{\Phi}\bm{c}_i ) + \alpha \bm{c}_i^\top \bm{R}\bm{c}_i \right] - \frac{1}{np}\sum_{i=1}^n\epsilon_i^\top \bm{M}_{\bm{A}^0}\epsilon_i,
\end{equation}
where $\bm{M}_{\bm{A}}$ is defined in \eqref{eq:51}, satisfying $\bm{A}^\top \bm{A}/p = \bm{I}_r.$ The second term on the right-hand side of~\eqref{eq:23} does not contain the unknown $\bm{A}$ and $\bm{c}_i$, so the inclusion of this term does not affect the optimization result. This term is only used for center adjusting, so that the resulting objective function has an expectation zero. We estimate $\bm{c}_i^0$ and $\bm{A}^0$ by
\begin{equation}\label{eq:36}
(\widehat{\bm{c}}_i ,\widehat{\bm{A}}) = \argmin_{\bm{c}_i, \bm{A}} S_{np} (\bm{c}_i, \bm{A}).
\end{equation}

In the following, we establish the asymptotic properties for the estimated coefficient matrix $\widehat{\bm{C}}$. In Theorem~\ref{th:1}, the consistency of the matrix $\widehat{\bm{C}}$ is proved. In Theorem~\ref{th:2}, we show the rate of convergence of $\widehat{\bm{C}}$. Theorem~\ref{th:3} provides the asymptotic distribution of $\widehat{\bm{C}}$.

Let $\bm{P}_{\bm{U}} = \bm{U}(\bm{U}^\top \bm{U})^{-1}\bm{U}^\top$ for a matrix $\bm{U}$. 
\begin{theorem}\label{th:1}
Under Assumptions~\ref{as:1} - \ref{as:6}, as $n, p \rightarrow \infty$, we have the following statements
\begin{enumerate}
\item[(i)]
$\frac{1}{\sqrt{n}}\left\|\bm{C}-\widehat{\bm{C}}\right\|  \overset{p}{\to} 0.$
\item[(ii)]
$\left\|\bm{P}_{\widehat{\bm{A}}}-\bm{P}_{\bm{A}^0}\right\|  \overset{p}{\to} 0$.
\end{enumerate}
\end{theorem}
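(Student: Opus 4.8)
\noindent The plan is to establish (i) and (ii) together from a single $M$-estimation argument based on the centered objective~\eqref{eq:23}. The entry point is the basic inequality $S_{np}(\widehat{\bm{c}},\widehat{\bm{A}})\le S_{np}(\bm{c}^0,\bm{A}^0)$. Substituting $\bm{Y}_i=\bm{\Phi}\bm{c}^0_i+\bm{A}^0\bm{f}_i+\bm{\epsilon}_i$ and using $\bm{M}_{\bm{A}^0}\bm{A}^0=\bm{0}$, the quadratic piece of the first sum in~\eqref{eq:23} collapses to $\frac{1}{np}\sum_i\bm{\epsilon}_i^\top\bm{M}_{\bm{A}^0}\bm{\epsilon}_i$, which is exactly the centering term, so $S_{np}(\bm{c}^0,\bm{A}^0)=\frac{\alpha}{np}\sum_{i=1}^n\bm{c}^{0\top}_i\bm{R}\bm{c}^0_i=O(\alpha/p)=o(1)$ by Assumptions~\ref{as:2} and~\ref{as:7} (with $\bm{R}$ and $K$ fixed). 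It then suffices to show $S_{np}(\bm{c},\bm{A})$ cannot be $o_p(1)$ unless both $\bar{D}:=\frac{1}{n}\|\bm{C}-\widehat{\bm{C}}\|^2\overset{p}{\to} 0$ and $\|\bm{P}_{\widehat{\bm{A}}}-\bm{P}_{\bm{A}^0}\|\overset{p}{\to} 0$.

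I would then expand $S_{np}(\bm{c},\bm{A})$ about the truth. Writing $\bm{d}_i:=\bm{c}_i-\bm{c}^0_i$ and using $\bm{M}_{\bm{A}}^2=\bm{M}_{\bm{A}}$ together with $\bm{M}_{\bm{A}}-\bm{M}_{\bm{A}^0}=\bm{P}_{\bm{A}^0}-\bm{P}_{\bm{A}}$, the objective splits into: (a) a signal quadratic form $\frac{1}{np}\sum_i\bm{d}_i^\top\bm{\Phi}^\top\bm{M}_{\bm{A}}\bm{\Phi}\bm{d}_i$; (b) a factor-mismatch term $\frac{1}{np}\sum_i\bm{f}_i^\top(\bm{A}^0)^\top\bm{M}_{\bm{A}}\bm{A}^0\bm{f}_i$; (c) the roughness penalty $\frac{\alpha}{np}\sum_i\bm{c}_i^\top\bm{R}\bm{c}_i\ge0$; (d) cross terms of the form $\frac{1}{np}\sum_i\bm{d}_i^\top\bm{\Phi}^\top\bm{M}_{\bm{A}}\bm{A}^0\bm{f}_i$, $\frac{1}{np}\sum_i\bm{d}_i^\top\bm{\Phi}^\top\bm{M}_{\bm{A}}\bm{\epsilon}_i$ and $\frac{1}{np}\sum_i\bm{f}_i^\top(\bm{A}^0)^\top\bm{M}_{\bm{A}}\bm{\epsilon}_i$; and (e) a pure-noise term $\frac{1}{np}\sum_i\bm{\epsilon}_i^\top(\bm{P}_{\bm{A}^0}-\bm{P}_{\bm{A}})\bm{\epsilon}_i$. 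For (b), the normalization $(\bm{A}^0)^\top\bm{A}^0=p\bm{I}_r$ gives $\frac{1}{p}\|\bm{M}_{\bm{A}}\bm{A}^0\|^2=r-\text{tr}(\bm{P}_{\bm{A}^0}\bm{P}_{\bm{A}})=\frac{1}{2}\|\bm{P}_{\bm{A}}-\bm{P}_{\bm{A}^0}\|^2$, so with $\frac{1}{n}\bm{F}^\top\bm{F}\overset{p}{\to}\Sigma_{\bm{f}}>0$ from Identification Condition~\ref{id}(ii), term (b) is bounded below by $\bigl(\frac{1}{2}\lambda_{\min}(\Sigma_{\bm{f}})+o_p(1)\bigr)\|\bm{P}_{\bm{A}}-\bm{P}_{\bm{A}^0}\|^2$.

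The remainders are then shown negligible. The $\bm{\epsilon}$-carrying terms in (d) and the term (e) are controlled by Cauchy--Schwarz together with operator-norm bounds that follow from Assumptions~\ref{as:4}--\ref{as:6} and the normalizations: $\|\bm{\Phi}\|_{\mathrm{op}}^2=O(p)$ (Assumption~\ref{as:1}, $K$ fixed), $\|\bm{E}\|_{\mathrm{op}}=O_p(\sqrt{n}+\sqrt{p})$, $\|\bm{E}\bm{E}^\top\|_{\mathrm{op}}=O_p(n+p)$, $\|\bm{E}\bm{F}^0\|=O_p(\sqrt{np})$, $\|\bm{A}^0\|_{\mathrm{op}}^2=p$, $\|\bm{F}^0\|^2=O_p(n)$; these yield, uniformly over $\bm{A}\in\mathcal{A}$, that the $\bm{\epsilon}$-linear cross terms are $o_p(1)+o_p(1)\bar{D}^{1/2}$ and that term (e) is $O_p\bigl((n+p)/(np)\bigr)=o_p(1)$. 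The remaining coupling term $\frac{1}{np}\sum_i\bm{d}_i^\top\bm{\Phi}^\top\bm{M}_{\bm{A}}\bm{A}^0\bm{f}_i$ is handled by completing the square against (a): the minimizing adjustment involves $\bm{f}_i$ and $\bm{F}^\top\bm{F}/n$, and the leftover quadratic form in $\bm{d}_i$ is exactly $\frac{1}{n}\sum_i\bm{d}_i^\top\bm{D}_i(\bm{A})\bm{d}_i$ with $\bm{D}_i(\bm{A})$ as in~\eqref{eq:15}. By Assumption~\ref{as:3}, $\inf_{\bm{A}\in\mathcal{A}}\lambda_{\min}(\bm{D}_i(\bm{A}))\ge c>0$ uniformly in $i$, so this form is at least $c\,\bar{D}$. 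Collecting the pieces, $o_p(1)=S_{np}(\bm{c}^0,\bm{A}^0)\ge S_{np}(\widehat{\bm{c}},\widehat{\bm{A}})\ge c\,\bar{D}+\frac{1}{2}\lambda_{\min}(\Sigma_{\bm{f}})\|\bm{P}_{\widehat{\bm{A}}}-\bm{P}_{\bm{A}^0}\|^2-o_p(1)-o_p(1)\bar{D}^{1/2}$. Solving this quadratic inequality in $\bar{D}^{1/2}$ forces $\bar{D}\overset{p}{\to} 0$, which is statement (i) because $\frac{1}{\sqrt{n}}\|\bm{C}-\widehat{\bm{C}}\|=\bar{D}^{1/2}$; feeding $\bar{D}=o_p(1)$ back into the same chain then yields $\|\bm{P}_{\widehat{\bm{A}}}-\bm{P}_{\bm{A}^0}\|\overset{p}{\to} 0$, which is statement (ii).

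The main obstacle is the step that produces $\bm{D}_i(\bm{A})$ and, relatedly, making the bounds above hold \emph{uniformly} over the whole set $\mathcal{A}$ rather than pointwise in $\bm{A}$. One must complete the square for loading spaces whose column space partially overlaps that of $\bm{\Phi}$, and correctly identify the second (correction) term of~\eqref{eq:15} as the price such an overlap costs in the factor fit; this is the analogue of the identification-and-consistency core for interactive-fixed-effects and factor-augmented regression, and Assumption~\ref{as:3} is precisely the condition that excludes the degenerate directions. The random-matrix bounds for $\bm{E}$, $\bm{E}\bm{E}^\top$ and $\bm{\Phi}^\top\bm{M}_{\bm{A}}\bm{E}$, needed uniformly in $\bm{A}$, are routine but must be assembled with care (using that $K$ is fixed) so that every remainder genuinely vanishes as $\min(n,p)\to\infty$.
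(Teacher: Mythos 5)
Your overall route is essentially the paper's: start from the centered objective~\eqref{eq:23} and the basic inequality, expand around the truth, complete the square in the factor-mismatch direction so that the coefficient deviation is penalized exactly by $\bm{D}_i(\bm{A})$ from~\eqref{eq:15}, invoke Assumption~\ref{as:3}, show the $\bm{\epsilon}$-carrying and penalty terms are uniformly negligible, and then read off (ii) from the factor-mismatch term via the trace identity for the two projections. The only structural difference is that you control the aggregate $\bar{D}=\frac{1}{n}\|\bm{C}-\widehat{\bm{C}}\|^2$ directly, whereas the paper first establishes consistency of each $\widehat{\bm{c}}_i$ uniformly in $i$ (via a Taylor-expansion step) and then aggregates; for the theorem as stated your shortcut suffices.

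One step, as literally displayed, does not hold: the collected inequality cannot contain both $c\,\bar{D}$ and $\frac{1}{2}\lambda_{\min}(\Sigma_{\bm{f}})\|\bm{P}_{\widehat{\bm{A}}}-\bm{P}_{\bm{A}^0}\|^2$ simultaneously. The coupling term can be absorbed only once: completing the square converts (a)+(b)+coupling into $\frac{1}{n}\sum_i\bm{d}_i^\top\bm{D}_i(\bm{A})\bm{d}_i$ plus a nonnegative remainder, and in that form the factor-mismatch term (b) is no longer available at full strength; keeping, say, half of (b) would require positivity of $\bm{P}-2\bm{V}_i^\top\bm{W}^{-1}\bm{V}_i$, which Assumption~\ref{as:3} does not give. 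The argument must therefore be run sequentially, exactly as your closing sentence (and the paper) does: first drop the nonnegative completed square to force $\bar{D}\overset{p}{\to}0$, then feed $\bar{D}=o_p(1)$ back so the coupling term is $O_p(\bar{D}^{1/2})=o_p(1)$ and the lower bound on (b) delivers (ii). With that reading your proof is sound. Two minor points: you invoke Assumption~\ref{as:7} to kill the penalty at the truth — this is indeed needed (the paper uses it implicitly through its Lemma~\ref{le:9}$(iv)$ despite the theorem header listing only Assumptions~\ref{as:1}--\ref{as:6}); and the exact normalization $(\bm{A}^0)^\top\bm{A}^0=p\bm{I}_r$ is not assumed — Identification Condition~\ref{id}(ii) only gives $\frac{1}{p}(\bm{A}^0)^\top\bm{A}^0\to\Sigma_{\bm{a}}>0$, so your constant should read $\frac{1}{2}\lambda_{\min}(\Sigma_{\bm{f}})\lambda_{\min}(\Sigma_{\bm{a}})+o_p(1)$, which changes nothing in the conclusion.
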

We start by proving consistency for the vector $\widehat{\bm{c}}_i$. This consistency is uniform for all $i= 1,\dots, n$. Therefore, we could combine $\bm{\bm{c}}_i$ for all $i = 1,\dots, n$, and have the result for the coefficient matrix $\widehat{\bm{C}}$ in $(i)$. The matrix $\widehat{\bm{C}}$ is of dimension $K\times n$, where $K$ is fixed and the sample size $n$ goes to infinity, so there is a $\frac{1}{\sqrt{n}}$ scale in the result of $(i)$. In the second part of the theorem, note that $\bm{P}_{\bm{A}} = \bm{I}_p - \bm{M}_{\bm{A}}$, where $\bm{M}_{\bm{A}}$ is the projection matrix onto the orthogonal complement of the linear space spanned by the columns of $\bm{A}$. Thus, $\bm{P}_{\widehat{\bm{A}}}$ and $\bm{P}_{\bm{A}^0}$ represent the spaces spanned by $\widehat{\bm{A}}$ and $\bm{A}^0$, and we show that they are asymptotically the same in $(ii)$.

Next, we obtain the rate of convergence. 

\begin{theorem}\label{th:2}
Under Assumptions~\ref{as:1} - \ref{as:6}, if $p/n\rightarrow \rho >0$,
\begin{equation*}
\left\|\frac{\left(\bm{C}^0-\widehat{\bm{C}}\right)}{\sqrt{n}}\bm{M}_{\bm{F}}\right\| = O_p\left(\frac{1}{\sqrt{p}}\right),
\end{equation*}
where $\bm{M}_{\bm{F}}$ is defined in~\eqref{eq:54}.
\end{theorem}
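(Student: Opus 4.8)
The plan is to argue directly from the closed-form characterisation of the estimator in~\eqref{eq:1} rather than from the optimization problem~\eqref{eq:36}. Stacking the first equation of~\eqref{eq:1} over $i$ gives $\widehat{\bm{C}} = \bm{G}\,\bm{\Phi}^\top\bm{M}_{\widehat{\bm{A}}}\bm{Y}$ with $\bm{G} \equiv (\bm{\Phi}^\top\bm{M}_{\widehat{\bm{A}}}\bm{\Phi} + \alpha\bm{R})^{-1}$. Substituting $\bm{Y} = \bm{\Phi}\bm{C}^0 + \bm{A}^0\bm{F}^\top + \bm{E}$ and using $\bm{G}\,\bm{\Phi}^\top\bm{M}_{\widehat{\bm{A}}}\bm{\Phi} = \bm{I}_K - \alpha\bm{G}\bm{R}$ yields
\begin{equation*}
\bm{C}^0 - \widehat{\bm{C}} = \alpha\,\bm{G}\bm{R}\bm{C}^0 \;-\; \bm{G}\,\bm{\Phi}^\top\bm{M}_{\widehat{\bm{A}}}\bm{A}^0\bm{F}^\top \;-\; \bm{G}\,\bm{\Phi}^\top\bm{M}_{\widehat{\bm{A}}}\bm{E}.
\end{equation*}
The key point is that right-multiplying by $\bm{M}_{\bm{F}}$ annihilates the middle term outright, since $\bm{F}^\top\bm{M}_{\bm{F}} = \bm{0}$; this is exactly the mechanism by which the projected error $(\bm{C}^0-\widehat{\bm{C}})\bm{M}_{\bm{F}}$ attains a sharper rate than $\bm{C}^0-\widehat{\bm{C}}$ itself, whose component along the factor directions is confounded with the loading estimation error carried in $\bm{\Phi}^\top\bm{M}_{\widehat{\bm{A}}}\bm{A}^0$. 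It therefore remains to bound $\alpha\bm{G}\bm{R}\bm{C}^0\bm{M}_{\bm{F}}$ and $\bm{G}\,\bm{\Phi}^\top\bm{M}_{\widehat{\bm{A}}}\bm{E}\bm{M}_{\bm{F}}$ by $O_p(\sqrt{n/p})$ in Frobenius norm.

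I would first establish $\|\bm{G}\|_2 = O_p(1/p)$, writing $\|\cdot\|_2$ for the spectral norm: since $\alpha\bm{R}$ is positive semidefinite it suffices that $\lambda_{\min}(\tfrac{1}{p}\bm{\Phi}^\top\bm{M}_{\widehat{\bm{A}}}\bm{\Phi})$ be bounded away from zero in probability, which follows from the uniform positive-definiteness in Assumption~\ref{as:3} together with $\|\bm{P}_{\widehat{\bm{A}}}-\bm{P}_{\bm{A}^0}\|\overset{p}{\to}0$ from Theorem~\ref{th:1}(ii). The penalty term is then controlled by $\|\alpha\bm{G}\bm{R}\bm{C}^0\bm{M}_{\bm{F}}\| \le \alpha\|\bm{G}\|_2\|\bm{R}\|_2\|\bm{C}^0\| = O_p(\alpha\sqrt{n}/p)$, using Assumption~\ref{as:2} and $K$ fixed, so that $\|\bm{C}^0\| = O(\sqrt{n})$ and $\|\bm{R}\|_2 = O(1)$; this is of smaller order than $\sqrt{n/p}$ as soon as $\alpha = o(\sqrt{p})$, a mild strengthening of Assumption~\ref{as:7} (and automatic if $\alpha$ is held fixed).

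For the stochastic term I would split $\bm{M}_{\widehat{\bm{A}}} = \bm{M}_{\bm{A}^0} + (\bm{P}_{\bm{A}^0}-\bm{P}_{\widehat{\bm{A}}})$. For the leading piece, conditioning on $(\bm{A}^0,\bm{F})$, which is independent of $\bm{E}$ by Assumption~\ref{as:6}, and using that the $\epsilon_{ji}$ are i.i.d. with variance $\sigma^2$ (Assumption~\ref{as:5}) gives $\mathbb{E}[\|\bm{\Phi}^\top\bm{M}_{\bm{A}^0}\bm{E}\bm{M}_{\bm{F}}\|^2 \mid \bm{A}^0,\bm{F}] = \sigma^2\,\text{tr}(\bm{M}_{\bm{F}})\,\text{tr}(\bm{\Phi}^\top\bm{M}_{\bm{A}^0}\bm{\Phi}) \le \sigma^2 n\,\text{tr}(\bm{\Phi}^\top\bm{\Phi}) = O(np)$ by Assumption~\ref{as:1}, so $\|\bm{G}\,\bm{\Phi}^\top\bm{M}_{\bm{A}^0}\bm{E}\bm{M}_{\bm{F}}\| \le \|\bm{G}\|_2\cdot O_p(\sqrt{np}) = O_p(\sqrt{n/p})$. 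For the remainder I would use the crude bound $\|\bm{G}\,\bm{\Phi}^\top(\bm{P}_{\bm{A}^0}-\bm{P}_{\widehat{\bm{A}}})\bm{E}\bm{M}_{\bm{F}}\| \le \|\bm{G}\|_2\|\bm{\Phi}\|_2\|\bm{P}_{\bm{A}^0}-\bm{P}_{\widehat{\bm{A}}}\|\|\bm{E}\|_2 = O_p(1/p)\cdot O(\sqrt{p})\cdot o_p(1)\cdot O_p(\sqrt{p}) = o_p(1)$, where $\|\bm{\Phi}\|_2 = O(\sqrt{p})$ by Assumption~\ref{as:1}, $\|\bm{E}\|_2 = O_p(\sqrt{p})$ by the standard spectral-norm bound for a random matrix with i.i.d. entries (available under the eighth-moment condition of Assumption~\ref{as:5} and the balanced regime $p/n\to\rho$), and $\|\bm{P}_{\bm{A}^0}-\bm{P}_{\widehat{\bm{A}}}\| = o_p(1)$ by Theorem~\ref{th:1}(ii). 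Since $\sqrt{n/p}$ stays bounded away from zero under $p/n\to\rho$, this $o_p(1)$ remainder is negligible; assembling the three contributions gives $\|(\bm{C}^0-\widehat{\bm{C}})\bm{M}_{\bm{F}}\| = O_p(\sqrt{n/p})$, and dividing by $\sqrt{n}$ delivers the claim.

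The step I expect to be the main obstacle is the remainder piece: the crude spectral-norm bound only just closes, because the amplification $\|\bm{G}\|_2\|\bm{\Phi}\|_2\|\bm{E}\|_2$ is $O_p(1)$ only when $\|\bm{E}\|_2$ is $O_p(\sqrt{p})$ rather than $O_p(p)$ — so the joint-growth condition $p/n\to\rho$ is genuinely doing work here, not merely simplifying the statement — and only when $\|\bm{G}\|_2 = O_p(1/p)$ survives plugging in the \emph{estimated} projector $\bm{M}_{\widehat{\bm{A}}}$, which is precisely why Assumption~\ref{as:3} is posed uniformly over the admissible set $\mathcal{A}$ rather than only at $\bm{A}^0$. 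If one wanted a conclusion sharper than $O_p(1/\sqrt{p})$, this step would have to be replaced by an honest convergence rate for $\bm{P}_{\widehat{\bm{A}}}-\bm{P}_{\bm{A}^0}$, obtained by feeding the rate of $\widehat{\bm{C}}$ back into the eigen-equation~\eqref{eq:9} via a Davis--Kahan perturbation argument; for the stated theorem, the consistency already proved in Theorem~\ref{th:1} suffices.
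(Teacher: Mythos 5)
Your argument is correct, but it is a genuinely different and considerably more economical route than the paper's. The paper starts from the same closed form, arriving at $\frac{1}{p}\left(\bm{\Phi}^\top\bm{M}_{\widehat{\bm{A}}}\bm{\Phi}+\alpha\bm{R}\right)(\widehat{\bm{C}}-\bm{C}^0)=\frac{\alpha}{p}\bm{R}\bm{C}^0+\frac{1}{p}\bm{\Phi}^\top\bm{M}_{\widehat{\bm{A}}}\bm{A}^0\bm{F}^\top+\frac{1}{p}\bm{\Phi}^\top\bm{M}_{\widehat{\bm{A}}}\bm{E}$, but it does \emph{not} project onto $\bm{M}_{\bm{F}}$ at the outset; instead it writes $\bm{M}_{\widehat{\bm{A}}}\bm{A}^0=\bm{M}_{\widehat{\bm{A}}}(\bm{A}^0-\widehat{\bm{A}}\bm{H}^{-1})$ and expands this via the eight-term decomposition of $\widehat{\bm{A}}-\bm{A}^0\bm{H}$ from Proposition~\ref{pro}, yielding $J1,\dots,J8$; the projection $\bm{M}_{\bm{F}}$ only emerges at the end, from recombining the two non-negligible terms $J2$ and $J7$ with the left-hand side and the error term, and the replacement of $\bm{M}_{\widehat{\bm{A}}}$ by $\bm{M}_{\bm{A}^0}$ is done with quantified rates (Lemmas~\ref{le:5} and~\ref{le:6}). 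You instead exploit the exact algebraic fact $\bm{F}^\top\bm{M}_{\bm{F}}=\bm{0}$ to annihilate the loading term before any expansion, so you never need Proposition~\ref{pro} or Lemmas~\ref{le:2}--\ref{le:11}; you only need the lower bound on $\frac{1}{p}\bm{\Phi}^\top\bm{M}_{\widehat{\bm{A}}}\bm{\Phi}$, a conditional second-moment bound on $\bm{\Phi}^\top\bm{M}_{\bm{A}^0}\bm{E}\bm{M}_{\bm{F}}$ (which is correct as you computed), and a crude spectral bound on the projector-difference piece using Theorem~\ref{th:1}(ii). What the paper's heavier machinery buys is an explicit asymptotic representation with the leading term $\bm{Q}^{-1}(\bm{A}^0)\frac{1}{\sqrt{np}}\bm{\Phi}^\top\bm{M}_{\bm{A}^0}\bm{E}\bm{M}_{\bm{F}}$ and remainders of explicit orders, which is then fed directly into the proof of Theorem~\ref{th:3}; your $o_p(1)$ treatment of the $\bm{M}_{\widehat{\bm{A}}}-\bm{M}_{\bm{A}^0}$ discrepancy closes only because the target rate $\sqrt{n/p}$ is bounded away from zero in the balanced regime, as you yourself note, so it proves the stated $O_p$ bound but not a sharper expansion.

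Two caveats, neither fatal. First, your bound $\|\bm{E}\|_2=O_p(\sqrt{p})$ is a genuine spectral-norm random-matrix result (Bai--Yin/Lata{\l}a type); it holds under Assumption~\ref{as:5} and $p/n\to\rho$, but it is an external ingredient the paper never needs, since its lemmas work only with Frobenius-type bounds (which would give the weaker $\|\bm{E}\|_2=O_p(p^{3/4})$ and would not close your remainder), so you should cite it explicitly. Second, your requirement $\alpha=O(\sqrt{p})$ for the penalty term is indeed a condition beyond Assumption~\ref{as:7}; note, however, that the paper's own proof needs exactly the same thing when it asserts that $\bm{Q}^{-1}\frac{\alpha}{p}\bm{R}\bm{C}^0$ is $O_p(1)$ after the $\sqrt{p/n}$ scaling (with $\|\bm{C}^0\|=O(\sqrt{n})$ this term is $O(\alpha/\sqrt{p})$), even though the theorem statement lists only Assumptions~\ref{as:1}--\ref{as:6}; so your explicit flagging of the condition is, if anything, more careful than the original.
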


We study the case when the dimension $p$ and the sample size $n$ are comparable. We achieve rate $\sqrt{p}$ convergence, considering $\frac{\left\|\bm{C}^0-\widehat{\bm{C}}\right\|}{\sqrt{n}} $ on the whole. It is expected that the rate of convergence for smoothing models depends on the number of discrete points $p$ observed on each curve.

\begin{remark}
The asymptotic result in Theorem~\ref{th:2} contains a projection matrix $\bm{M}_{\bm{F}}$. This matrix projects $\bm{C}^0-\widehat{\bm{C}}$ onto the space orthogonal to the factor matrix $\bm{F}$. This theorem shows the interplay between $\bm{C}^0$ and $\bm{F}$. When $\bm{C}^0$ and $\bm{F}$ are orthogonal, $(\bm{C}^0-\widehat{\bm{C}})\bm{M}_{\bm{F}} = \bm{C}^0-\widehat{\bm{C}}$, and we obtain the rate of convergence of $\bm{C}^0-\widehat{\bm{C}}$. When $\bm{C}^0$ and $\bm{F}$ are not orthogonal, the inference on $\bm{C}^0$ will be affected by the existence of the factor model component.
\end{remark}

We further begin to establish the limiting distribution. It is shown in Appendix A that
\begin{equation*}
\left\|\sqrt{p}\frac{\left(\bm{C}^0-\widehat{\bm{C}}\right)}{\sqrt{n}}\bm{M}_{\bm{F}}\right\| =  \left\|\left(\frac{1}{p}\Phi^\top \bm{M}_{\bm{A}^0}\bm{\Phi}\right)^{-1}\frac{1}{\sqrt{np}}\bm{\Phi}^\top \bm{M}_{\bm{A}^0}\bm{E}\bm{M}_{\bm{F}} \right\| + o_p(1).
\end{equation*}
The limiting distribution is constructed based on the first term on the right-hand side. Let $\bm{\omega}_j$ denote the $j$th column of the $K \times p$ matrix $\bm{\Phi}^\top \bm{M}_{\bm{A}^0}$. We then have the following lemma.
\begin{lemma}\label{le:4}
Under Assumptions~\ref{as:1} -~\ref{as:7}, for any vector $\bm{b} = (b_1, \dots,  b_n)^\top$,
\begin{equation*}
\frac{1}{\sqrt{np}}\bm{\Phi}^\top \bm{M}_{\bm{A}^0}\bm{E}\bm{M}_{\bm{F}} \bm{b} \overset{d}{\to} \mathcal{N}(\bm{0},\bm{L}),
\end{equation*}
where $\bm{L}$ is defined in~\eqref{eq:45}.
\end{lemma}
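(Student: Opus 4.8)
\textbf{Proof plan for Lemma~\ref{le:4}.} The plan is to recognize the left-hand side as a normalized sum over the triangular array $\{\bm{x}_{ij}\}$ introduced in~\eqref{eq:44} and to apply a multivariate Lindeberg--Feller central limit theorem after conditioning on the loadings and factors. Write $\mathcal{G}$ for the $\sigma$-field generated by $\bm{\Phi}$, $\bm{A}^0$ and $\bm{F}$; equivalently, $\mathcal{G}$ makes the vectors $\bm{\omega}_j$ (the columns of $\bm{\Phi}^\top\bm{M}_{\bm{A}^0}$) and the scalars $\psi_{ik}$ (the entries of $\bm{M}_{\bm{F}}$) measurable. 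By Assumption~\ref{as:6} the errors $\epsilon_{ji}$ are independent of $\mathcal{G}$, so conditionally on $\mathcal{G}$ the $\bm{x}_{ij} = \bm{\omega}_j\epsilon_{ji}\sum_k\psi_{ik}b_k$ are, by Assumption~\ref{as:5}, independent across the index pair $(i,j)$ with $\mathbb{E}[\bm{x}_{ij}\mid\mathcal{G}] = \bm{\omega}_j\big(\sum_k\psi_{ik}b_k\big)\,\mathbb{E}(\epsilon_{ji}) = \bm{0}$.

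First I would compute the conditional covariance. Since $\epsilon_{ji}$ is i.i.d.\ in both directions with variance $\sigma^2$, all cross terms vanish and
\begin{equation*}
\var\!\left(\frac{1}{\sqrt{np}}\sum_{i=1}^n\sum_{j=1}^p\bm{x}_{ij}\ \Big|\ \mathcal{G}\right) = \frac{\sigma^2}{np}\sum_{i=1}^n\sum_{j=1}^p \bm{\omega}_j\bm{\omega}_j^\top\left(\sum_{k=1}^n\psi_{ik}b_k\right)^{\!2} = \bm{L}_{np},
\end{equation*}
with $\bm{L}_{np}$ as in~\eqref{eq:48}. By Assumption~\ref{as:8}, $\bm{L}_{np}\to\bm{L}$, whose smallest eigenvalue $\nu^2$ is strictly positive; the fourth-moment bounds of Assumption~\ref{as:4}, the boundedness of the basis in Assumption~\ref{as:1}, and the contraction $\|\bm{M}_{\bm{F}}\bm{b}\|\le\|\bm{b}\|$ keep the individual summands of the right order, so the convergence is not degenerate.

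Next I would verify the multivariate Lindeberg condition for the conditionally independent array $\{\bm{x}_{ij}/\sqrt{np}\}$: this is precisely~\eqref{eq:56} in Assumption~\ref{as:8}. Invoking the Lindeberg--Feller CLT for triangular arrays of independent mean-zero random vectors then gives, conditionally on $\mathcal{G}$,
\begin{equation*}
\frac{1}{\sqrt{np}}\sum_{i=1}^n\sum_{j=1}^p\bm{x}_{ij}\ \overset{d}{\to}\ \mathcal{N}(\bm{0},\bm{L}).
\end{equation*}
(Alternatively one reduces to the scalar case first by the Cram\'er--Wold device, applying the one-dimensional Lindeberg--Feller theorem to $\bm{\lambda}^\top\bm{x}_{ij}$ for arbitrary $\bm{\lambda}\in\R^K$.) Because the limiting law is deterministic, the unconditional statement follows: for every $\bm{t}\in\R^K$,
\begin{equation*}
\mathbb{E}\!\left[\exp\!\left(i\,\bm{t}^\top\tfrac{1}{\sqrt{np}}\textstyle\sum_{i,j}\bm{x}_{ij}\right)\right] = \mathbb{E}\Big[\mathbb{E}\!\left(\exp(\cdot)\mid\mathcal{G}\right)\Big]\ \longrightarrow\ \exp\!\left(-\tfrac12\,\bm{t}^\top\bm{L}\,\bm{t}\right)
\end{equation*}
by bounded convergence, since conditional characteristic functions are bounded by $1$; combined with~\eqref{eq:44} this is the claim.

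The main obstacle is the joint randomness of $\bm{\omega}_j$ and $\psi_{ik}$: these are not deterministic, so $\bm{L}_{np}$ is itself random and the array is independent only after conditioning. Making the conditional CLT rigorous requires a little care --- if $\bm{L}_{np}\to\bm{L}$ holds only in probability rather than almost surely, one passes to a subsequence along which the convergence is almost sure, applies the conditional CLT there, and uses that the limiting Gaussian is the same deterministic law along every subsequence; the Lindeberg term~\eqref{eq:56}, though written as an unconditional expectation, transfers to the conditional array by the same device together with the independence of $\{\epsilon_{ji}\}$ from $\mathcal{G}$. The remaining work --- checking that $\|\bm{\omega}_j\|$ and $\sum_k\psi_{ik}b_k$ are suitably controlled so that Assumption~\ref{as:8} is substantive rather than vacuous --- is routine given Assumptions~\ref{as:1}--\ref{as:2} and~\ref{as:4}.
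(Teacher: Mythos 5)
Your proposal is correct and follows essentially the same route as the paper's proof: identify the left-hand side with the array $\bm{x}_{ij}$ of~\eqref{eq:44}, compute its (co)variance as $\bm{L}_{np}$ in~\eqref{eq:48}, and invoke the Lindeberg condition of Assumption~\ref{as:8} to apply the Lindeberg--Feller central limit theorem. The only difference is that you make explicit the conditioning on $(\bm{\Phi},\bm{A}^0,\bm{F})$ and the subsequent deconditioning via characteristic functions --- a point the paper's terse proof passes over silently, since the summands are independent only conditionally --- so your write-up is, if anything, slightly more careful than the original.
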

This lemma paves the way for the next theorem on asymptotic normality.

\begin{theorem}\label{th:3}
Under Assumptions~\ref{as:1} -~\ref{as:7}, if $p/n\rightarrow \rho >0$, we have for any vector $\bm{b} \in \mathbb{R}^n$
\begin{equation*}
\sqrt{p}\left(\frac{\bm{C}^0-\widehat{\bm{C}}}{\sqrt{n}}\right)\bm{M}_{\bm{F}}\bm{b} \overset{d}{\to} \mathcal{N}\left(\bm{0}, \bm{Q}\left(\bm{A}^0\right)^{-1}\bm{L}\bm{Q}\left(\bm{A}^0\right)^{-1}\right),
\end{equation*}
where $\bm{M}_{\bm{F}}$ is defined in Theorem~\ref{th:2}, $\bm{L}$ is defined in~\eqref{eq:45}, and 
\begin{equation*}
\bm{Q}\left(\bm{A}^0\right) \equiv \frac{1}{p}\bm{\Phi}^\top \bm{M}_{\bm{A}^0} \bm{\Phi}.
\end{equation*}
\end{theorem}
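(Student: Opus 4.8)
The plan is to read Theorem~\ref{th:3} off the asymptotic linear expansion recorded just before Lemma~\ref{le:4}, feed its leading term into the central limit theorem of Lemma~\ref{le:4}, and finish with a Slutsky-type argument for the $\bm{Q}(\bm{A}^0)^{-1}$ prefactor. First I would promote the displayed identity preceding Lemma~\ref{le:4} to the vector level: for a fixed $\bm{b}$,
\[
\sqrt{p}\left(\frac{\bm{C}^0-\widehat{\bm{C}}}{\sqrt{n}}\right)\bm{M}_{\bm{F}}\bm{b} \;=\; \bm{Q}(\bm{A}^0)^{-1}\,\frac{1}{\sqrt{np}}\,\bm{\Phi}^\top \bm{M}_{\bm{A}^0}\bm{E}\bm{M}_{\bm{F}}\bm{b} \;+\; o_p(1),
\]
with $\bm{Q}(\bm{A}^0)=\frac1p\bm{\Phi}^\top\bm{M}_{\bm{A}^0}\bm{\Phi}$. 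This reuses the machinery behind Theorem~\ref{th:2}: the ridge contribution $\alpha\bm{R}\bm{c}_i^0/p$ is asymptotically negligible because $\alpha=o(p)$ (Assumption~\ref{as:7}) and $\|\bm{c}_i^0\|=O(1)$ (Assumption~\ref{as:2}), so the estimator is asymptotically unbiased; the factor term vanishes \emph{exactly} through $\bm{M}_{\bm{A}^0}\bm{A}^0=\bm{0}$ and the orthogonality built into $\bm{M}_{\bm{F}}$; and replacing $\bm{M}_{\widehat{\bm{A}}}$ by $\bm{M}_{\bm{A}^0}$ (and $\bm{Q}(\widehat{\bm{A}})$ by $\bm{Q}(\bm{A}^0)$) is justified by the consistency $\|\bm{P}_{\widehat{\bm{A}}}-\bm{P}_{\bm{A}^0}\|\overset{p}{\to}0$ from Theorem~\ref{th:1}(ii), together with the standard factor-analytic rate for this quantity, which under $p/n\to\rho>0$ is fast enough that the $\widehat{\bm{A}}$-estimation error enters below the $\sqrt{p}$ scale.

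Second, I would apply Lemma~\ref{le:4} verbatim: under Assumptions~\ref{as:1}--\ref{as:7}, $\frac{1}{\sqrt{np}}\bm{\Phi}^\top\bm{M}_{\bm{A}^0}\bm{E}\bm{M}_{\bm{F}}\bm{b}\overset{d}{\to}\mathcal{N}(\bm{0},\bm{L})$ with $\bm{L}$ as in~\eqref{eq:45}. Third, I would dispose of the prefactor $\bm{Q}(\bm{A}^0)^{-1}$. Since $K$ is fixed, $\bm{Q}(\bm{A}^0)$ is a $K\times K$ symmetric matrix with eigenvalues bounded above (by Assumption~\ref{as:1}, which gives $\|\bm{\Phi}\|^2/p=O(1)$) and bounded away from zero (by Assumption~\ref{as:3} at $\bm{A}^0\in\mathcal{A}$); hence $\bm{Q}(\bm{A}^0)^{-1}$ is a bounded sequence of positive-definite matrices. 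Applying the continuous-mapping/Slutsky theorem to the linear map $\bm{v}\mapsto\bm{Q}(\bm{A}^0)^{-1}\bm{v}$ along the sequence converging to $\mathcal{N}(\bm{0},\bm{L})$, and absorbing the $o_p(1)$ remainder from the first step, gives
\[
\sqrt{p}\left(\frac{\bm{C}^0-\widehat{\bm{C}}}{\sqrt{n}}\right)\bm{M}_{\bm{F}}\bm{b}\;\overset{d}{\to}\;\mathcal{N}\!\left(\bm{0},\;\bm{Q}(\bm{A}^0)^{-1}\bm{L}\,\bm{Q}(\bm{A}^0)^{-1}\right),
\]
read in the triangular-array sense, since $\bm{Q}(\bm{A}^0)$ and $\bm{L}$ retain their $n,p$ dependence; this is exactly the assertion of Theorem~\ref{th:3}.

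The main obstacle is the first step, i.e. establishing the linear expansion at the vector level with an honest $o_p(1)$ remainder after contracting with $\bm{b}$. This is not routine because $\bm{M}_{\bm{F}}\bm{b}$ necessarily has Euclidean norm of order $\sqrt{n}$ (that is precisely what makes $\bm{L}_{np}$ in~\eqref{eq:48} converge to a finite nonzero limit), so a crude operator-norm bound of the type $\|\text{remainder matrix}\|_F\cdot\|\bm{b}\|$ overshoots the truth by a factor $\sqrt{n}$; the remainder must instead be shown small by exploiting the exact projection identities and the sharp $O_p(1/\sqrt{p})$ rate of Theorem~\ref{th:2} term by term, which mirrors the care already needed for the Lindeberg argument in Lemma~\ref{le:4}. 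A secondary delicate point is quantifying how the $\widehat{\bm{A}}$-estimation error propagates through both $\bm{Q}(\widehat{\bm{A}})^{-1}$ and $\bm{M}_{\widehat{\bm{A}}}\bm{E}\bm{M}_{\bm{F}}\bm{b}$; this is where the balanced-growth condition $p/n\to\rho>0$ is invoked, ensuring these perturbations are of smaller order than the $\sqrt{p}$-scaled signal.
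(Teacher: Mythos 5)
Your proposal follows essentially the same route as the paper: the paper's proof of Theorem~\ref{th:3} simply invokes the expansion already derived at the end of the proof of Theorem~\ref{th:2} (the display~\eqref{eq:35}, with the ridge term killed by Assumption~\ref{as:7} and the $\bm{M}_{\widehat{\bm{A}}}\to\bm{M}_{\bm{A}^0}$, $\bm{Q}(\widehat{\bm{A}})\to\bm{Q}(\bm{A}^0)$ replacements justified by Lemmas~\ref{le:5} and~\ref{le:6}), then applies Lemma~\ref{le:4} to the leading term and multiplies by the matrix $\bm{Q}(\bm{A}^0)^{-1}$. Your additional remarks about controlling the remainder after contraction with $\bm{b}$ are a finer point than the paper itself addresses, but the overall argument is the same.
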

The vector $\bm{b}$ comes from the same vector in Lemma~\ref{le:4}. The asymptotic bias is zero since we assume no serial or cross-sectional correlation in the error terms. This is a simplified setting, which can be extended to allow for weak correlations in errors in both directions. In that case, the asymptotic distribution will include a non-zero bias term.

\begin{remark}
Theorem~\ref{th:3} shows that the asymptotic distribution of the coefficient matrix $\widehat{\bm{C}}$ relies on the unobserved factor loading matrix $\bm{A}^0$. Although we are unable to consistently estimate $\bm{A}^0$ with $\widehat{\bm{A}}$, what we need is in fact the projected matrix $\bm{M}_{\bm{A}^0}$, which can be estimated by $\bm{M}_{\widehat{\bm{A}}}$.  We are able to find estimators for $\bm{Q}$ and $\bm{L}$ based on $\bm{M}_{\widehat{\bm{A}}}$,
\begin{align*}
\widehat{\bm{Q}} &= \frac{1}{p}\bm{\Phi}^\top \bm{M}_{\widehat{\bm{A}}} \bm{\Phi}\\
\widehat{\bm{L}} &= \frac{\sigma^2}{np}\sum_i^n\sum_j^p \widehat{\bm{\omega}}_j^\top\widehat{\bm{\omega}}_j \left(\sum_k^n \widehat{\psi}_{ik}b_k\right)^2,
\end{align*}
where $\widehat{\bm{\omega}}_j$ is the $j$th column of the $K \times p$ matrix $\bm{\Phi}^\top \bm{M}_{\widehat{\bm{A}}}$ and $\widehat{\psi}_{ik}$ is the $(i,k)$th element in the matrix $\bm{M}_{\widehat{\bm{F}}}$.
\end{remark}

\section{Extending the model to nonparametric smoothing}\label{se:nonparam}

In the FASM, we use basis smoothing where we assume the basis functions $\phi_k(u)$ are known, and we show the asymptotic properties in the previous section. However, in practice, the basis functions are usually unknown, and nonparametric smoothing techniques are frequently used. In this section, we extend the proposed model to a smoothing spline.

In spline smoothing, a spline basis is used to model functions. We consider smoothing splines, where regularized regression is performed, and the knots are placed on all the observed discrete points. The most commonly considered basis is the cubic smoothing splines, where the order is 4. The number of basis functions equals the number of interior knots plus the order. Thus, we use $p+2$ spline basis functions. Denote the $k$th basis function as $\psi_k(u)$, $k = 1,\dots, p+2$. Let the $p\times (p+2)$ matrix $\bm{\Psi}$ denote the basis matrix, where the $(j,k)$th element is $\psi_k(u_j)$. The objective function can be written as
\begin{equation*}
 \text{SSR}(\bm{w}_i, \bm{A}, \bm{f})  = \sum_{i=1}^n \left[(\bm{Y}_i - \bm{\Psi}\bm{w}_i - \bm{A}\bm{f}_i)^\top  (\bm{Y}_i - \bm{\Psi}\bm{w}_i - \bm{A}\bm{f}_i ) + \alpha \bm{w}_i^\top \bm{U}\bm{w}_i \right],
\end{equation*}
where $\bm{w}_i$ is the vector of smoothing coefficient and the matrix $\bm{U} = \int_{\mathcal{I}}D^2\bm{\Psi}(s)D^2\bm{\Psi}^\top(s)ds$ is similarly defined as the matrix $\bm{R}$ in~\eqref{eq:55}.
The estimators are the solution of the equation system
\begin{align*}
\begin{cases}
\widehat{\bm{w}}_i = \left(\bm{\Psi}^\top \bm{M}_{\widehat{\bm{A}}}\bm{\Psi} + \alpha \bm{U}^\top \right)^{-1} \bm{\Psi}^\top \bm{M}_{\widehat{\bm{A}}}\bm{Y}_i, \quad i = 1,\dots, n\\
\left[\frac{1}{np}\sum_{i=1}^n(\bm{Y}_i-\bm{\Psi} \widehat{\bm{c}}_i)(\bm{Y}_i-\bm{\Psi} \widehat{\bm{w}}_i)^\top\right]\widehat{\bm{A}} = \widehat{\bm{A}}\bm{V}_{np},\end{cases}
\end{align*}
which can be solved by iteration approach in Section~\ref{se:3.2}. The function estimator is $\widehat{\mathcal{X}}_i(u) = \widehat{\bm{w}}^\top_i\bm{\Psi}(u)$, where $\bm{\Psi}(u)$ is a vector containing all the $p+2$ basis functions $\psi_k(u)$.

It could be seen that the model is almost the same as the model proposed in Section~\ref{se:3}. The difference lies in the dimensions of the matrices $\bm{\Psi}$ and $\bm{\Phi}$. In the parametric model, we assume the basis functions are known, and the number of basis functions is fixed, so $\bm{\Phi}$ is a $p\times K$ matrix. While in smoothing spline modeling, the matrix $\bm{\Psi}$ is of dimension $p\times (p+2)$. The number of basis functions $p+2$ goes to infinity. This does not render our model estimation infeasible, however, because we have included a  penalty term.

\section{Statistical inference on covariance matrix estimation}\label{se:5}

Having presented the model estimation approach and the estimators' asymptotic properties, we now consider statistical inference with FASM. Our model serves as a dimension reduction technique and avoids the curse of dimensionality problem, rendering making inferences from the model convenient.

Covariance estimation is fundamental in both FDA and high-dimensional data analysis. In these areas, data are of high dimensions, which brings many challenges. In the FDA, the number of discrete points on each curve is often larger than the number of curves. Similarly, the dimension $p$ of high-dimensional data is typical of the same order or larger than the sample size $n$. In this case, the traditional sample covariance estimator no longer works. Dimension reduction by imposing some structure on the data is one of the main ways to solve this problem \citep[see, e.g.,][]{ W03, BL08, FFL08}. By reducing the data dimension with a smoothing model and a factor model in FASM, we propose an alternative covariance matrix estimator.

We consider the covariance matrix of the observed high-dimensional data $\bm{Y}_i$. Let
\begin{equation*}
\bm{\Sigma}_Y \equiv \text{cov}(\bm{Y}).
\end{equation*}
Based on the FASM where
\begin{equation*}
\bm{Y}_i = \bm{\Phi}\bm{c}_i + \bm{A}\bm{f}_i + \bm{\epsilon}_i,
\end{equation*}
we obtain
\begin{equation}\label{eq:12}
\bm{\Sigma}_{\bm{Y}} = \bm{\Phi}\bm{\Sigma}_{\bm{c}}\bm{\Phi}^\top + \bm{A}\bm{\Sigma}_{\bm{f}}\bm{A}^\top + \bm{\Sigma}_{\bm{\epsilon}},
\end{equation}
where $\bm{\Sigma}_{\bm{c}}$ and $\bm{\Sigma}_{\bm{f}}$ are covariance matrices of the vectors $\bm{c}$ and $\bm{F}$ respectively and $\bm{\Sigma}_\epsilon$ denotes the error variance structure and is a diagonal matrix under Assumption~\ref{as:4}. Based on the above equation, we have an estimator
\begin{equation}\label{eq:11}
\widehat{\bm{\Sigma}}_{\bm{Y}} = \bm{\Phi}\widehat{\bm{\Sigma}}_{\bm{c}}\bm{\Phi}^\top + \widehat{\bm{A}} \widehat{\bm{\Sigma}}_{\bm{f}} \widehat{\bm{A}}^\top + \widehat{\bm{\Sigma}}_{\bm{\epsilon}},
\end{equation}
where $\widehat{\bm{\Sigma}}_{\bm{c}}$ and $\widehat{\bm{\Sigma}}_{\bm{f}}$ can be calculated by
\begin{align*}
\widehat{\bm{\Sigma}}_{\bm{c}} &= \frac{1}{n-1}\bm{C}\bm{C}^\top - \frac{1}{n(n-1)}\bm{C}\bm{1}\bm{1}^\top\bm{C}^\top\\
\widehat{\bm{\Sigma}}_{\bm{f}} &= \frac{1}{n-1}\bm{F}\bm{F}^\top - \frac{1}{n(n-1)}\bm{F}\bm{1}\bm{1}^\top\bm{F}^\top,
\end{align*}
where $\bm{1}$s are vectors containing ones, the dimensions of which depend on the matrices multiplied before and after the vectors. The diagonal error covariance matrix $\bm{\Sigma}_{\bm{\epsilon}}$ is estimated by 
\begin{equation*}
\widehat{\bm{\Sigma}}_{\bm{\epsilon}} = \text{diag}\left(n^{-1}\widehat{\bm{E}}\widehat{\bm{E}}^\top\right),
\end{equation*}
where $\widehat{\bm{E}}$ is the residual matrix calculated as $\widehat{\bm{E}} = \bm{Y} - \bm{\Phi}\widehat{\bm{C}} - \widehat{\bm{A}}\widehat{\bm{F}}^\top$. 

\begin{remark}
In functional data analysis where the functional signal is the focus, the estimation of the covariance function $\bm{\Phi}\bm{\Sigma}_{\bm{c}}\bm{\Phi}^\top$ is of main interest. In this paper, we study the covariance structure of the mixture of functional data and high-dimensional data. This type of covariance estimator based on factor models has also been used in previous literature. For example, \cite{FFL08} employed a multi-factor model where the factors are assumed observable, while \cite{FLM11} considered an extension to approximate factor models where cross-sectional correlation is allowed in the error terms.
\end{remark}

We compare the finite sample performance using mean squared error (MSE) of the proposed covariance estimator with the ordinary sample covariance estimator. When the factor structure is ignored, the sample covariance estimator is expected to have a larger variance than our estimator. The advantage of the proposed estimator is shown in Section~\ref{se:6.4}.

\section{Simulation studies}\label{se:6}

In this section, we use simulated data to illustrate the superiority of the proposed model. The FASM is compared with the smoothing model in Section~\ref{se:6.1} to~\ref{se:6.3}. In Section~\ref{se:6.4}, we compare the finite sample performance of the covariance matrix estimator introduced in Section~\ref{se:5} with the ordinary sample covariance estimator. In Section~\ref{se:6.5}, we show how the FASM performs when applied to functional data with step jumps.

\subsection{Data generation}\label{se:6.1}

We generate simulated data $Y_{ij}$, where $i = 1,\dots, n$ and $j = 1,\dots, p$ from the following model:
\begin{align*}
Y_{ij} &= \mathcal{X}_i(u_j) + \eta_{ij} + \epsilon_{ji}\\
&= \sum_{k=1}^{13} c_{ik} \phi_k(u_j) + \sum_{k=1}^4\lambda_{ik} F_{kj} + \epsilon_{ji},
\end{align*}
where $\phi_k(u)$ are chosen as B-spline basis functions of order 4 and the smoothing coefficients $c_{ik}$ are generated from $\mathcal{N}(0, 1.5^2)$. The factors $F_{kj}$ follow $\mathcal{N}(0, 0.5^2)$ and the factor loadings $(\lambda_{i1}, \lambda_{i2}, \lambda_{i3}, \lambda_{i4})^\top \sim \mathcal{N}(\bm{\mu}, \bm{\Sigma})$, where $\bm{\Sigma}$ is a 4 by 4 covariance matrix. We set the multivariate mean term $\bm{\mu} = \bm{0}$ and variance $\bm{\Sigma} = \sigma^2 \bm{I}_4$. We adjust the value of $\sigma^2$ to control the signal-to-noise ratio. When $\sigma^2$ is large, the signal-to-noise level is low, and when $\sigma^2$ is small, the signal-to-noise level is high. The random error terms $\epsilon_{ji}$ follow $\mathcal{N}(0, 0.5^2)$.

\subsection{Estimation}\label{se:6.2}

The numeric iteration procedure for finding $(\widehat{\bm{c}}_i, \widehat{\bm{A}}, \widehat{\bm{f}})$ is introduced in Section~\ref{se:3}. We compare the FASM with the smoothing model, where the factor model component is ignored. The smoothing model can be expressed as:
\begin{equation*}
\bm{Y}_{i} = \bm{\Phi}\bm{c}_i + \bm{\epsilon}_{i},
\end{equation*}
where the coefficient estimator is calculated as:
\begin{equation*}
\widehat{\bm{c}_i} = \left(\bm{\Phi}^\top \bm{\Phi} + \alpha \bm{R} \right)^{-1} \bm{\Phi}^\top \bm{Y}_i, \qquad i = 1,\dots, n.
\end{equation*}
The tuning parameter $\alpha$ is also chosen using mGCV defined in~\eqref{eq:53}.

\subsection{Results}\label{se:6.3}

We repeat the simulation setup 100 times and obtain the estimated smooth function $\widehat{\mathcal{X}}_i(u) = \widehat{\bm{c}}_i^\top \bm{\Phi}(u)$.  The averaged mean squared error (aMSE) for function estimation is calculated as
\begin{equation*}
\text{aMSE} = \frac{1}{np}\sum_{i=1}^n\sum_{j=1}^p \left[\mathcal{X}_i(u_j)-\widehat{\mathcal{X}}_i(u_j)\right]^2.
\end{equation*}

The results are reported in Table~\ref{tab:1}. With the same sample size $n$, increasing the number of points $p$ on the curve decreases the estimation error. However, with the same value for $p$, increasing the sample size does not decrease the estimation error. This is consistent with the convergence rate stated in Section~\ref{se:4}, where the estimator converges with the rate related to $p$. When $\sigma$ is large, such that the signal-to-noise ratio is high, the FASM performs better than the smoothing model.

\begin{center}
\tabcolsep 0.43in
\begin{longtable}{@{}llcc@{}}
\caption{The aMSE of the function estimates with different samples sizes and dimensions. The adjustment of $\sigma^2$ value is used to control the signal-to-noise ratio.}\label{tab:1} \\
\toprule
&& \multicolumn{2}{c}{aMSE} \\
Dimension & $\sigma$ & FASM & Smoothing model\\
\midrule 
\endfirsthead
\toprule
& \multicolumn{3}{c}{aMSE} \\
Dimension & Size of $\eta$ & FASM & Smoothing model\\
\hline 
\endhead
\midrule \multicolumn{4}{r}{{Continued on next page}} \\
\endfoot
\endlastfoot
  $n = 20, p = 51$& $\sigma = 0.5$ & $0.1669$&$ \bm{0.1177}$ \\ 
&$\sigma = 0.75$ & $0.1774$&$ \bm{0.1729}$ \\ 
& $\sigma = 1$&$\bm{0.2151}$&$0.2424$\\\midrule
$n = 20, p = 101$ & $\sigma = 0.5$ & $0.0719 $&$\bm{0.0661}$ \\ 
&$\sigma = 0.75$ & $\bm{0.0893}$&$ 0.0979$ \\ 
& $\sigma = 1$&$\bm{0.1163} $&$0.1384$\\\midrule
$n = 50, p = 51$ & $\sigma = 0.5$ & $0.1311$&$ \bm{0.1207}$ \\ 
&$\sigma = 0.75$ & $\bm{0.1522}$&$ 0.1787$ \\ 
& $\sigma = 1$&$\bm{0.1943}$&$ 0.2518$\\\midrule
 $n = 100, p = 101$ & $\sigma = 0.5$ & $\bm{0.0593}$&$ 0.0674$ \\ 
&$\sigma = 0.75$ & $\bm{0.0794}$&$ 0.0989$ \\ 
& $\sigma = 1$&$\bm{0.1051}$&$ 0.1385$\\
\bottomrule
\end{longtable}
\end{center}

\subsection{Covariance matrix estimation}\label{se:6.4}

This section shows the finite sample performance of the covariance estimator defined in~\eqref{eq:11}. We also calculate the regular sample covariance estimator $\widehat{\bm{\Sigma}}^*_{Y}$ using
\begin{equation*}
\widehat{\bm{\Sigma}}_{\bm{Y}}^* = \frac{1}{n-1}\left(\bm{Y}-\overbar{\bm{Y}}\right)\left(\bm{Y}-\overbar{\bm{Y}}\right)^\top,
\end{equation*}
where the $p \times n$ matrix $\overbar{\bm{Y}}$ is the sample mean matrix whose $j$th row elements are $\frac{1}{n}\sum_{i=1}^n Y_{ij}$.

Both estimators are compared with the population covariance matrix, which is calculated using~\eqref{eq:12}. We calculate the estimation errors under the Frobenius norm as
\begin{equation*}
\text{MSE} = \frac{1}{p}\left\|\widehat{\bm{\Sigma}}_{\bm{Y}} - \bm{\Sigma}_{\bm{Y}}\right\|^2.
\end{equation*}
We show the MSE results in Table~\ref{tab:4}. It can be seen that the FASM produces smaller MSE values in almost all cases. 

\begin{center}
\tabcolsep 0.43in
\begin{longtable}{@{}llcc@{}}
\caption{The MSE of the two covariance estimators with different sample sizes and dimensions. The adjustment of $\sigma^2$ value is used to control the signal-to-noise ratio.}\label{tab:5} \\
\toprule
& &\multicolumn{2}{c}{MSE} \\
Dimension &$\sigma$ & FASM & Sample covariance\\
\midrule 
\endfirsthead
\toprule
& \multicolumn{3}{c}{aMSE} \\
Dimension & Size of $\eta$ & FASM & Sample covariance\\
\hline 
\endhead
\midrule \multicolumn{4}{r}{{Continued on next page}} \\
\endfoot
\endlastfoot
$n = 20, p = 51$&$\sigma = 0.5$ &$\bm{0.090}$ & $0.143$ \\ 
&$\sigma = 0.75$ &  $\bm{0.128}$ & $0.198$ \\ 
&$\sigma = 1$ & $\bm{0.218}$ & $0.326$ \\ \hline
$n = 20, p = 101$&$\sigma = 0.5$&  $\bm{0.089}$ & $0.145$ \\ 
&$\sigma = 0.75$ &  $\bm{0.118}$ & $0.197$ \\ 
&$\sigma = 1$ &$\bm{0.211}$ & $0.333$ \\ \hline
$n = 50, p = 51$&$\sigma = 0.5$ &$\bm{0.041}$ & $0.058$ \\ 
&$\sigma = 0.75$ & $\bm{0.059}$ & $0.078$ \\ 
&$\sigma = 1$ & $\bm{0.117}$ & $0.122$ \\  \hline
$n = 100, p = 101$&$\sigma = 0.5$&  $\bm{0.019}$ & $0.027$ \\ 
&$\sigma = 0.75$ &  $\bm{0.031}$ & $0.038$ \\ 
&$\sigma = 1$ &  $0.066$ & $\bm{0.062}$ \\
\bottomrule
\end{longtable}
\end{center}

\subsection{Nonparametric smoothing model}\label{se:6.7}

In this section, we apply the factor-augmented nonparametric smoothing model introduced in Section~\ref{se:nonparam} to simulated data and compare the results with using nonparametric smoothing models without the factor component. 

We generate simulated data $Y_{ij}$, where $i = 1,\dots, n$ and $j = 1,\dots, p$ from the following model:
\begin{align*}
Y_{ij} &= \mathcal{X}_i(u_j) + \eta_{ij} + \epsilon_{ji}\\
&= \sum_{k=1}^{9} c_{ik} \phi_k(u_j) + \sum_{k=1}^4\lambda_{ik} F_{kj} + \epsilon_{ji},
\end{align*}
where $\phi_k(u)$ are Fourier basis functions and the smoothing coefficients $c_{ik}$ are generated from $\mathcal{N}(0, 1.5^2)$. The factors $F_{kj}$ follow $\mathcal{N}(0, 0.5^2)$ and the factor loadings $(\lambda_{i1}, \lambda_{i2}, \lambda_{i3}, \lambda_{i4})^\top \sim \mathcal{N}(\bm{\mu}, \bm{\Sigma})$, where $\bm{\Sigma}$ is a 4 by 4 covariance matrix. The random error terms $\epsilon_{ji}$ follow $\mathcal{N}(0, 0.5^2)$. We set the multivariate mean term $\bm{\mu} = \bm{0}$ and variance $\bm{\Sigma} = \sigma^2 \bm{I}_4$. We adjust the value of $\sigma^2$ to control the signal-to-noise ratio. When $\sigma^2$ is large, the signal-to-noise level is low, and when $\sigma^2$ is small, the signal-to-noise level is high.

\subsubsection*{Smoothing spline}

We use order 4 B-spline basis with knots at every data point. With data of dimension $p$, we use $p+2$ basis functions. The tuning parameter $\lambda$ is selected by the mean generalized cross-validation~\eqref{eq:53} at each iteration step. The covariance estimate is calculated using~\eqref{eq:11}. Table~\ref{tab:4} presents the results.

\begin{center}
\tabcolsep 0.2in
\begin{longtable}{@{}llcccc@{}}
\caption{Using smoothing splines: The aMSE of the estimated functions and the MSE of the two covariance estimators with different sample sizes and dimensions. The adjustment of $\sigma^2$ value is used to control the signal-to-noise ratio.}\label{tab:4} \\
\toprule
& & \multicolumn{2}{c}{aMSE} &  \multicolumn{2}{c}{ MSE} \\
Dimension & $\sigma$  & FASM & Smoothing & FASM & Sample covariance\\
\midrule 
\endfirsthead
\toprule
& & \multicolumn{2}{c}{aMSE} &  \multicolumn{2}{c}{MSE} \\
Dimension & Size of $\eta$ & FASM & Smoothing & FASM & Sample covariance\\
\hline 
\endhead
\midrule \multicolumn{6}{r}{{Continued on next page}} \\
\endfoot
\endlastfoot
$n = 20, p = 51$ &$\sigma=0.5$ & $0.154$ & $\bm{0.142}$ & $\bm{1.535}$ & $1.882$ \\ 
&$\sigma=0.75$ & $\bm{0.212}$ & $0.219$ & $\bm{1.485}$ & $1.931$ \\ 
&$\sigma=1$& $\bm{0.293}$ & $0.317$ & $\bm{1.649}$ & $2.336$ \\ \midrule
$n = 20, p = 101$ &$\sigma=0.5$ & $\bm{0.076}$ & $\bm{0.076}$ & $\bm{1.479}$ & $1.814$ \\ 
&$\sigma=0.75$ & $\bm{0.112}$ & $0.121$ & $\bm{1.593}$ & $2.086$ \\ 
&$\sigma=1$& $\bm{0.159}$ & $0.175$ & $\bm{1.587}$ & $2.268$ \\ \midrule
$n = 50, p = 51$&$\sigma=0.5$ & $\bm{0.138}$ & $0.142$ & $\bm{0.612}$ & $0.713$ \\ 
&$\sigma=0.75$ & $\bm{0.192}$ & $0.213$ & $\bm{0.630}$ & $0.765$ \\ 
&$\sigma=1$& $\bm{0.270}$ & $0.313$ & $\bm{0.690}$ & $0.869$ \\ \midrule
$n = 100, p = 101$&$\sigma=0.5$ & $\bm{0.070}$ & $0.075$ & $\bm{0.298}$ & $0.349$ \\ 
&$\sigma=0.75$ & $\bm{0.104}$ & $0.119$ & $\bm{0.309}$ & $0.377$ \\ 
&$\sigma=1 $& $\bm{0.153}$ & $0.177$ & $\bm{0.352}$ & $0.429$ \\ 
\bottomrule
\end{longtable}
\end{center}

\subsection{Misidentification of the basis function}\label{se:6.5}

We elaborate on the example presented in Section~\ref{se:2.2}. We generate data from 
\begin{equation}\label{eq:57}
Y_{ij} = \sum_{k=1}^7 c_{ik}\phi_k(u_j) + \epsilon_{ji}, \quad i = 1,\dots, n,\  j = 1,\dots, p,
\end{equation}
where $\{\phi_k(u), k=1,\dots,7\}$ are a set of Fourier basis functions. The first Fourier basis function $\phi_1(u)$ is the constant function; the remainder are sine and cosine pairs with integer multiples of the base period. We generate the Fourier functions with doubled frequencies in the second half to simulate the change in the basis functions.  In particular, when $u\in [0, 0.5], \ \phi_k(u) = 2\sin (k\pi u),$ for $k = 2, 4, 6,$ and $\phi_k(u) = 2\cos [(k-1)\pi u],$ for $k = 3, 5, 7$, and when $u\in (0.5, 1], \ \phi_k(u) = 2\sin (2k\pi u),$ for $k = 2, 4, 6,$ and $\phi_k(u) = 2\cos [2(k-1)\pi u],$ for $k = 3, 5, 7$. The coefficients $c_{ik}$ are generated from the normal distribution with mean 0 and variance $0.5^2$. The error terms are also drawn from the normal distribution with mean 0 and variance $0.5^2$. The generated $Y_{ij}$ are shown in Figure~\ref{fig:9a}. It can be seen that the data exhibit more variation in the second half of the interval.

Suppose we were unaware of the change in the frequencies of the basis functions and used the bases in the first half to fit the data on the whole interval. The smoothing model residuals, shown in Figure~\ref{fig:9b}, are large in the second half. When the frequency of the basis functions is misidentified, a smoothing model with the wrong set of bases is inadequate. We conduct principal component analysis on the smoothing residuals; the eigenvalues in descending order are shown in Figure~\ref{fig:10a}. The residuals preserve a spiked structure, where six common factors can explain most of the variation.

We also apply FASM to the same data with the wrong set of basis functions. According to the eigenvalue scree plot, we retain six factors in the model ($r = 6$). The resulting residuals are shown in Figure~\ref{fig:10b}. The large residuals in the second part of Figure~\ref{fig:9} (b) are removed. When the basis functions are misidentified, the FASM serves as a remedy.
 
\begin{figure}[!htbp]
\centering
\subcaptionbox{Spikes of the smoothing residuals\label{fig:10a}}
{\includegraphics[width = 8.65cm]{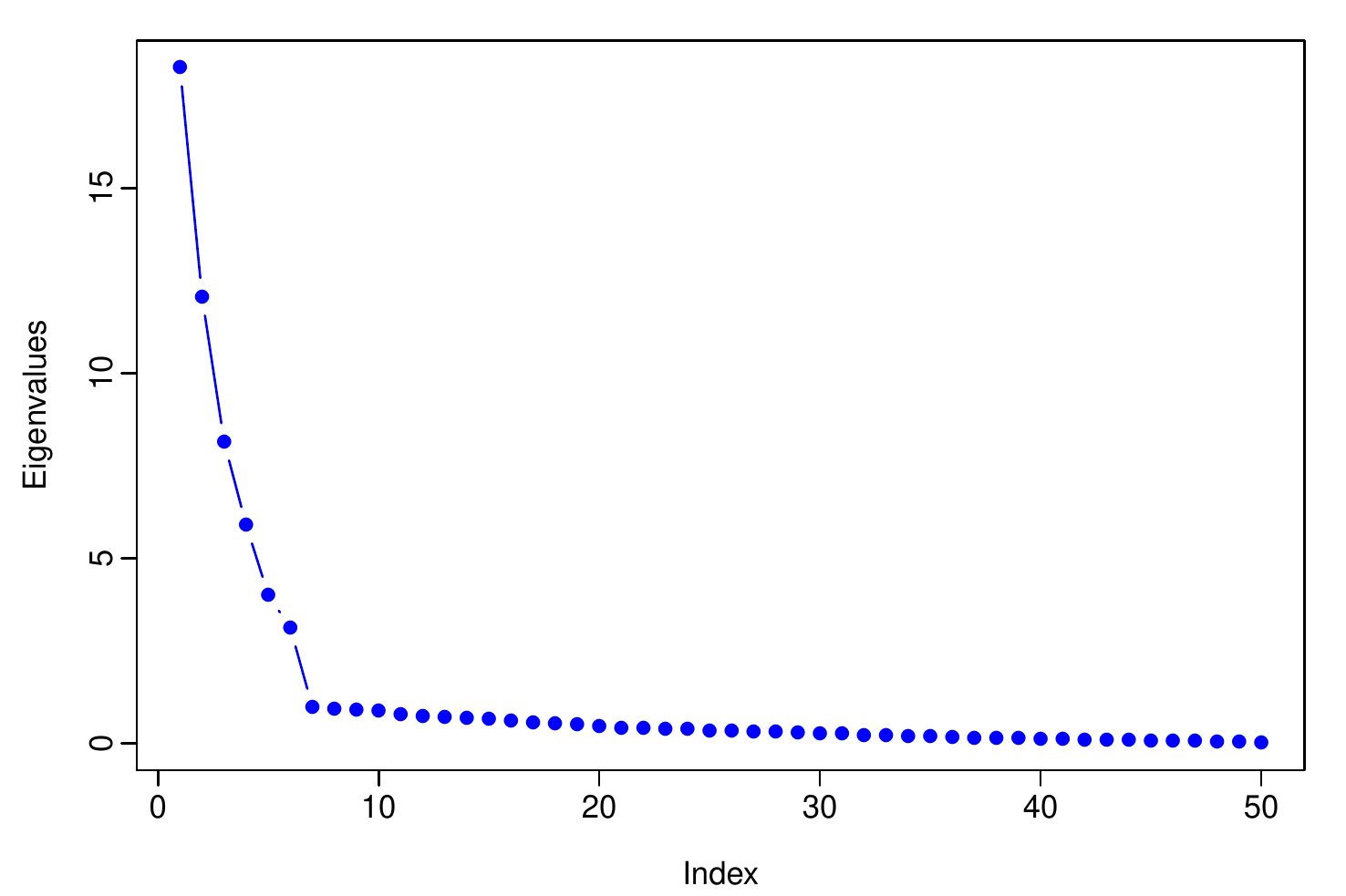}}
\subcaptionbox{Residuals of FASM\label{fig:10b}}
{\includegraphics[width = 8.65cm]{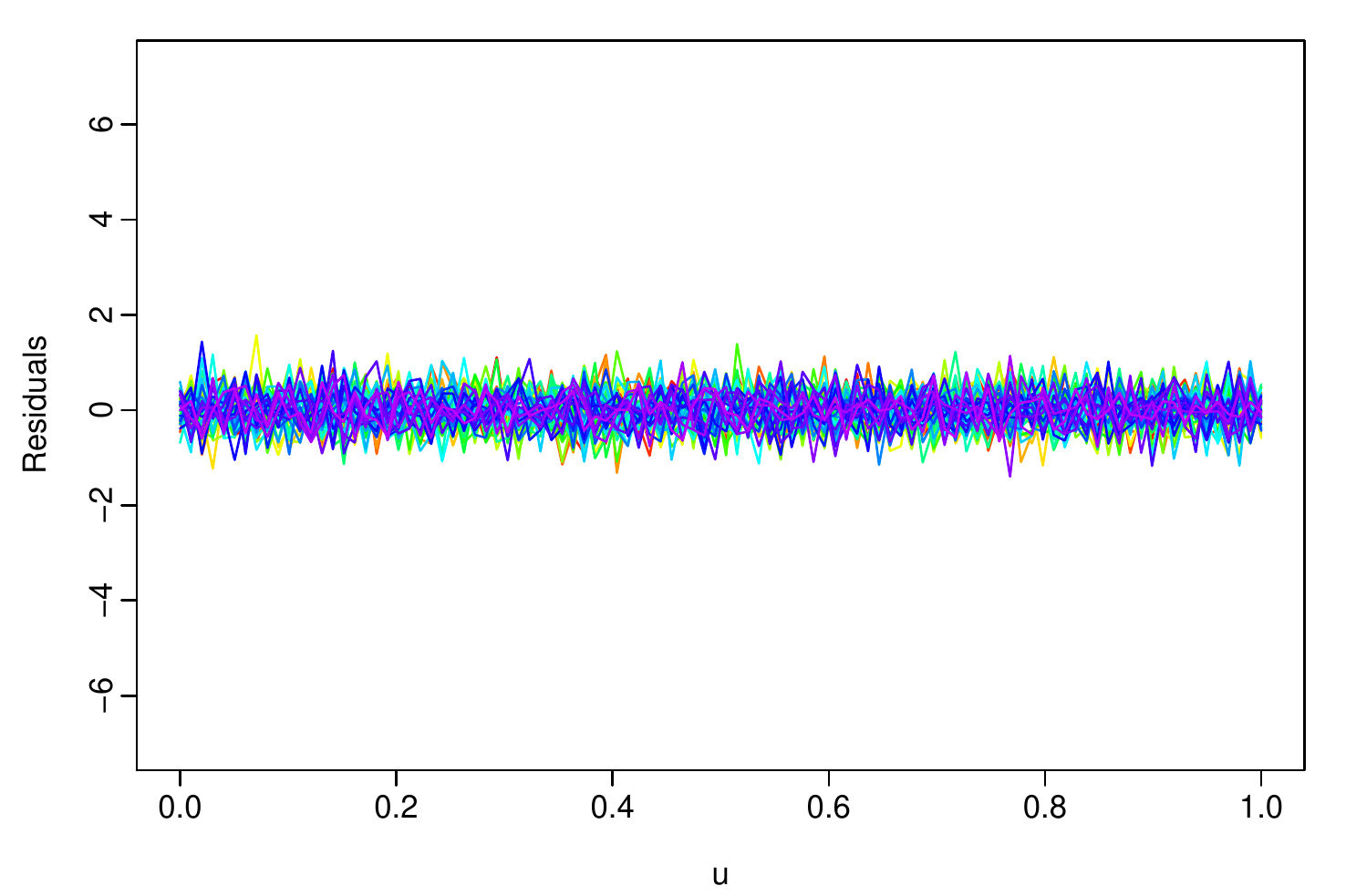}}
\caption{Applying FASM to the data generated by~\eqref{eq:57}.}\label{fig:10}
\end{figure}

\subsection{Functional data with step jumps}\label{se:6.6}

We study the case where the functional data exhibit a dramatic change in the mean level within a small window. 
We generate data from the following model
\begin{equation*}
Y_{ij} = \mu(u_j) + \sum_{k=1}^7 c_{ik}\phi_k(u_j) + \epsilon_{ji}, \quad i = 1,\dots, n,\quad j = 1,\dots, p,
\end{equation*}
where the basis functions $\phi_k(u)$ are order 4 B-spline bases. The coefficients $c_{ik}$ come from $\mathcal{N}(0, 1.5^2)$ and the error terms from $\mathcal{N}(0, 0.5^2)$. The mean function $\mu(u)$ is generated by a linear combination of 25 B-spline basis functions. Figure~\ref{fig:8} shows an example of the mean function- there is a sharp increase in the mean function at around $u = 0.5$.

The change in the mean level happens at $u = 0.5$, and $\delta$ denotes the amount of change. Figure~\ref{fig:1} is generated using $\delta = 2$. Figure~\ref{fig:3} compares the residuals from the smoothing model and the FASM. With the smoothing model, the residuals around the jump are large. In contrast, our model explains the large residuals around the structural break very well. In the aspect of model selection, we consider the trade-off between model fit and model flexibility. We first define a notion of flexibility for a fitted model with the degrees of freedom. We use the same concept as in most textbooks that the degrees of freedom measure the number of parameters estimated from the data required to define the model. The degrees of freedom for the smoothing model are calculated by~\eqref{eq:8} of the last step of convergence. The degrees of freedom for the FASM is 
\begin{equation*}
\text{df} = \text{trace}\left[\bm{\Phi}(\bm{\Phi}^\top \bm{M}_{{\widehat{\bm{A}}}^{(t)}}\bm{\Phi} + \alpha \bm{R})^{-1}\bm{\Phi}^\top \bm{M}_{{\widehat{\bm{A}}}^{(t)}}\right] + r,
\end{equation*}
where $r$ is the number of factors retained in the fitted model, the larger the degrees of freedom, the more flexible the fitted models are. To quantify the model fitting, we use 
\begin{equation*}
\text{RMSE} = \sqrt{\frac{1}{np}\sum_{i=1}^n\sum_{j=1}^p\left(Y_{ij}-\widehat{Y}_{ij}\right)^2},
\end{equation*}
where $\widehat{Y}_{ij} = \sum_{k=1}^K \widehat{c}_{ik}\phi_k(u_j) + \widehat{\eta}_{ij}$. In Table~\ref{tab:2}, we show the simulation results by changing the value of the mean shift $\delta$. The RMSE of the FASM is always smaller than the compared model. The degrees of freedom when $\delta =1$ are similar. When $\delta$ increases, the degrees of freedom is smaller for the proposed model. Therefore, we achieve better fit but less flexibility with the FASM.

\begin{figure}[!htbp]
\centering
{\includegraphics[width = 8.65cm]{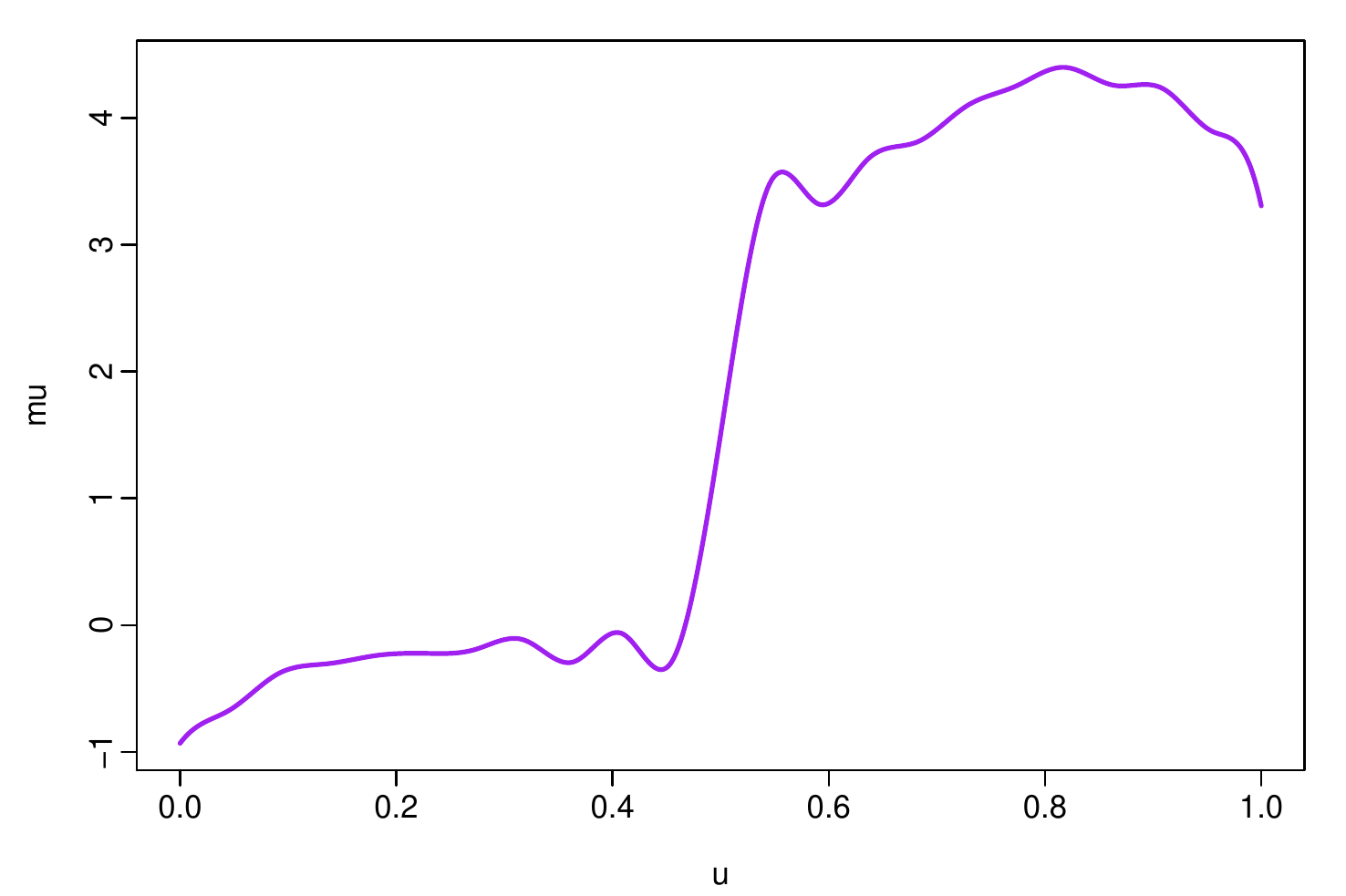}}
\caption{The mean function $\mu(u)$.}\label{fig:8}
\end{figure}

\begin{figure}[!htbp]
\centering
\subcaptionbox{Residuals from applying smoothing model}
{\includegraphics[width = 8.65cm]{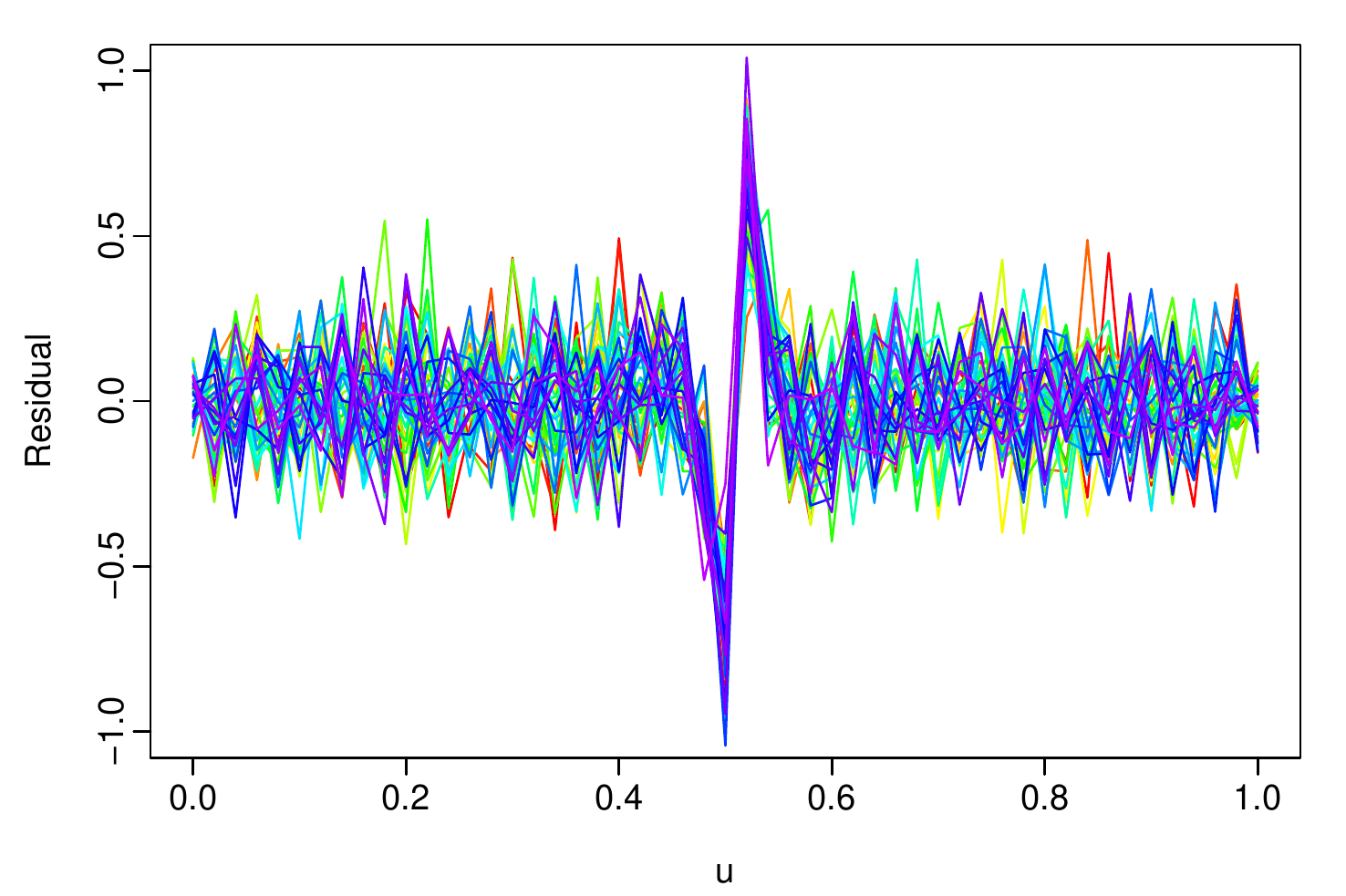}}
\subcaptionbox{Residuals from FASM}
{\includegraphics[width = 8.65cm]{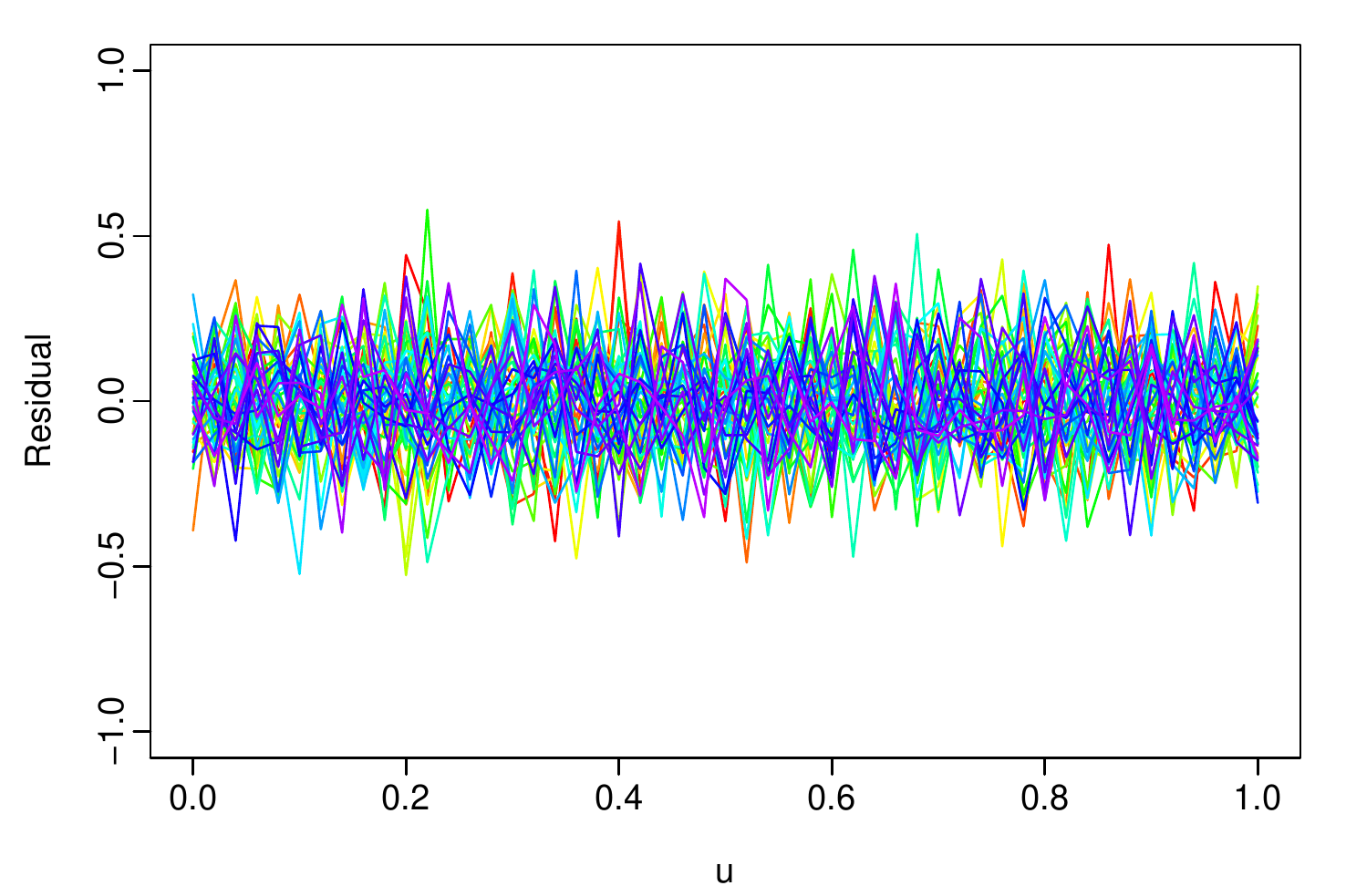}}
\caption{Residual plots of the two models.}\label{fig:3}
\end{figure}

\begin{table} 
\centering 
  \caption{The trade-off between model fitting and flexibility.}\label{tab:2} 
\begin{tabular}{@{\extracolsep{5pt}} ccccc} 
\toprule
& \multicolumn{2}{c}{RMSE}&\multicolumn{2}{c}{DF} \\
\hline
& Smoothing & Proposed & Smoothing & Proposed\\
$\delta = 1$&$0.2045$&$\bm{0.1631}$&$\bm{10.68}$&$11.15$\\
$\delta = 2$&$0.2063$&$\bm{0.1640}$&$17.59$&$\bm{11.03}$\\
$\delta = 3$&$0.3308$&$\bm{0.1647}$&$14.23$&$\bm{10.94}$\\
\hline
\end{tabular}
\end{table}

\section{Application to climatology}\label{se:7}

In this section, we apply the FASM to two real data sets. In Section~\ref{se:7.1}, we compare Canadian yearly temperature and precipitation data and demonstrate the advantages of the FASM when the measurement error is large. In Section~\ref{se:7.2}, we analyze Australian daily temperature data and display the necessity of including the factor model because of the spike structure of the data.

\subsection{Canadian weather data}\label{se:7.1}

In Section~\ref{se:2.1}, we introduced Canadian weather data. Raw observations of daily temperature and precipitation data are presented in Figure~\ref{fig:2}. Since the true basis functions are unknown, we apply the FASM with nonparametric smoothing splines introduced in Section~\ref{se:nonparam} to these two datasets. 

We use order 4 B-spline basis functions with knots at every data point. Thus, when the number of data points is 365, we use 367 basis functions. The number of factors $r$ is chosen with the scree plot showing the fraction of variation explained. For temperature data, we presumed the measurement error is small. The resulting smoothed curves are shown in Figure~\ref{fig:4}. Compared with using the smoothing model introduced in Section~\ref{se:6.2}, the FASM generates similar results. This meets our expectation that our model should work the same as a simple smoothing model when measurement error does not exist. 

In Section~\ref{se:2.1}, we suspect large measurement errors are contained in the raw log precipitation data. We apply the two models to the log precipitation data; the resulting smoothed curves are presented in Figure~\ref{fig:5}. The plot on the right shows smoother curves, especially at the drop in the blue curve (the 'Victoria' Station) at around day 200. Looking at the residual plots in Figure~\ref{fig:6}, our model mainly explains some extreme residuals left out from solely applying the smoothing model. As in Section~\ref{se:5}, we also compare the RMSE and degrees of freedom of the two fitted models; they are 0.1933 and 14.41 for the smoothing model and 0.1659 and 12.71 respectively for the proposed model. Thus, in terms of model selection, our model performs better across both model fit and model simplicity.

\begin{figure}[!htbp]
\centering
\subcaptionbox{Smoothed temperature curves from basis smoothing with the penalty}
{\includegraphics[width = 8.65cm]{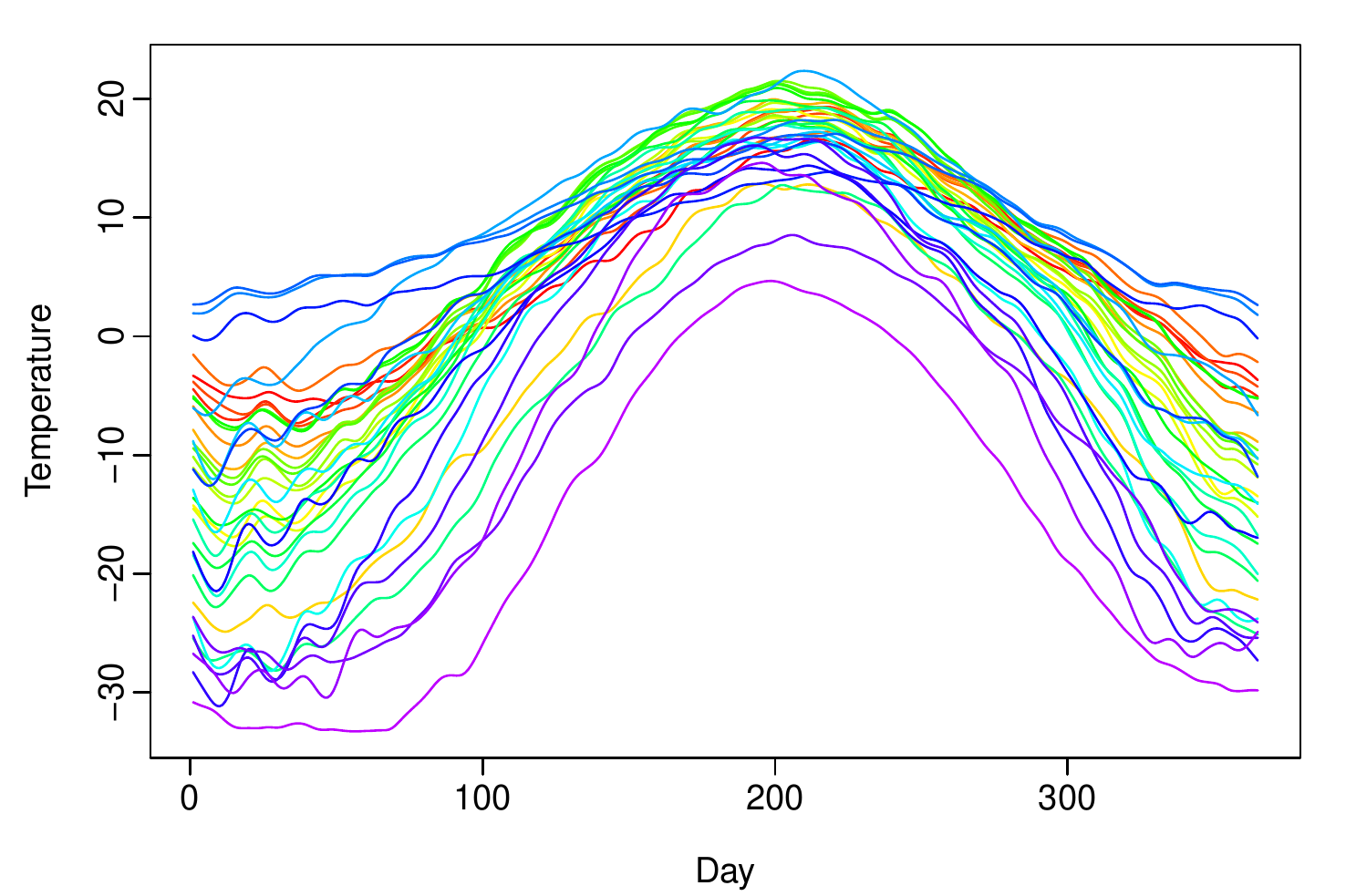}}
\subcaptionbox{Smoothed temperature curves from the FASM}
{\includegraphics[width = 8.65cm]{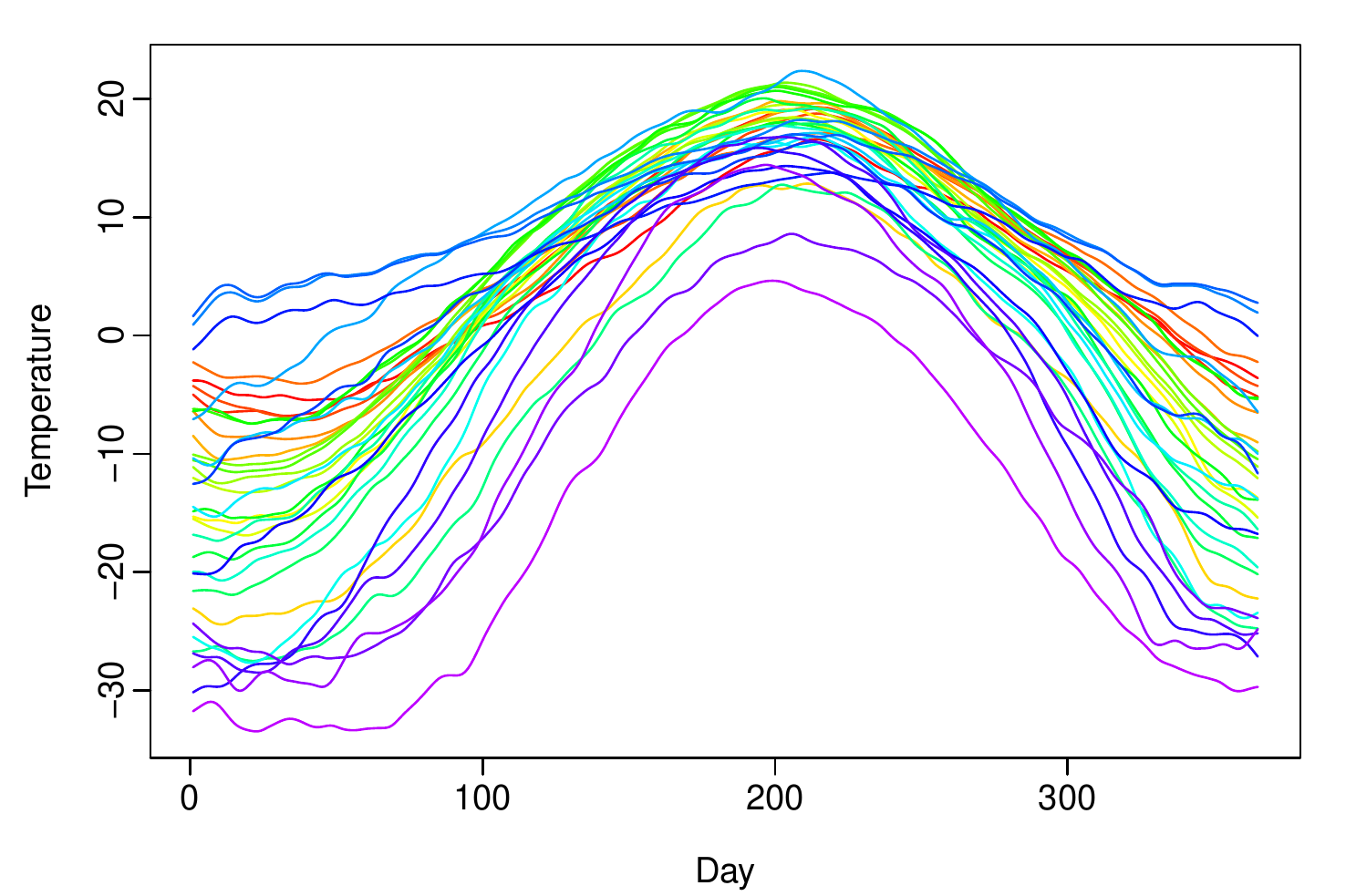}}
\caption{Comparison between the smoothed curves.}\label{fig:4}
\end{figure}

\begin{figure}[!htbp]
\centering
\subcaptionbox{Smoothed log precipitation curves using the smoothing model}
{\includegraphics[width = 8.65cm]{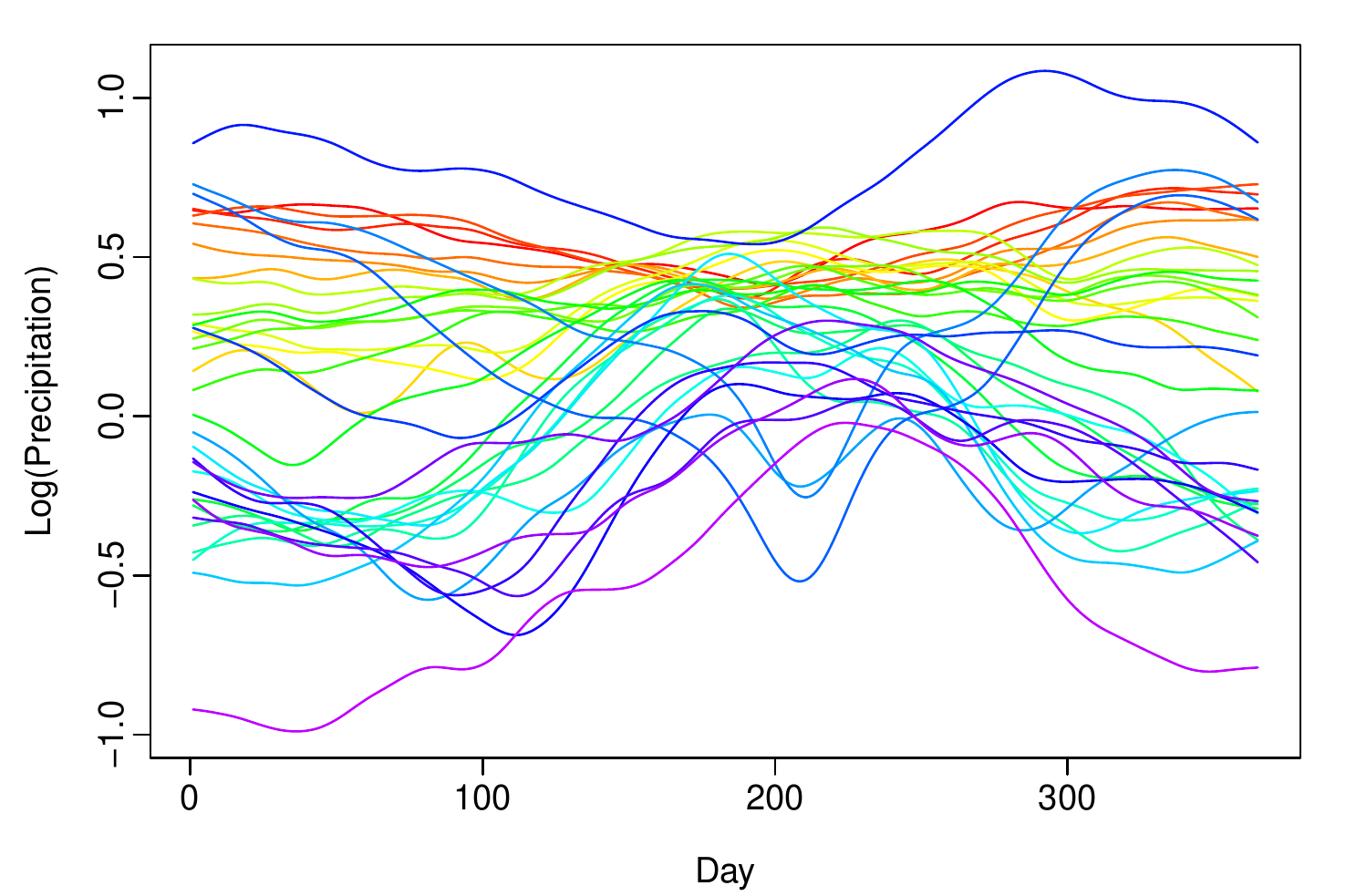}}
\subcaptionbox{Smoothed log precipitation curves using the FASM}
{\includegraphics[width = 8.65cm]{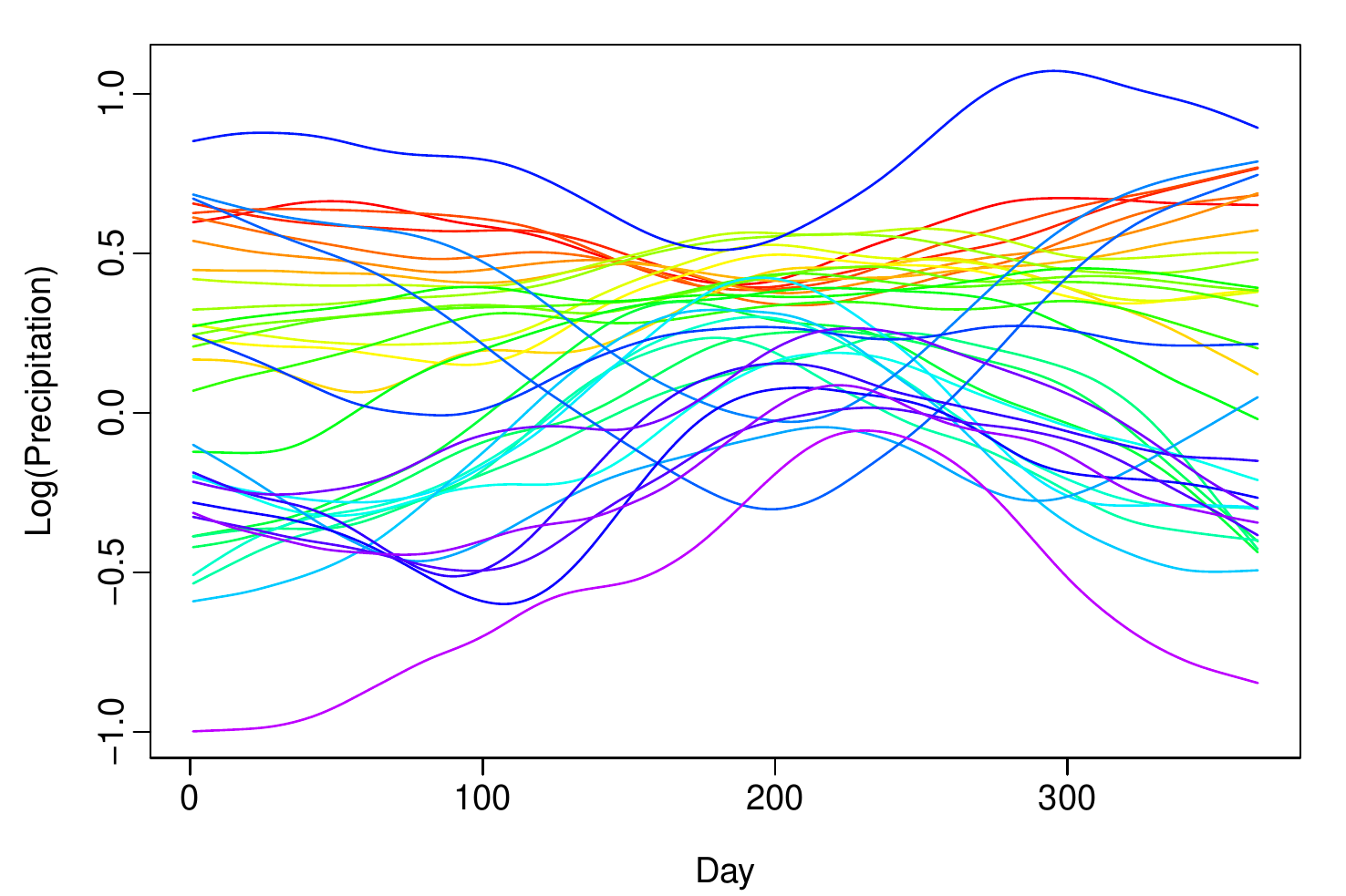}}
\caption{Comparison between the smoothed curves.}\label{fig:5}
\end{figure}

\begin{figure}[!htbp]
\centering
\subcaptionbox{Residuals from basis smoothing with the penalty}
{\includegraphics[width = 8.65cm]{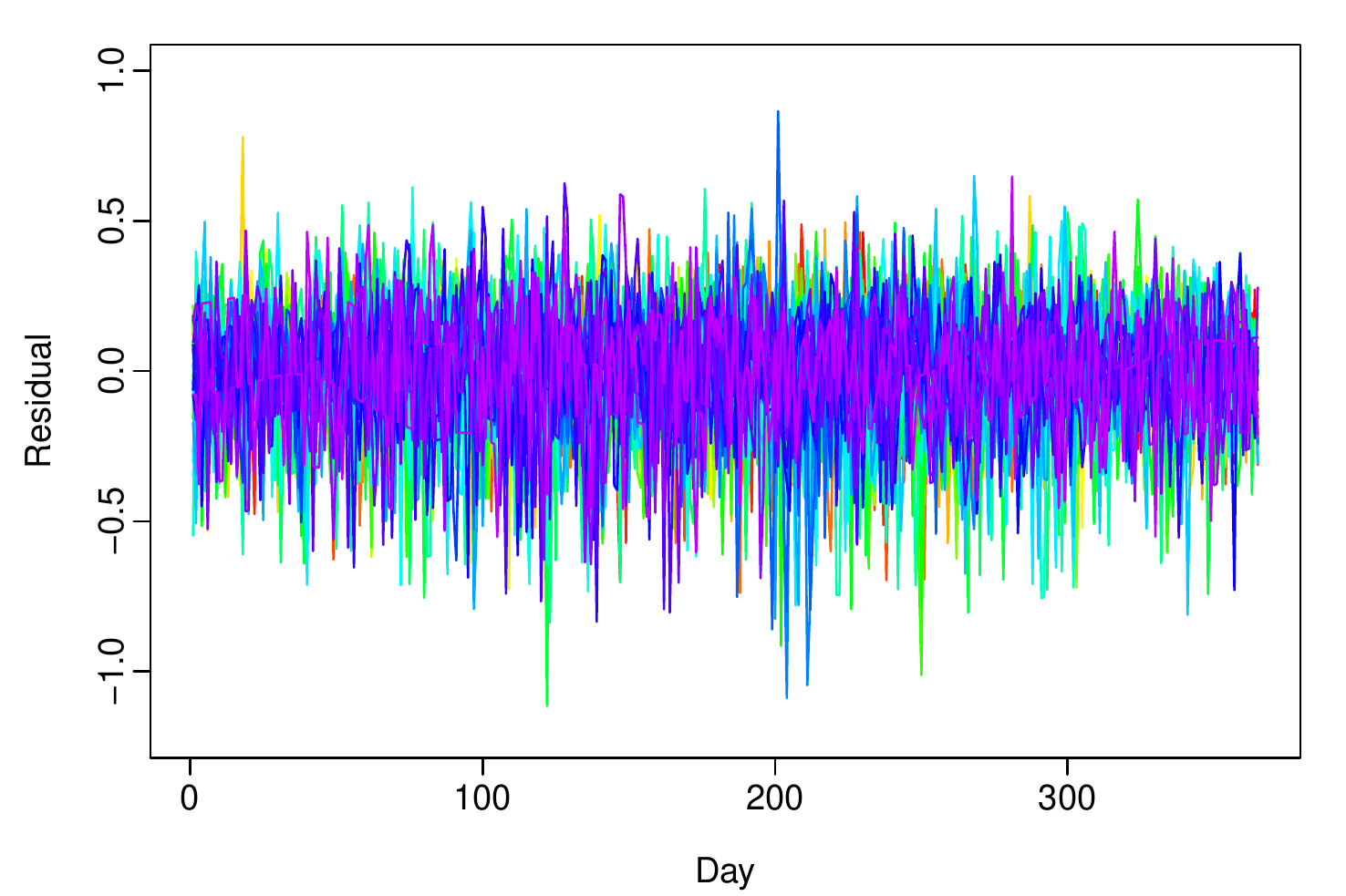}}
\subcaptionbox{Residuals from the FASM}
{\includegraphics[width = 8.65cm]{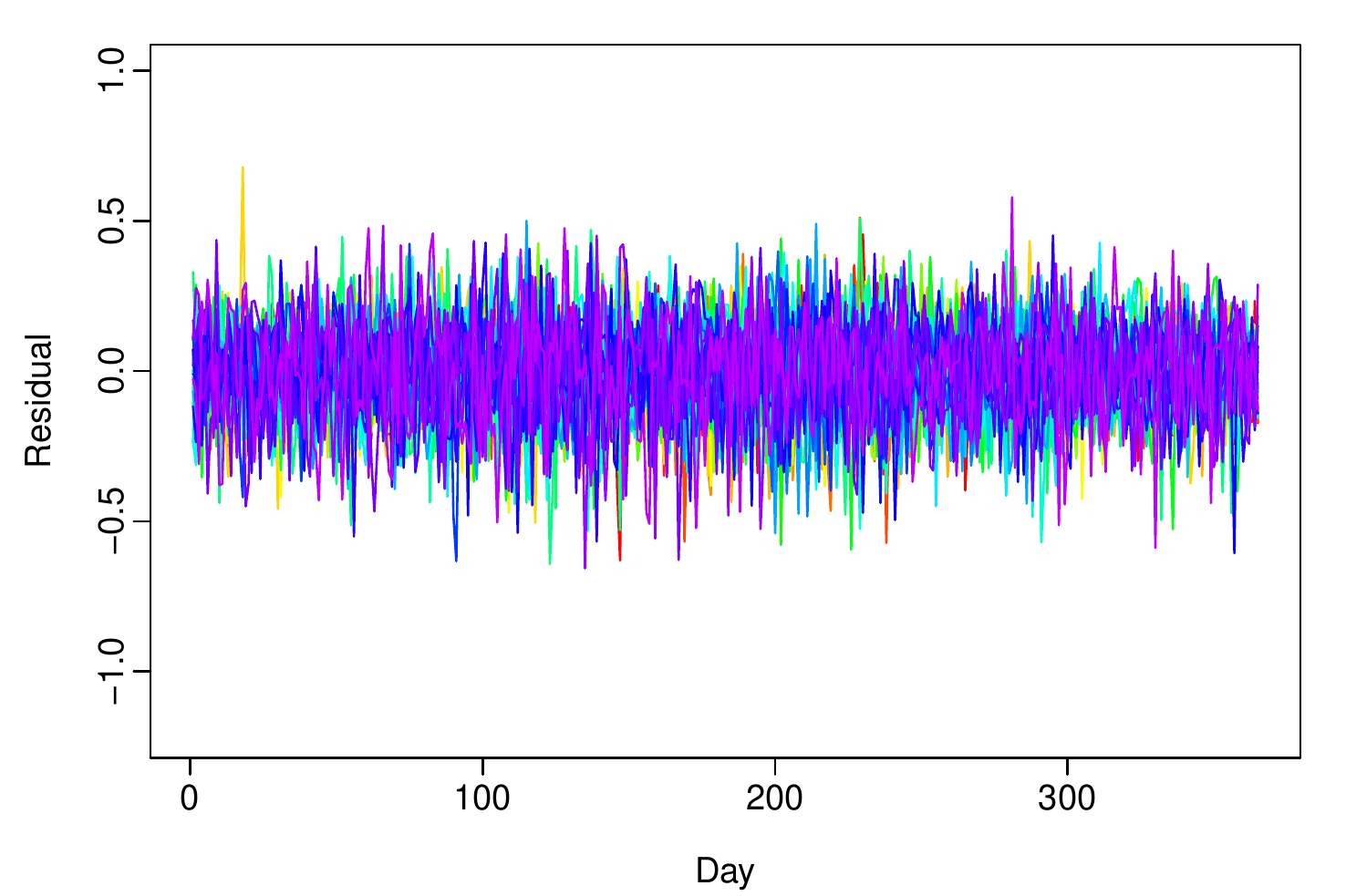}}
\caption{Comparison between the residuals.}\label{fig:6}
\end{figure}

\subsection{Australian temperature data}\label{se:7.2}

In this section, we consider Friday temperature data at Adelaide airport. We choose Adelaide because it tends to have the hottest temperature among Australia's big cities. Data from other weekdays exhibit similar features and are not shown here. The data are measured every half an hour from the year 1997 to 2007. The sample size $n$ is 508, and the number of discrete data points $p$ from each curve is 48. The plot of the raw data can be found in Figure~\ref{fig:7a}. It can be seen that the data are quite noisy, with extreme values in some of the curves due to large measurement errors.

We use the B-spline basis functions of order 4 with knots at every data point. A penalized smoothing model is fitted to the data, with the tuning parameter selected to minimize the mGCV value. The residuals are shown in Figure~\ref{fig:7b}. As can be seen, the smoothing model fails to capture the extreme values in the data. 

We check the residuals' spikiness in Figure~\ref{fig:7b} by conducting a principal component analysis. The eigenvalues in descending order are shown in Figure~\ref{fig:7c}. The first few eigenvalues are significantly larger than the rest. This means the residuals contain information captured by just a few factors, which calls for a further dimension reduction model on the residuals. 

As a comparison, the FASM is also applied to the data. The tuning parameter for the smoothing part is selected based on mGCV at each step of the iteration. The number of factors retained in the factor model component is five. The residuals are shown in Figure~\ref{fig:7d}. The extreme values are almost all removed from the remaining residuals. 

\begin{figure}[!htbp]
\centering
\subcaptionbox{Raw temperature data\label{fig:7a}}
{\includegraphics[width = 8.65cm]{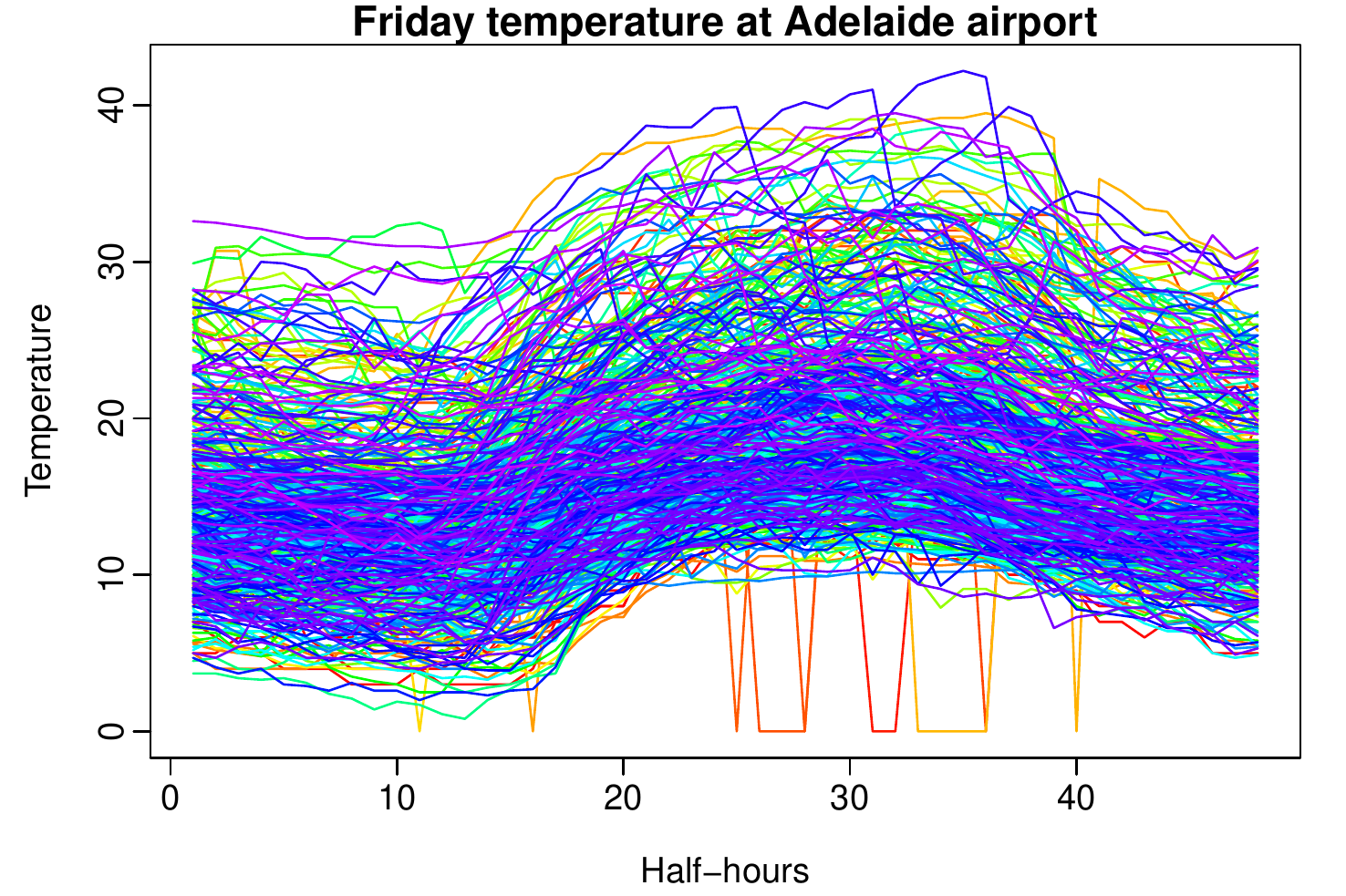}}
\subcaptionbox{Residuals of the smoothing model\label{fig:7b}}
{\includegraphics[width = 8.65cm]{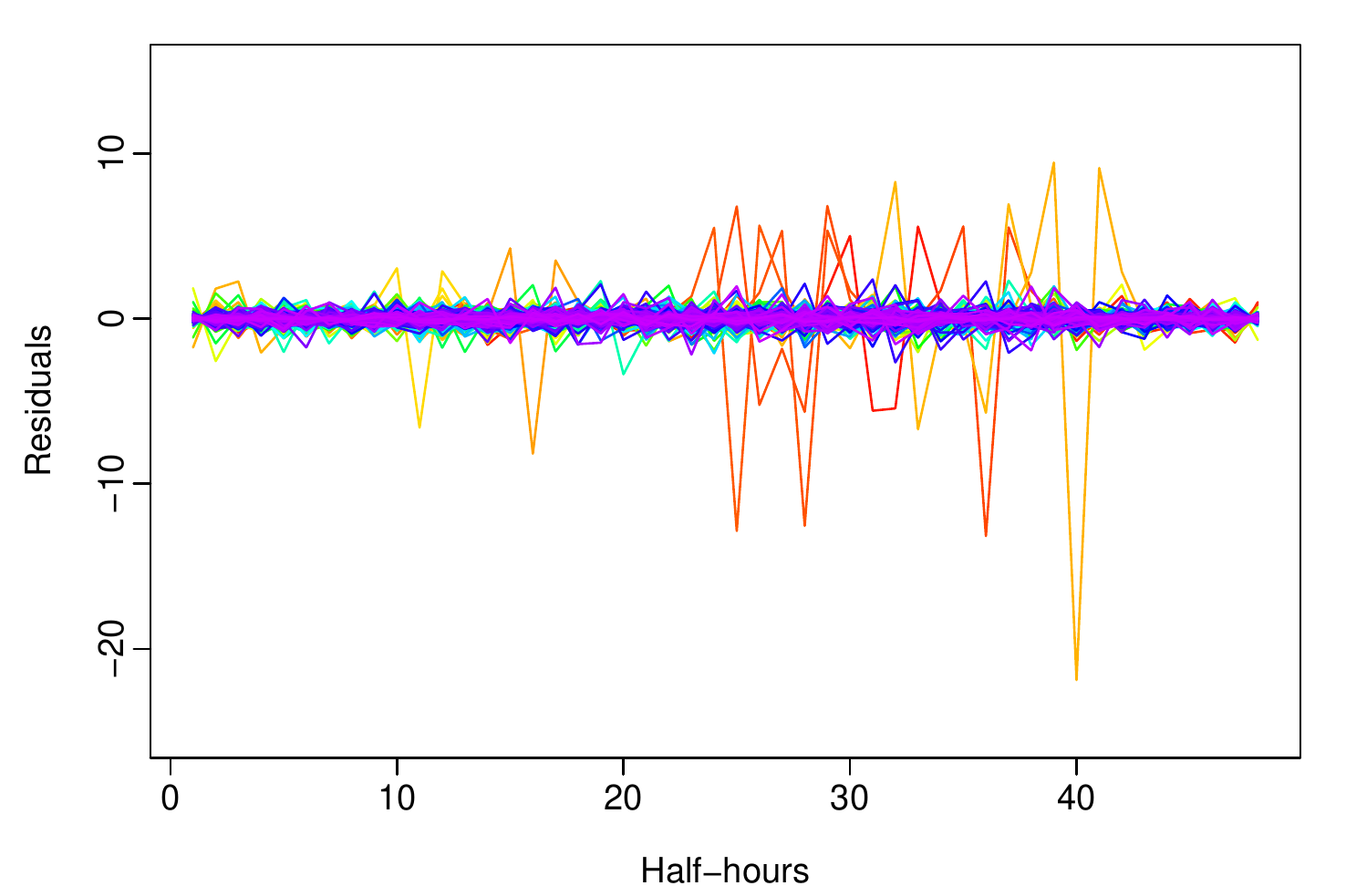}}
\subcaptionbox{Eigenvalues of the residuals in the plot (b)\label{fig:7c}}
{\includegraphics[width = 8.65cm]{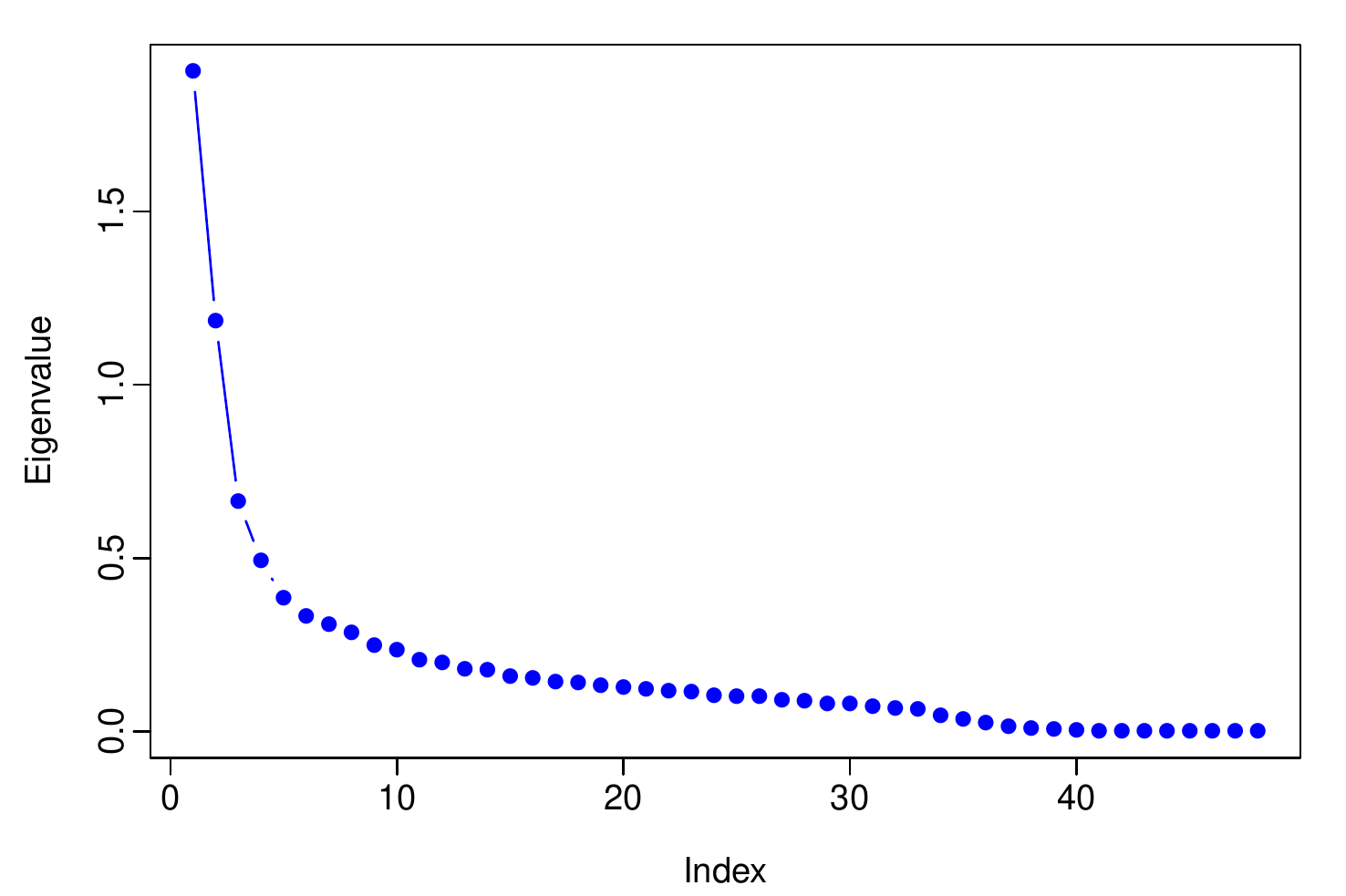}}
\subcaptionbox{Residuals of the FASM\label{fig:7d}}
{\includegraphics[width = 8.65cm]{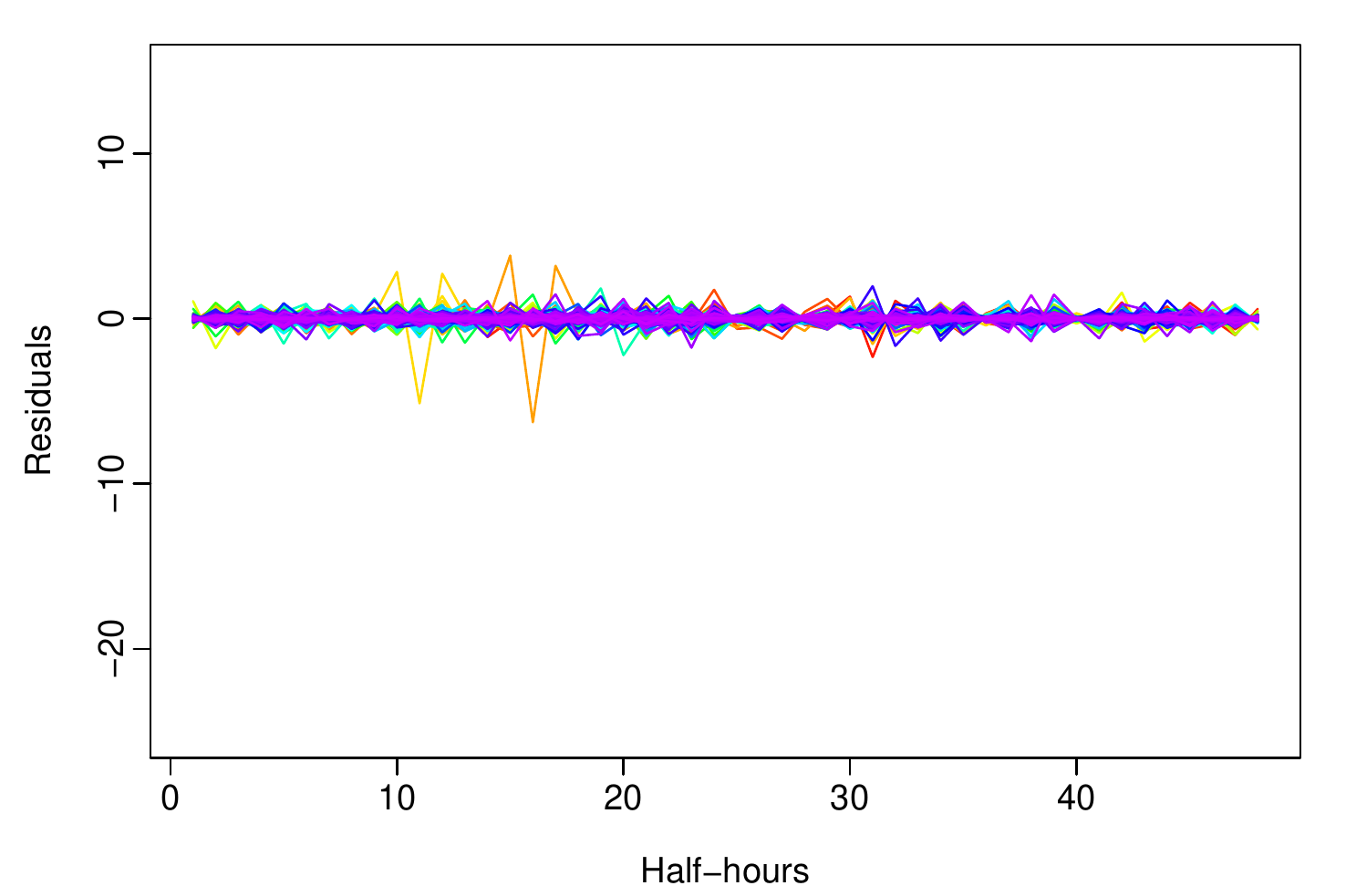}}
\caption{Half-hourly temperature data on Friday at Adelaide airport.}\label{fig:7}
\end{figure}

\section{Conclusion and future work}\label{se:8}

In this paper, we propose a factor-augmented smoothing model for functional data. We study raw functional data, which is a mixture of functional curves and high-dimensional errors. When measurement error is informative, a smoothing model alone is inadequate to capture data variation and recover the signal functional component. The proposed model incorporates a factor structure into the smoothing model to further explain the large residuals. We propose a numerical iteration approach to simultaneously obtain estimates in the smoothing model and the factor model. The asymptotic distribution of the estimators is given with proof. Our model also serves as a dimension reduction method on functional and high-dimensional mixture data, easing the path to making inferences. We provide an example of the construction of a covariance estimator for the raw data. Further, we show that the model can be applied in situations where there is misidentification in the data structure, two examples of which are the misspeficications of smoothing basis functions and the neglect of the step jumps in the mean level of the functions. The advantages of the proposed model are demonstrated in extensive simulation studies. We also show how our model performs via applications to Canadian weather data and Australian temperature data.

The proposed model is a good start point for modeling complex data structures. The data we deal with are a mixture of smooth functional curves and high-dimensional measurement error. The factor model component can be regarded as a ”boosting” component that improves model accuracy. Extending from this idea, the model can be applied to other data structures. One example is the data that contain change points. Change point is a popular problem in many statistics and econometric topics and has been extensively studies in the multivariate setting. Previous literature on change point in functional data include \cite{BG09, HK10} and \cite{HK15}. It is shown in simulation examples that our model can be used for modeling functional data with change point in the cross-sectional direction. The model can be modified to account for change point also in the sample direction. Further research can be conducted along this line.

\newpage
\bibliographystyle{agsm}
\bibliography{FAST}

\newpage
\clearpage


%
\def\spacingset#1{\renewcommand{\baselinestretch}
{#1}\small\normalsize} \spacingset{1}

\if0\blind
{
  \title{\bf Supplement to ``Factor-augmented Smoothing Model for Functional Data"}
\author{Yuan Gao\thanks{Postal address: Research School of Finance, Actuarial Studies and Statistics, Level 4, Building 26C, Kingsley St, Australian National University, Canberra, ACT 2601, Australia; Email: yuan.gao@anu.edu.au}
 \hspace{.2cm}\\
Research School of Finance, Actuarial Studies and Statistics\\ 
Australian National University \\ 
\\
  Han Lin Shang
  \hspace{.2cm}\\
    Department of Actuarial Studies and Business Analytics \\
    Macquarie University \\
 \\ 
 Yanrong Yang \\  
Research School of Finance, Actuarial Studies and Statistics\\ 
Australian National University \\ 
 }
 \emptythanks
  \maketitle
} \fi

\pagenumbering{arabic}
\spacingset{1.48}

This section contains the proofs for the theorems in the main article. In Appendix A, we provide the proofs for the theorems in Section~\ref{se:4}. In Appendix B,  we include the results of a proposition and its proof. In Appendix C, the lemmas used for the proofs in Appendix A and B are stated as well as their proofs.

\subsection*{Appendix A}\label{app:a}

Theorem~\ref{th:2} is the main result of the asymptotic theories, and the proof of it is lengthy. Thus, we include in the following the outlines for the proof before we show the details.

\subsubsection*{Outlines for proof of Theorem 2}

In Theorem~\ref{th:2}, we find the order of convergence of the estimated coefficient matrix $\widehat{\bm{C}}$. The difference between $\widehat{\bm{C}}$ and $\bm{C}^0$ could be written into three terms: 
\begin{equation}
\frac{1}{p}\left(\bm{\Phi}^\top \bm{M}_{\widehat{\bm{A}}}\bm{\Phi} + \alpha \bm{R} \right)(\widehat{\bm{C}}-\bm{C}^0) = \frac{1}{p}\alpha \bm{R}\bm{C}^0 + \frac{1}{p}\bm{\Phi}^\top \bm{M}_{\widehat{\bm{A}}}\bm{A}^0\bm{F}^\top + \frac{1}{p}\bm{\Phi}^\top \bm{M}_{\widehat{\bm{A}}}\bm{E}.
\end{equation}
The term $\frac{1}{p}\left(\bm{\Phi}^\top \bm{M}_{\widehat{\bm{A}}}\bm{\Phi} + \alpha \bm{R} \right)$ is $O_p(1)$. The first term on the right-hand side of~\eqref{eq:23} comes from the penalty, and the order can be found easily from Assumption~\ref{as:6}. The third term contains the random error matrix $\bm{E}$, and the order can be found using the result in Lemma~\ref{le:5}. The second term is the most complicated one, and we show in the following proof that it could be further broken down into eight terms. We find the order of each of the eight terms using the lemmas in Appendix C. Most of the terms can be shown to be $o_p\left(\left\|\bm{C}^0-\widehat{\bm{C}}\right\|\right)$ and thus can be omitted. Combining the remaining terms, we arrive at the result
\begin{align}\label{eq:13}
 \left(\widehat{\bm{C}}-\bm{C}^0\right)\bm{M}_{\bm{F}} =& \bm{Q}^{-1}\left(\widehat{\bm{A}}\right)\frac{1}{p}\alpha \bm{R} \bm{C}^0 + \bm{Q}^{-1}\left(\widehat{\bm{A}}\right)\frac{1}{p}\bm{\Phi}^\top \bm{M}_{\widehat{\bm{A}}}\bm{E}\bm{M}_{\bm{F}} \nonumber\\
&+ O_p\left(\frac{1}{\min (n,p)}\right) + O_p\left(\frac{\sqrt{n}}{p\sqrt{p}} \right) + O_p\left(\frac{1}{\sqrt{np}}\right),
\end{align}
where matrix $\bm{Q}$ and $\bm{M}_{\bm{F}}$ are
\begin{equation*}
\bm{Q}\left(\bm{\widehat{\bm{A}}}\right) = \frac{1}{p}\bm{\Phi}^\top\bm{M}_{\widehat{\bm{A}}}\bm{\Phi} \qquad \bm{M}_{\bm{F}} = \bm{I}_n - \bm{F}\left(\bm{F}^\top\bm{F}\right)^{-1}\bm{F}^\top. 
\end{equation*}
The first term on the right-hand side of~\eqref{eq:13} is $O_p(1)$ using the assumption on the tuning parameter $\alpha$. We also show the second term is $O_p(1)$ using results from the lemmas. When $n$ and $p$ are of the same order, we are able to show the norm of $ \left(\widehat{\bm{C}}-\bm{C}^0\right)$ projected on the matrix $\bm{M}_{\bm{F}}$ is $O_p(1)$.

Next begins the formal proofs.

\subsubsection*{Proof of Theorem~\ref{th:1}}

\begin{proof}
The concentrated objective function defined in Section~\ref{se:4.2} is 
\begin{equation*}
S_{np} (\bm{c}_i, \bm{A}) = \frac{1}{np}\sum_{i=1}^n \left[ (\bm{Y}_i-\bm{\Phi}\bm{c}_i)^\top \bm{M}_{\bm{A}} (\bm{Y}_i-\bm{\Phi}\bm{c}_i) + \alpha \bm{c}_i^\top \bm{R}\bm{c}_i \right] - \frac{1}{np}\sum_{i=1}^n \bm{\epsilon}_i^\top \bm{M}_{\bm{A}^0}\bm{\epsilon}_i.
\end{equation*}
Assume $\bm{c}_i^0 = \bm{0}$ for simplicity without loss of generality. From $\bm{Y_i} = \bm{\Phi} \bm{c}_i^0 + \bm{A}^0 \bm{f}_i + \bm{\epsilon}_i = \bm{A}^0 \bm{f}_i + \bm{\epsilon}_i$, we have
\begin{align*}
S_{np} (\bm{c}_i, \bm{A}) =& \frac{1}{np}\sum_{i=1}^n \left[ (\bm{A}^0\bm{f}_i + \bm{\epsilon}_i- \bm{\Phi}\bm{c}_i)^\top \bm{M}_{\bm{A}} (\bm{A}^0\bm{f}_i + \bm{\epsilon}_i-\bm{\Phi}\bm{c}_i) + \alpha \bm{c}_i^\top \bm{R}\bm{c}_i \right] - \frac{1}{np}\sum_{i=1}^n \bm{\epsilon}_i^\top \bm{M}_{\bm{A}^0}\bm{\epsilon}_i\\
=& \frac{1}{np}\sum_{i=1}^n  \bm{f}_i^\top {\bm{A}^0}^\top \bm{M}_{\bm{A}} \bm{A}^0\bm{f}_i +  \frac{1}{np}\sum_{i=1}^n  \bm{c}_i^\top {\bm{\Phi}}^\top \bm{M}_{\bm{A}} \bm{\Phi}\bm{c}_i - \frac{2}{np}\sum_{i=1}^n  \bm{f}_i^\top {\bm{A}^0}^\top \bm{M}_{\bm{A}} \bm{\Phi}\bm{c}_i  \\
&+  \frac{2}{np}\sum_{i=1}^n  \bm{\epsilon}_i^\top \bm{M}_{\bm{A}} \bm{A}^0\bm{f}_i - 
 \frac{2}{np}\sum_{i=1}^n  \bm{\epsilon}_i^\top \bm{M}_{\bm{A}} \bm{\Phi}\bm{c}_i + 
  \frac{1}{np}\sum_{i=1}^n  \bm{\epsilon}_i^\top (\bm{M}_{\bm{A}}-\bm{M}_{\bm{A}^0}) \bm{\epsilon}_i + \frac{\alpha}{np}\sum_{i=1}^n  \bm{c}_i^\top \bm{R}\bm{c}_i. 
\end{align*}

Denote the first three terms in the above equation as
\begin{equation*}
\widetilde{S}_{np}(\bm{c}_i, \bm{A}) = \frac{1}{np}\sum_{i=1}^n  \bm{f}_i^\top {\bm{A}^0}^\top \bm{M}_{\bm{A}} \bm{A}^0\bm{f}_i +  \frac{1}{np}\sum_{i=1}^n  \bm{c}_i^\top {\bm{\Phi}}^\top \bm{M}_{\bm{A}} \bm{\Phi}\bm{c}_i - \frac{2}{np}\sum_{i=1}^n  \bm{f}_i^\top {\bm{A}^0}^\top \bm{M}_{\bm{A}} \bm{\Phi}\bm{c}_i.
\end{equation*}
Then by Lemma~\ref{le:9},
\begin{equation*}
S_{np}(\bm{c}_i, \bm{A})  = \widetilde{S}_{np}(\bm{c}_i, \bm{A}) + o_p(1).
\end{equation*}
It is easy to see that $\widetilde{S}_{np}(\bm{c}_i^0 = \bm{0}, \bm{A}^0\bm{H} ) = 0 $ for any $r\times r$ invertible $\bm{H}$, because $\bm{M}_{\bm{A}^0\bm{H}} = \bm{M}_{\bm{A}^0} $ and $\bm{M}_{\bm{A}^0}\bm{A}^0=\bm{0} $.

Here we define two matrix operations before further transformations on $\tilde{S}_{np}(\bm{c}_i,\bm{A})$. For an $m\times n $ matrix $\bm{U}$ and a $p \times q$ matrix $\bm{V}$, the vectorization of $\bm{U}$ is defined as
\begin{equation*}
\text{vec}(\bm{U}) \equiv (u_{1,1},\dots, u_{m,1}, u_{1,2}, \dots, u_{m,2}, u_{1,n}, \dots, u_{m,n})^\top,
\end{equation*}
and the Kronecker product $\bm{U}\otimes\bm{V}$ is the $pm\times qn$ block matrix defined as
\begin{align*}
\bm{U}\otimes \bm{V} \equiv \begin{bmatrix}
u_{1,1}\bm{V} & \dots & u_{1,n} \bm{V}\\
\vdots & &\vdots\\
u_{m,1}\bm{V} &\dots & u_{m,n}\bm{V},
\end{bmatrix}
\end{align*}
where $u_{ij}$ represents the element on the $i$th row and $j$th column of matrix $\bm{U}$.

Next we can further write $\tilde{S}_{n,p}(\bm{c}_i, \bm{A})$ as
\begin{align*}
\widetilde{S}_{np}(\bm{c}_i, \bm{A}) &=  \text{vec}(\bm{M}_{\bm{A}}\bm{A}^0)^\top \left(\frac{\bm{F}^\top\bm{
F}}{np}\otimes \bm{I}_p\right) \text{vec}(\bm{M}_{\bm{A}}\bm{A}^0) +\frac{1}{n}\sum_{i=1}^n \bm{c}_i^\top \left(\frac{1}{p}\bm{\Phi}^\top \bm{M}_{\bm{A}}\bm{\Phi}\right)\bm{c}_i\\
&- \frac{1}{n}\sum_{i=1}^n 2 \bm{c}_i^\top \left(\frac{1}{p}\bm{f}_i\otimes \bm{M}_{\bm{A}}\bm{\Phi}\right)\text{vec}(\bm{M}_{\bm{A}}\bm{A}^0).
\end{align*}
If we denote
\begin{align*}
\bm{P} &= \frac{1}{p}\bm{\Phi}^\top\bm{M}_{\bm{A}}\bm{\Phi},\\
\bm{W} &= \frac{\bm{F}^\top\bm{
F}}{np}\otimes \bm{I}_p,\\
\bm{V}_i &= \frac{1}{p}\bm{f}_i\otimes \bm{M}_{\bm{A}}\bm{\Phi},
\end{align*}
and $\bm{\gamma} = \text{vec}(\bm{M}_{\bm{A}}\bm{A}^0)$, then we can write
\begin{align*}
\widetilde{S}_{np}(\bm{c}_i, \bm{A}) &= \frac{1}{n}\sum_{i=1}^n \left[\bm{c}_i^\top \bm{P} \bm{c}_i + \bm{\gamma}^\top \bm{W} \bm{\gamma} -2\bm{c}_i^\top \bm{V}_i^\top \bm{\gamma}\right]\\
&= \frac{1}{n}\sum_{i=1}^n \left[\bm{c}_i^\top\left( \bm{P}-\bm{V}_i^\top \bm{W}^{-1}\bm{V}_i\right)\bm{c}_i + (\bm{\gamma}^\top - \bm{c}_i^\top \bm{V}_i^\top \bm{W}^{-1})\bm{W}(\bm{\gamma}^\top - \bm{W}^{-1}\bm{V}_i\bm{c}_i) \right]\\
&\equiv \frac{1}{n}\sum_{i=1}^n \left[ \bm{c}_i^\top \bm{D}_i(\bm{A}) \bm{c}_i + \bm{\theta}_i^\top \bm{W}\bm{\theta}_i \right].
\end{align*}

In the last equation, 
\begin{align*}
\bm{D}_i(\bm{A}) &\equiv \bm{P}-\bm{V}_i^\top \bm{W}^{-1}\bm{V}_i\\
&= \frac{1}{p}\bm{\Phi}^\top \bm{M}_{\bm{A}}\bm{\Phi} - \frac{1}{p}\bm{\Phi}^\top \bm{M}_{\bm{A}}\bm{\Phi}\bm{f}_i^\top\left(\frac{\bm{F}^\top\bm{F}}{n}\right)^{-1}\bm{f}_i,\\
\bm{\theta}_i &\equiv \bm{\gamma}^\top - \bm{W}^{-1}\bm{V}_i\bm{c}_i.
\end{align*}

By Assumption~\ref{as:2} and~\ref{as:3}, the matrices $\bm{D}_i$ and $\bm{W}$ are positive definite for each $i$. Thus we have $\widetilde{S}_{np}(\bm{c}_i, \bm{A})\ge 0$. In addition, if either $\bm{c}_i \neq \bm{c}_i^0$ or $\bm{A}\neq \bm{A}^0\bm{H}$, then $\widetilde{S}_{np}(\bm{c}_i, \bm{A})> 0$. Thus $\widetilde{S}_{np}(\bm{c}_i, \bm{A})$ achieves its unique minimum at $(\bm{c}_i^0, \bm{A}^0) $. Thus we have
\begin{equation*}
\widehat{\bm{c}}_i - \bm{c}_i^0 = o_p(1), \qquad i = 1,\dots, n.
\end{equation*}

Next, we show $\widehat{\bm{c}}_i$ is consistent uniformly in $i$.

We can write
\begin{align*}
S_{np} (\bm{c}_i, \bm{A}) - \widetilde{S}_{np}(\bm{c}_i, \bm{A}) =&  \frac{2}{np}\sum_{i=1}^n  \bm{\epsilon}_i^\top \bm{M}_{\bm{A}} \bm{A}^0\bm{f}_i - 
 \frac{2}{np}\sum_{i=1}^n  \bm{\epsilon}_i^\top \bm{M}_{\bm{A}} \bm{\Phi}\bm{c}_i \\
 &+ 
  \frac{1}{np}\sum_{i=1}^n  \bm{\epsilon}_i^\top (\bm{M}_{\bm{A}}-\bm{M}_{\bm{A}^0}) \bm{\epsilon}_i 
  + \frac{\alpha}{np}\sum_{i=1}^n  \bm{c}_i^\top \bm{R}\bm{c}_i. 
\end{align*}
Using Taylor's expansion at $\bm{c}_i$,
\begin{align*}
S_{np} (\bm{c}_i, \bm{A}) - \widetilde{S}_{np}(\bm{c}_i, \bm{A}) =&  \frac{2}{np}\sum_{i=1}^n  \bm{\epsilon}_i^\top \bm{M}_{\bm{A}} \bm{A}^0\bm{f}_i - 
 \frac{2}{np}\sum_{i=1}^n  \bm{\epsilon}_i^\top \bm{M}_{\bm{A}} \bm{\Phi}\bm{c}_i^0 \\
 &+ 
  \frac{1}{np}\sum_{i=1}^n  \bm{\epsilon}_i^\top (\bm{M}_{\bm{A}}-\bm{M}_{\bm{A}^0}) \bm{\epsilon}_i 
  + \frac{\alpha}{np}\sum_{i=1}^n  {\bm{c}_i^0}^\top \bm{R}\bm{c}_i^0\\
  &+ \left( -\frac{2}{np}\bm{\epsilon}_i^\top\bm{M}_{\bm{A}} \bm{\Phi} +  \frac{2\alpha}{np} {\bm{c}_i^0}^\top \bm{R}\right)(\bm{c}_i-\bm{c}_i^0) + \Delta,
\end{align*}
where $\Delta$ denotes the small order terms. Then we have
\begin{align}\label{eq:52}
 \left( -\frac{2}{np}\bm{\epsilon}_i^\top\bm{M}_{\bm{A}} \bm{\Phi} +  \frac{2\alpha}{np} {\bm{c}_i^0}^\top \bm{R}\right)\left(\bm{c}_i-\bm{c}_i^0\right) =&
S_{np} (\bm{c}_i, \bm{A}) - \widetilde{S}_{np}(\bm{c}_i, \bm{A}) -   \frac{2}{np}\sum_{i=1}^n  \bm{\epsilon}_i^\top \bm{M}_{\bm{A}} \bm{A}^0\bm{f}_i \nonumber\\
&+ \frac{2}{np}\sum_{i=1}^n  \bm{\epsilon}_i^\top \bm{M}_{\bm{A}} \bm{\Phi}\bm{c}_i^0 +
  \frac{1}{np}\sum_{i=1}^n  \bm{\epsilon}_i^\top (\bm{M}_{\bm{A}}-\bm{M}_{\bm{A}^0}) \bm{\epsilon}_i \nonumber\\
  &- \frac{\alpha}{np}\sum_{i=1}^n  {\bm{c}_i^0}^\top \bm{R}\bm{c}_i^0+ \Delta
\end{align}
In the above equation, the right-hand side is $o_p(1)$ uniformly in $i$. This is because$ S_{np}(\bm{c}_i, \bm{A})  -\widetilde{S}_{np}(\bm{c}_i, \bm{A}) = o_p(1)$ and $S_{np}(\bm{c}_i, \bm{A})$ and $\widetilde{S}_{np}(\bm{c}_i, \bm{A})$ both consist of summations over $i$. Furthermore, all other terms on the right-hand side are $o_p(1)$ as proved in Lemma~\ref{le:9} and all contain summations over $i$. On the left-hand side of~\eqref{eq:52}, $-\frac{2}{np}\bm{\epsilon}_i^\top\bm{M}_{\bm{A}} \bm{\Phi}$ is $o_p(1)$ uniformly because $\mathbb{E}\left(\frac{1}{np}\left\|\bm{\epsilon}^\top_i\bm{M}_{\bm{A}} \bm{c}_i\right\|\right) = o(1)$ as shown in Lemma~\ref{le:9} $(ii)$. Moreover, the term $ \frac{2\alpha}{np} {\bm{c}_i^0}^\top \bm{R}$ is also $o_p(1)$ uniformly using Lemma~\ref{le:9} $(iv)$ and that we assume $\bm{c}_i$ are bounded uniformly in Assumption~\ref{as:2}. This leads us to the result that
\begin{equation*}
\widehat{\bm{c}}_i - \bm{c}_i^0 = o_p(1), \quad \text{uniformly for all } i = 1,\dots, n
\end{equation*}
Combining the $i$, we have
\begin{equation*}
\frac{\left\|\widehat{\bm{C}} - \bm{C}^0\right\|}{\sqrt{n}} = o_p(1).
\end{equation*}

To prove part $(ii)$, note that the centred objective function satisfies $S_{np}(\bm{c}_i^0 = \bm{0}, \bm{A}^0) = 0$ and, by definition in~\eqref{eq:36}, we have  $S_{np}(\widehat{\bm{c}}_i, \widehat{\bm{A}})\le 0$. Therefore,
\begin{equation*}
0 \ge S_{np}(\widehat{\bm{c}}_i, \widehat{\bm{A}}) = \widetilde{S}_{np}(\widehat{\bm{c}}_i, \widehat{\bm{A}}) + o_p(1).
\end{equation*}
Combined with $\widetilde{S}_{np}(\widehat{\bm{c}}_i, \widehat{\bm{A}})\ge 0$, it must be true that
\begin{equation*}
\widetilde{S}_{np}(\widehat{\bm{c}}_i, \widehat{\bm{A}}) = o_p(1).
\end{equation*}
This implies that 
\begin{equation*}
 \frac{1}{np}\sum_{i=1}^n  \bm{F}_i^\top {\bm{A}^0}^\top \bm{M}_{\bm{A}} \bm{A}^0\bm{F}_i = \text{tr}\left[\frac{{\bm{A}^0}^\top \bm{M}_{\widehat{\bm{A}}}\bm{A}^0}{p}\frac{\bm{F}^\top \bm{F}}{n}\right] = o_p(1).
\end{equation*}
Since $\bm{F}^\top\bm{F}/n = O_p(1)$, it must be true that 
\begin{equation}\label{eq:17}
\frac{{\bm{A}^0}^\top \bm{M}_{\widehat{\bm{A}}}\bm{A}^0}{p} = \frac{{\bm{A}^0}^\top\bm{A}^0}{p} - \frac{{\bm{A}^0}^\top \widehat{\bm{A}}}{p}\frac{\widehat{\bm{A}}^\top\bm{A}^0}{p} = o_p(1).
\end{equation}
By Assumption~\ref{as:4}, ${\bm{A}^0}^\top\bm{A}^0/p$ is invertible. Thus ${\bm{A}^0}^\top \widehat{\bm{A}}/p$ is also invertible.
Next,
\begin{equation*}
\|\bm{P}_{\widehat{\bm{A}}}-\bm{P}_{\bm{A}^0}\|^2 = \text{tr}[(\bm{P}_{\widehat{\bm{A}}}-\bm{P}_{\bm{A}^0})^2] = 2\text{tr}(\bm{I}_r-\widehat{\bm{A}}^\top \bm{P}_{\bm{A}^0}\widehat{\bm{A}}/p).
\end{equation*}
But~\eqref{eq:17} implies $\widehat{\bm{A}}^\top\bm{P}_{\bm{A}^0}\widehat{\bm{A}}/p\rightarrow \bm{I}_r$, which means $\|\bm{P}_{\widehat{\bm{A}}}-\bm{P}_{\bm{A}}\| \rightarrow 0$.
\end{proof}

\subsubsection*{Proof of Theorem~\ref{th:2}}
\begin{proof}
Writing the first equation in~\eqref{eq:1} in matrix notation, we have
\begin{equation}\label{eq:24}
\widehat{\bm{C}} = \left(\bm{\Phi}^\top \bm{M}_{\widehat{\bm{A}}}\bm{\Phi} + \alpha \bm{R} \right)^{-1} \bm{\Phi}^\top \bm{M}_{\widehat{\bm{A}}}\bm{Y}.
\end{equation}
Substitute $\bm{Y} = \bm{\Phi} \bm{C}^0 + \bm{A}^0 \bm{f}^\top+ \bm{E}$ into~\eqref{eq:24} and subtract the matrix $\bm{C}^0$ on both sides, we get
\begin{align*}
\widehat{\bm{C}}-\bm{C}^0=  \left[\left(\bm{\Phi}^\top \bm{M}_{\widehat{\bm{A}}}\bm{\Phi} + \alpha \bm{R} \right)^{-1}\bm{\Phi}^\top \bm{M}_{\widehat{\bm{A}}} \bm{\Phi} - \bm{I}_K \right] \bm{C}^0 \\
+ \left(\bm{\Phi}^\top \bm{M}_{\widehat{\bm{A}}}\bm{\Phi} + \alpha \bm{R} \right)^{-1} \bm{\Phi}^\top \bm{M}_{\widehat{\bm{A}}}\bm{A}^0\bm{F}^\top \\
+ \left(\bm{\Phi}^\top \bm{M}_{\widehat{\bm{A}}}\bm{\Phi} + \alpha \bm{R} \right)^{-1} \bm{\Phi}^\top \bm{M}_{\widehat{\bm{A}}}\bm{E},
\end{align*}
or
\begin{equation}\label{eq:7}
\frac{1}{p}\left(\bm{\Phi}^\top \bm{M}_{\widehat{\bm{A}}}\bm{\Phi} + \alpha \bm{R} \right)\left(\widehat{\bm{C}}-\bm{C}^0\right) = \frac{1}{p}\alpha \bm{R} \bm{C}^0 + \frac{1}{p}\bm{\Phi}^\top \bm{M}_{\widehat{\bm{A}}}\bm{A}^0\bm{F}^\top + \frac{1}{p}\bm{\Phi}^\top \bm{M}_{\widehat{\bm{A}}}\bm{E}
\end{equation}
We first look at the second term on the right-hand side of~\eqref{eq:7}. Recall that $\bm{M}_{\widehat{\bm{A}}} = \bm{I}_p-\widehat{\bm{A}}\widehat{\bm{A}}^\top/p$. We have $\bm{M}_{\widehat{\bm{A}}}\widehat{\bm{A}} = \bm{0}$. Thus
\begin{equation*}
\bm{M}_{\widehat{\bm{A}}}\bm{A}^0 = \bm{M}_{\widehat{\bm{A}}} \left(\bm{A}^0-\widehat{\bm{A}}\bm{H}^{-1}+\widehat{\bm{A}}\bm{H}^{-1}\right) = \bm{M}_{\widehat{\bm{A}}}\left(\bm{A}^0-\widehat{\bm{A}}\bm{H}^{-1}\right),
\end{equation*}
where $\bm{H}$ is defined in~\eqref{eq:37}. Using~\eqref{eq:5}, it follows that
\begin{align}\label{eq:19}
\frac{1}{p}\bm{\Phi}^\top \bm{M}_{\widehat{\bm{A}}}\bm{A}^0\bm{F}^\top&= -\frac{1}{p}\bm{\Phi}^\top \bm{M}_{\widehat{\bm{A}}}\left(I1 + \dots +I8\right)\left(\frac{{\bm{A}^0}^\top\widehat{\bm{A}}}{p}\right)^{-1}\left(\frac{\bm{F}^\top \bm{F}}{n}\right)^{-1}\bm{F}^\top\\\nonumber
&\equiv J1 + \dots + J8.
\end{align}

In the following, we calculate the order for each from $J1$ to $J8$. Note that $I1$ to $I8$ are defined in~\eqref{eq:6}. Before we begin, for simplicity, denote
\begin{equation}\label{eq:21}
\bm{G} \equiv \left(\frac{{\bm{A}^0}^\top\widehat{\bm{A}}}{p}\right)^{-1}\left(\frac{\bm{F}^\top \bm{F}}{n}\right)^{-1}.
\end{equation}
We prove in Lemma~\ref{le:10} that $\bm{G} = O_p(1)$. We also use the fact that $\left\|\bm{M}_{\widehat{\bm{A}}}\right\| = O_p(1)$.
Now
\begin{equation}\label{eq:25}
J1 = -\frac{1}{p}\bm{\Phi}^\top \bm{M}_{\widehat{\bm{A}}}\left(I1\right)\bm{G}\bm{F}^\top.
\end{equation}
Since $I1 = O_p\left(\frac{\sqrt{p}}{n}\left\|\bm{C}^0-\widehat{\bm{C}}\right\|^2\right)$, using the result from Lemma~\ref{le:8} $(i)$, the term $J1$ is bounded in norm by $O_p\left(\frac{1}{\sqrt{n}}\left\|\bm{C}^0-\widehat{\bm{C}}\right\|^2\right)$. Thus it is also $o_p\left(\left\|\bm{C}^0-\widehat{\bm{C}}\right\|\right)$.
\begin{align}\label{eq:30}
J2 &= -\frac{1}{p}\bm{\Phi}^\top \bm{M}_{\widehat{\bm{A}}}\left(I2\right)\left(\frac{{\bm{A}^0}^\top\widehat{\bm{A}}}{p}\right)^{-1}\left(\frac{\bm{F}^\top \bm{F}}{n}\right)^{-1}\bm{F}^\top\nonumber\\
&= \frac{1}{p}\bm{\Phi}^\top \bm{M}_{\widehat{\bm{A}}}\bm{\Phi}\left(\widehat{\bm{C}}-\bm{C}^0\right)\bm{F}\left(\bm{F}^\top\bm{F}\right)^{-1}\bm{F}^\top. 
\end{align}
For the term $J2$, since it is not a small order term, we keep it as what it is.

Now consider 
\begin{align}\label{eq:26}
J3 &= -\frac{1}{p}\bm{\Phi}^\top \bm{M}_{\widehat{\bm{A}}}\left(I3\right)\left(\frac{{\bm{A}^0}^\top\widehat{\bm{A}}}{p}\right)^{-1}\left(\frac{\bm{F}^\top \bm{F}}{n}\right)^{-1}\bm{F}^\top\nonumber\\
&= \frac{1}{np^2}\bm{\Phi}^\top \bm{M}_{\widehat{\bm{A}}}\bm{\Phi} \left(\widehat{\bm{C}}-\bm{C}^0\right)\bm{E}^\top\widehat{\bm{A}}\bm{G}\bm{F}^\top.
\end{align}
We take $\bm{E}^\top \widehat{\bm{A}} = \bm{E}^\top \left(\widehat{\bm{A}}-\bm{A}^0\bm{H}\right) + \bm{E}^\top \bm{A}^0\bm{H}$, where the order of each term can be found in Lemma~\ref{le:2} $(i)$. Again using the result of Lemma~\ref{le:8} $(i)$ and $(iii)$, it can be shown that $J3$ is $o_p\left(\left\|\bm{C}^0-\widehat{\bm{C}}\right\|\right)$.

Next
\begin{align}\label{eq:27}
\|J4\| &= \left\|-\frac{1}{p}\bm{\Phi} \bm{M}_{\widehat{\bm{A}}}I4\left(\frac{{\bm{A}^0}^\top\widehat{\bm{A}}}{p}\right)^{-1}\left(\frac{\bm{F}^\top \bm{F}}{n}\right)^{-1}\bm{F}^\top\right\|\nonumber\\
&= O_p\left(\frac{\bm{M}_{\widehat{\bm{A}}}\bm{A}^0}{\sqrt{p}}\left\|\bm{C}^0-\widehat{\bm{C}}\right\|\right).
\end{align}
Using Proposition~\ref{pro}, we have $\frac{1}{\sqrt{p}} \bm{M}_{\widehat{\bm{A}}}\bm{A}^0 = \bm{M}_{\widehat{\bm{A}}}\frac{1}{\sqrt{p}}\left(\bm{A}^0-\widehat{\bm{A}}\bm{H}^{-1}\right) = o_p\left(1\right)$, Thus, $\|J4\| = o_p\left(\left\|\bm{C}^0-\widehat{\bm{C}}\right\|\right)$.

It can also be proven that $\|J5\| = o_p\left(\left\|\bm{C}^0-\widehat{\bm{C}}\right\|\right)$.

Then we consider 
\begin{align*}
J6 &= -\frac{1}{p}\bm{\Phi}^\top \bm{M}_{\widehat{\bm{A}}}I6 \bm{G}\bm{F}^\top \\
&= -\frac{1}{np^2}\bm{\Phi}^\top \bm{M}_{\widehat{\bm{A}}}\bm{A}^0\bm{F}^\top \bm{E}^\top \widehat{\bm{A}} \bm{G} \bm{F}^\top\\
&= -\frac{1}{np^2} \bm{\Phi}^\top \bm{M}_{\widehat{\bm{A}}}\left(\bm{A}^0-\widehat{\bm{A}}\bm{H}^{-1}\right)\bm{F}^\top \bm{E}^\top \widehat{\bm{A}} \bm{G} \bm{F}^\top,
\end{align*}
where the last equation comes from $\bm{M}_{\widehat{\bm{A}}}\widehat{\bm{A}}\bm{H}^{-1} = \bm{0}$. Now 
\begin{align*}
\left\|\bm{F}^\top \bm{E}^\top \widehat{\bm{A}}\right\| &= \left\|\bm{E}^\top\widehat{\bm{A}} \right\|\\
&\le \left\|\ \bm{E}^\top \left(\widehat{\bm{A}}-\bm{A}^0\bm{H}\right)\right\| + \left\|\bm{E}^\top \bm{A}^0\bm{H}\right\|\\
&= O_p\left(\frac{p}{\min \left(\sqrt{n},\sqrt{p}\right)}\left\|\bm{C}^0-\widehat{\bm{C}}\right\|\right) + O_p\left(\sqrt{n}\right) + O_p\left(\frac{p}{\sqrt{n}}\right) + O_p\left(\sqrt{np}\right)\\
&= O_p\left( \frac{p}{\sqrt{n}}\left\|\bm{C}^0-\widehat{\bm{C}}\right\|\right)  + O_p\left(\frac{p}{\sqrt{n}}\right) + O_p\left(\sqrt{np}\right),
\end{align*}
using Lemma~\ref{le:2}. Thus,
\begin{align}\label{eq:28}
\|J6\| &\le \left\|\frac{1}{np^2} \bm{\Phi}^\top \bm{M}_{\widehat{\bm{A}}}\left(\bm{A}^0-\widehat{\bm{A}}\bm{H}^{-1}\right)\right\| \left\|\bm{F}^\top \bm{E}^\top \widehat{\bm{A}}\right\| \left\|\bm{G}\right\|\left\|\bm{F}^\top\right\|\nonumber\\
&= -\frac{1}{np^2}\times \left[ O_p\left(\frac{p}{\sqrt{n}}\left\|\bm{C}^0-\widehat{\bm{C}}\right\|\right) + O_p\left(\frac{p}{\min \left(n,p\right)}\right)\right] \nonumber\\
&\times \left[ O_p\left(  \frac{p}{\sqrt{n}}\left\|\bm{C}^0-\widehat{\bm{C}}\right\|\right) + O_p\left(\frac{p}{\sqrt{n}}\right) + O_p\left(\sqrt{np} \right)\right] \times O_p\left(\sqrt{n}\right)\nonumber\\
& = o_p\left(\left\|\bm{C}^0-\widehat{\bm{C}}\right\|\right) + O_p\left(\frac{1}{n^2}\right) + O_p\left(\frac{1}{n\sqrt{p}}\right) + O_p\left(\frac{1}{np}\right) + O_p\left( \frac{1}{p\sqrt{p}}\right),
\end{align}
where Proposition~\ref{pro} is used in the first equation, and the second equation is a result of the calculation on the orders.
Next
\begin{align}\label{eq:31}
J7 &=  -\frac{1}{p}\bm{\Phi}^\top \bm{M}_{\widehat{\bm{A}}}I7 \bm{G}\bm{F}^\top \nonumber\\
&= -\frac{1}{np}\bm{\Phi}^\top \bm{M}_{\widehat{\bm{A}}}\bm{E}\bm{F} \left(\frac{\bm{F}^\top \bm{F}}{n}\right)^{-1}\bm{F}^\top.
\end{align}
{This term is not a small order term, so we keep it as what it is.
And lastly, the proof of order for the term $J8$ is too long, so we show in Lemma~\ref{le:11} that  
\begin{equation}\label{eq:29}
J8 = o_p\left(\left\|\bm{C}^0-\widehat{\bm{C}}\right\|\right) + O_p\left(\frac{1}{\min \left(n,p\right)} \right) + O_p\left(\frac{\sqrt{n}}{\sqrt{p}}\frac{1}{\min \left(n,p\right)} \right).
\end{equation}

Collecting terms from $J1$ to $J8$, we can write~\eqref{eq:7} as
\begin{equation*}
\left(\frac{1}{p}\bm{\Phi}^\top \bm{M}_{\widehat{\bm{A}}}\bm{\Phi} + \frac{1}{p}\alpha \bm{R} \right) \left(\widehat{\bm{C}}-\bm{C}^0\right) = \frac{1}{p}\alpha \bm{R} \bm{C}^0 + J1 + \dots + J8 + \frac{1}{p}\bm{\Phi}^\top \bm{M}_{\widehat{\bm{A}}}\bm{E}.
\end{equation*}
Combining the results we have found for $J1, J3, J4, J5, J6$ and $ J8$ in \cref{eq:26,eq:27,eq:28,eq:29},
\begin{align}\label{eq:33}
\left(\frac{1}{p}\bm{\Phi}^\top \bm{M}_{\widehat{\bm{A}}}\bm{\Phi} + o_p(1) \right) \left(\widehat{\bm{C}}-\bm{C}^0\right) - J2 =&  \frac{1}{p}\alpha \bm{R} \bm{C}^0 + \frac{1}{p}\bm{\Phi}^\top \bm{M}_{\widehat{\bm{A}}}\bm{E} + J7 \nonumber\\
&+ O_p\left(\frac{1}{\min (n,p)}\right) + O_p\left(\frac{\sqrt{n}}{p\sqrt{p}} \right)+ O_p\left(\frac{1}{\sqrt{np}}\right). 
\end{align}
Substitute $J2 $ and $J7$ from~\cref{eq:30,eq:31} into~\eqref{eq:33}, we have
\begin{align}\label{eq:32}
&\left(\frac{1}{p}\bm{\Phi}^\top \bm{M}_{\widehat{\bm{A}}}\bm{\Phi} + o_p(1) \right) \left(\widehat{\bm{C}}-\bm{C}^0\right) - \frac{1}{p}\bm{\Phi}^\top \bm{M}_{\widehat{\bm{A}}}\bm{\Phi}\left(\widehat{\bm{C}}-\bm{C}^0\right)\bm{F}\left(\bm{F}^\top\bm{F}\right)^{-1}\bm{F}^\top \nonumber\\
=&  \frac{1}{p}\alpha \bm{R} \bm{C}^0 + \frac{1}{p}\bm{\Phi}^\top \bm{M}_{\widehat{\bm{A}}}\bm{E} -\frac{1}{p}\bm{\Phi}^\top \bm{M}_{\widehat{\bm{A}}}\bm{E}\bm{F} \left(\bm{F}^\top \bm{F}\right)^{-1}\bm{F}^\top\nonumber  \\
&+ O_p\left(\frac{1}{\min (n,p)}\right) + O_p\left(\frac{\sqrt{n}}{p\sqrt{p}}\right) + O_p\left(\frac{1}{\sqrt{np}}\right).
\end{align}
We combine the two terms on the left-hand side of~\eqref{eq:32} and also combine the second and third term on the right-hand side of~\eqref{eq:32}, then we get
\begin{align*}
\frac{1}{p}\bm{\Phi}^\top \bm{M}_{\widehat{\bm{A}}}\bm{\Phi} (\widehat{\bm{C}}-\bm{C}^0) \left(\bm{I}_n -\bm{F} \left(\bm{F}^\top\bm{F}\right)^{-1}\bm{F}^\top\right) =& \frac{1}{p}\alpha \bm{R} \bm{C}^0 +\frac{1}{p}\bm{\Phi}^\top \bm{M}_{\widehat{\bm{A}}}\bm{E}\left(\bm{I}_n- \bm{F} (\bm{F}^\top \bm{F})^{-1}\bm{F}^\top\right) \\
&+  O_p\left(\frac{1}{\min (n,p)}\right) + O_p\left(\frac{\sqrt{n}}{p\sqrt{p}} \right)+ O_p\left(\frac{1}{\sqrt{np}}\right).
\end{align*}
Let $\bm{Q}(\widehat{\bm{A}}) \equiv \frac{1}{p}\bm{\Phi}^\top \bm{M}_{\widehat{\bm{A}}}\bm{\Phi}$, and $\bm{M}_{\bm{F}} \equiv \bm{I}_n- \bm{F} (\bm{F}^\top \bm{F})^{-1}\bm{F}^\top$. Left multiplying $\bm{Q}^{-1}(\widehat{\bm{A}})$ to both sides of the equation above, we have
\begin{align*}
 (\widehat{\bm{C}}-\bm{C}^0)\bm{M}_{\bm{F}} =& \bm{Q}(\widehat{\bm{A}})^{-1}\frac{1}{p}\alpha \bm{R} \bm{C}^0 + \bm{Q}(\widehat{\bm{A}})^{-1}\frac{1}{p}\bm{\Phi}^\top \bm{M}_{\widehat{\bm{A}}}\bm{E}\bm{M}_{\bm{F}}\\
&+ O_p\left(\frac{1}{\min (n,p)}\right) + O_p\left(\frac{\sqrt{n}}{p\sqrt{p}} \right)+ O_p\left(\frac{1}{\sqrt{np}}\right)\\
 =& \bm{Q}^{-1}(\bm{A}^0)\frac{1}{p}\alpha \bm{R} \bm{C}^0 + \bm{Q}^{-1}(\bm{A}^0)\frac{1}{p}\bm{\Phi}^\top \bm{M}_{\bm{A}^0}\bm{E}\bm{M}_{\bm{F}}\\
&+ O_p\left(\frac{1}{\min (n,p)}\right) + O_p\left(\frac{\sqrt{n}}{p\sqrt{p}} \right)+ O_p\left(\frac{1}{\sqrt{np}}\right),
\end{align*}
where in the last equation, we substitute $\bm{Q}(\widehat{\bm{A}})$ with $\bm{Q}(\bm{A}^0)$ using Lemma~\ref{le:6} and substitute $\frac{1}{\sqrt{np}}\bm{\Phi}^\top \bm{M}_{\widehat{\bm{A}}}\bm{E}$ with $\frac{1}{\sqrt{np}}\bm{\Phi}^\top \bm{M}_{\bm{A}^0}\bm{E}$ using Lemma~\ref{le:5}. Note that $\sqrt{p}O_p\left(\frac{\left\|\bm{C}^0-\widehat{\bm{C}}\right\|^2}{n}\right)$ in the result of Lemma~\ref{le:5} is dominated by $\sqrt{p}\frac{\left\|\bm{C}^0-\widehat{\bm{C}}\right\|}{\sqrt{n}}$. Next by multiplying a scale of $\frac{\sqrt{p}}{\sqrt{n}}$,
\begin{align}\label{eq:35}
 \frac{\sqrt{p}}{\sqrt{n}}(\widehat{\bm{C}}-\bm{C}^0)\bm{M}_{\bm{F}} =& \bm{Q}(\bm{A}^0)^{-1}\frac{1}{p}\alpha \bm{R} \bm{C}^0 + \bm{Q}(\bm{A}^0)^{-1}\frac{1}{\sqrt{np}}\bm{\Phi}^\top \bm{M}_{\bm{A}^0}\bm{E}\bm{M}_{\bm{F}} \nonumber\\
 &+ \frac{\sqrt{p}}{\sqrt{n}} \times O_p\left(\frac{1}{\min (n,p)}\right) +  \frac{\sqrt{p}}{\sqrt{n}} \times O_p\left(\frac{\sqrt{n}}{p\sqrt{p}}\right) + O_p\left(\frac{1}{\sqrt{np}}\right)\nonumber\\
 =& O_p(1),
\end{align}
when $n$ and $p$ are of the same order, that is $p/n \rightarrow \rho >0$.
}
\end{proof}

\subsubsection*{Proof of Theorem~\ref{th:3}} 

From~\eqref{eq:35}, we have when $p/n \rightarrow \rho >0$,
\begin{equation*}
 \frac{\sqrt{p}}{\sqrt{n}}(\widehat{\bm{C}}-\bm{C}^0)\bm{M}_{\bm{F}} = \bm{Q}(\bm{A}^0)^{-1}\frac{1}{\sqrt{np}}\bm{\Phi}^\top \bm{M}_{\bm{A}^0}\bm{E}\bm{M}_{\bm{F}} + o_p(1).
\end{equation*}

Using Lemma~\ref{le:4} we have, for any vector $\bm{b} = (b_1, \dots,  b_n)^\top$,
\begin{equation}\label{eq:38}
\frac{1}{\sqrt{np}}\bm{\Phi}^\top \bm{M}_{\bm{A}^0}\bm{E}\bm{M}_{\bm{F}}\bm{b} \overset{d}{\to} \mathcal{N}(0,\bm{L}),
\end{equation}
where $\bm{L}$ is defined in~\eqref{eq:45}.

Multiplying the constant matrix $\bm{Q}(\bm{A}^0)^{-1}$ to~\eqref{eq:38}, we have the result
\begin{equation*}
\bm{Q}(\bm{A}^0)^{-1}\frac{1}{\sqrt{np}}\bm{\Phi}^\top \bm{M}_{\bm{A}^0}\bm{E}\bm{M}_{\bm{F}}\bm{b} \overset{d}{\to} \mathcal{N}\left(0,\bm{Q}(\bm{A}^0)^{-1}\bm{L}\bm{Q}(\bm{A}^0)^{-1}\right).
\end{equation*}
The theorem is thus proved.

\subsection*{Appendix B}\label{app:b}

In this section, we provide the proposition used in Appendix A, along with its proof.
\begin{proposition}\label{pro}
Under Assumptions~\ref{as:1} to~\ref{as:4}, we have the following statements:
\begin{enumerate}
\item[(i)]
The matrix $\bm{V}_{np}$ defined in~\eqref{eq:9} is invertible and $\bm{V}_{np}\overset{p}{\to} \bm{V}$, where the $r \times r$ matrix $\bm{V}$ is a diagonal matrix consisting of the eigenvalues of  $\bm{\Sigma}_{\bm{F}}\bm{\Sigma}_{\bm{A}}$;
\item[(ii)]
Let 
\begin{equation}\label{eq:37}
\bm{H} = (\bm{F}^\top\bm{F}/n)^{-1}({\bm{A}^0}^\top\widehat{\bm{A}}/p)^{-1}\bm{V}_{np},
\end{equation}
 then $\bm{H}$ is $r\times r$ invertible matrix and
\begin{equation*}
\frac{1}{p}\|\widehat{\bm{A}}-\bm{A}^0\bm{H}\|^2 = O_p\left(\frac{1}{n}\left\|\bm{C}^0-\widehat{\bm{C}}\right\|^2\right) + O_p\left(\frac{1}{\min (n,p)}\right).
\end{equation*}
\end{enumerate} 
\end{proposition}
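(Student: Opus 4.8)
The plan is to prove (i) and (ii) together, working from the eigen-equation~\eqref{eq:9} that defines the pair $(\widehat{\bm A},\bm V_{np})$: this is the principal-components first-order condition for the factor part, so the argument follows the standard route for approximate factor models, the only new feature being that the matrix being diagonalised is assembled from $\bm Y_i-\bm\Phi\widehat{\bm c}_i$ rather than from $\bm\eta_i$, so that the smoothing error $\bm\Delta\equiv\bm\Phi(\bm C^0-\widehat{\bm C})$ appears in every term. First I would write, in matrix form, $\bm Y-\bm\Phi\widehat{\bm C}=\bm A^0\bm F^\top+\bm E+\bm\Delta$ and expand
\[
S_{np}\;\equiv\;\tfrac1{np}\bigl(\bm Y-\bm\Phi\widehat{\bm C}\bigr)\bigl(\bm Y-\bm\Phi\widehat{\bm C}\bigr)^\top
\;=\;\tfrac1p\bm A^0\Bigl(\tfrac1n\bm F^\top\bm F\Bigr)\bm A^{0\top}\;+\;\tfrac1{np}\bigl(\text{eight cross/quadratic blocks in }\bm E\text{ and }\bm\Delta\bigr).
\]

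For part (i): by Identification Condition~\ref{id}(ii) one has $\tfrac1p\bm A^{0\top}\bm A^0\overset{p}\to\bm\Sigma_{\bm A}>0$ and $\tfrac1n\bm F^\top\bm F\overset{p}\to\bm\Sigma_{\bm F}>0$, with Assumption~\ref{as:4} supplying the fourth-moment control, so the leading matrix has exactly $r$ nonzero eigenvalues, converging to the eigenvalues of $\bm\Sigma_{\bm F}\bm\Sigma_{\bm A}$, which are real and strictly positive because this matrix is similar to the symmetric positive-definite $\bm\Sigma_{\bm F}^{1/2}\bm\Sigma_{\bm A}\bm\Sigma_{\bm F}^{1/2}$. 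The remainder has operator norm $o_p(1)$: the $\bm E$-blocks because $\|\tfrac1{np}\bm E\bm E^\top\|_{\mathrm{op}}$ and $\|\tfrac1{np}\bm A^0\bm F^\top\bm E^\top\|_{\mathrm{op}}$ are $o_p(1)$ under Assumptions~\ref{as:5}--\ref{as:6}, and the $\bm\Delta$-blocks because $\tfrac1{np}\|\bm\Delta\|^2\le\tfrac1{np}\|\bm\Phi\|^2\|\bm C^0-\widehat{\bm C}\|^2=o_p(1)$ by Assumption~\ref{as:1} and Theorem~\ref{th:1}(i). Weyl's inequality then forces the $r$ largest eigenvalues of $S_{np}$, which are the diagonal entries of $\bm V_{np}$, to converge to those of $\bm\Sigma_{\bm F}\bm\Sigma_{\bm A}$; in particular they stay bounded away from $0$ with probability approaching one, so $\bm V_{np}$ is invertible, $\bm V_{np}\overset{p}\to\bm V$, and $\|\bm V_{np}^{-1}\|=O_p(1)$---the last fact being what part (ii) needs.

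For part (ii): from $S_{np}\widehat{\bm A}=\widehat{\bm A}\bm V_{np}$ and the expansion above, isolating the leading term yields, after rearranging and invoking~\eqref{eq:17}, the identity $\widehat{\bm A}=\bm A^0\bm H+\bm\Pi_{np}\bm V_{np}^{-1}$, with $\bm H$ the rotation matrix of~\eqref{eq:37} and $\bm\Pi_{np}$ the sum of the remaining blocks (the pure-error blocks $\tfrac1{np}\bm A^0\bm F^\top\bm E^\top\widehat{\bm A}$, $\tfrac1{np}\bm E\bm F\bm A^{0\top}\widehat{\bm A}$, $\tfrac1{np}\bm E\bm E^\top\widehat{\bm A}$, plus the five blocks containing $\bm\Delta$). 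Hence $\tfrac1p\|\widehat{\bm A}-\bm A^0\bm H\|^2\le\tfrac1p\|\bm\Pi_{np}\|^2\|\bm V_{np}^{-1}\|^2$, and it remains to bound $\tfrac1p\|\bm\Pi_{np}\|^2$ block by block, using $\widehat{\bm A}^\top\widehat{\bm A}/p=\bm I_r$ (so $\|\widehat{\bm A}\|^2=O_p(p)$), Assumptions~\ref{as:4}--\ref{as:6} for the pure-error blocks (giving $O_p(1/\min(n,p))$), and $\|\bm\Delta\|\le\|\bm\Phi\|\,\|\bm C^0-\widehat{\bm C}\|=O(\sqrt p)\|\bm C^0-\widehat{\bm C}\|$ together with Theorem~\ref{th:1}(i) for the rest; for instance the block $\tfrac1{np}\bm\Delta\bm F\bm A^{0\top}\widehat{\bm A}$ contributes $O_p\bigl(\tfrac1n\|\bm C^0-\widehat{\bm C}\|^2\bigr)$, while $\tfrac1{np}\bm\Delta\bm\Delta^\top\widehat{\bm A}$ contributes $O_p\bigl(\tfrac1{n^2}\|\bm C^0-\widehat{\bm C}\|^4\bigr)=o_p\bigl(\tfrac1n\|\bm C^0-\widehat{\bm C}\|^2\bigr)$, using $\|\bm C^0-\widehat{\bm C}\|^2=o_p(n)$ from Theorem~\ref{th:1}(i). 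Collecting the blocks gives the claimed $O_p\bigl(\tfrac1n\|\bm C^0-\widehat{\bm C}\|^2\bigr)+O_p\bigl(\tfrac1{\min(n,p)}\bigr)$.

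The main obstacle is the block $\tfrac1{np}\bm E\bm E^\top\widehat{\bm A}$: because $\widehat{\bm A}$ is itself a function of $\bm E$, it cannot be bounded by a naive moment calculation. I would use the standard add-and-subtract device, $\tfrac1{np}\bm E\bm E^\top\widehat{\bm A}=\tfrac1{np}\bm E\bm E^\top\bigl(\bm A^0\bm H\bigr)+\tfrac1{np}\bm E\bm E^\top\bigl(\widehat{\bm A}-\bm A^0\bm H\bigr)$: the first term is controlled by the moment bounds of Assumptions~\ref{as:5}--\ref{as:6}, while the second is moved to the left-hand side using $\|\tfrac1{np}\bm E\bm E^\top\|_{\mathrm{op}}=o_p(1)$ and $\|\bm V_{np}^{-1}\|=O_p(1)$, which turns the estimate into a self-improving inequality for $\tfrac1p\|\widehat{\bm A}-\bm A^0\bm H\|^2$. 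Beyond this, the work is routine but lengthy: keeping track of which blocks must be bounded in Frobenius versus operator norm, and checking that no $\bm\Delta$-cross-term ever exceeds $O_p\bigl(\tfrac1n\|\bm C^0-\widehat{\bm C}\|^2\bigr)$.
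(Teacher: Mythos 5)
Your proposal is correct and, for the rate bound in part (ii), it follows essentially the paper's own route: both arguments start from the eigen-equation \eqref{eq:9}, substitute $\bm Y-\bm\Phi\widehat{\bm C}=\bm\Phi(\bm C^0-\widehat{\bm C})+\bm A^0\bm F^\top+\bm E$, isolate the leading block $\frac{1}{np}\bm A^0\bm F^\top\bm F{\bm A^0}^\top\widehat{\bm A}$, and bound the remaining eight blocks ($I1$ to $I8$ in \eqref{eq:6}) one by one using $\|\bm\Phi\|,\|\bm A^0\|,\|\widehat{\bm A}\|=O_p(\sqrt p)$, $\|\bm F\|=O_p(\sqrt n)$, the moment bounds of Lemma~\ref{le:1}, and the consistency of $\widehat{\bm C}$ from Theorem~\ref{th:1}; your stated block orders (e.g.\ $\frac{1}{\sqrt n}\|\bm C^0-\widehat{\bm C}\|$ for the $\bm\Delta$-cross terms, with the quadratic $\bm\Delta$-term dominated) agree with the paper's, leading to the same bound as \eqref{eq:41}. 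Two genuine differences are worth recording. First, for part (i) you argue via Weyl-type eigenvalue perturbation of $S_{np}$ around $\frac1p\bm A^0\bigl(\frac1n\bm F^\top\bm F\bigr){\bm A^0}^\top$, whereas the paper left-multiplies \eqref{eq:40} by $\widehat{\bm A}^\top/p$ and ${\bm A^0}^\top/p$ to obtain $\bm V_{np}=(\widehat{\bm A}^\top\bm A^0/p)(\bm F^\top\bm F/n)({\bm A^0}^\top\widehat{\bm A}/p)+o_p(1)$ and invertibility from \eqref{eq:17}; both are valid, and your version delivers the strict positivity of the limiting eigenvalues a bit more directly. Second, you single out $\frac{1}{np}\bm E\bm E^\top\widehat{\bm A}$ as the main obstacle and propose an add-and-subtract plus absorption step; in the paper no such device is needed for this proposition, because $\|\bm E\bm E^\top\widehat{\bm A}\|\le\|\bm E\bm E^\top\|\,\|\widehat{\bm A}\|$ holds deterministically irrespective of the dependence of $\widehat{\bm A}$ on $\bm E$, and Lemma~\ref{le:1}~$(iv)$ gives $\|\bm E\bm E^\top\|=O_p(n\sqrt p)+O_p(p\sqrt n)$, which with $\|\widehat{\bm A}\|=O_p(\sqrt p)$ already yields the required $O_p(1/\sqrt n)+O_p(1/\sqrt p)$ after the $\frac{1}{\sqrt p\,np}$ scaling; your absorption argument is correct but would only become necessary for sharper rates. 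Finally, note that the rotation your algebra produces is $\bm H=(\bm F^\top\bm F/n)({\bm A^0}^\top\widehat{\bm A}/p)\bm V_{np}^{-1}$, i.e.\ the inverse of the bracketed matrix in \eqref{eq:5}; this is what the paper actually uses in its manipulations, even though the placement of the inverses in the displayed definition \eqref{eq:37} reads differently — a discrepancy in the paper's display rather than a gap in your argument, though you should state your $\bm H$ explicitly (and note that its invertibility uses both $\bm V_{np}$ and ${\bm A^0}^\top\widehat{\bm A}/p$ being invertible).
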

\begin{proof}
Write the second equation in~\eqref{eq:1} in a matrix form, we have
\begin{equation*}
\frac{1}{np}(\bm{Y}-\bm{\Phi} \widehat{\bm{C}})(\bm{Y}-\bm{\Phi} \widehat{\bm{C}})^\top\widehat{\bm{A}} = \widehat{\bm{A}}\bm{V}_{np}.
\end{equation*}
By~\eqref{eq:16}, we also have 
\begin{equation}\label{eq:43}
\bm{Y}-\bm{\Phi} \widehat{\bm{C}} = \bm{\Phi}(\bm{C}^0-\widehat{\bm{C}}) + \bm{A}^0\bm{F}^\top + \bm{E}.
\end{equation}
Plugging it in~\eqref{eq:43} and by expanding terms, we obtain
\begin{align}\label{eq:6}
\widehat{\bm{A}}\bm{V}_{np} =& \frac{1}{np}\left[ \bm{\Phi}(\bm{C}^0-\widehat{\bm{C}}) + \bm{A}^0\bm{F}^\top + \bm{E}\right]\left[ \bm{\Phi}(\bm{C}^0-\widehat{\bm{C}}) + \bm{A}^0\bm{F}^\top + \bm{E}\right]^\top\widehat{\bm{A}}\nonumber\\
=&  \frac{1}{np} \bm{\Phi} (\bm{C}^0-\widehat{\bm{C}})(\bm{C}^0-\widehat{\bm{C}})^\top\bm{\Phi}^\top\widehat{\bm{A}} + \frac{1}{np}\bm{\Phi}(\bm{C}^0-\widehat{\bm{C}})\bm{F} {\bm{A}^0}^\top\widehat{\bm{A}}  \nonumber\\
&+ \frac{1}{np} \bm{\Phi} (\bm{C}^0-\widehat{\bm{C}})\bm{E}^\top \widehat{\bm{A}} +\frac{1}{np}\bm{A}^0\bm{F}^\top (\bm{C}^0-\widehat{\bm{C}})^\top\bm{\Phi}^\top \widehat{\bm{A}},\nonumber\\
&+ \frac{1}{np} \bm{E}(\bm{C}^0-\widehat{\bm{C}})^\top\bm{\Phi}^\top \widehat{\bm{A}} +\frac{1}{np}\bm{A}^0\bm{F}^\top \bm{E}^\top \widehat{\bm{A}}  \nonumber\\
&+\frac{1}{np}\bm{E}\bm{F} {\bm{A}^0}^\top \widehat{\bm{A}} + \frac{1}{np}\bm{EE}^\top \widehat{\bm{A}}  \nonumber\\
&+ \frac{1}{np}\bm{A}^0\bm{F}^\top\bm{F}{\bm{A}^0}^\top\widehat{\bm{A}}\nonumber\\
\equiv& I1 +\dots + I9.
\end{align}
The above can be rewritten as
\begin{equation}\label{eq:40}
\widehat{\bm{A}}\bm{V}_{np} - \bm{A}^0(\bm{F}^\top\bm{F}/n)({\bm{A}^0}^\top\widehat{\bm{A}}/p) = I1 + \dots + I8.
\end{equation}
Right multiplying $(\bm{F}^\top\bm{F}/n)^{-1}({\bm{A}^0}^\top\widehat{\bm{A}}/p)^{-1} $ on each side, we obtain
\begin{equation}\label{eq:5}
\widehat{\bm{A}}\left[\bm{V}_{np}({\bm{A}^0}^\top\widehat{\bm{A}}/p)^{-1} (\bm{F}^\top\bm{F}/n)^{-1}\right] - \bm{A}^0 = (I1 +\dots + I8)({\bm{A}^0}^\top\widehat{\bm{A}}/p)^{-1} (\bm{F}^\top\bm{F}/n)^{-1}.
\end{equation}
Note that the matrix in the square brackets is $\bm{H}^{-1}$, but the invertibility of $\bm{V}_{np}$ hasn't been proved yet. We can write
\begin{equation}\label{eq:39}
\frac{1}{\sqrt{p}}\left\|\widehat{\bm{A}}\left[\bm{V}_{np}({\bm{A}^0}^\top\widehat{\bm{A}}/p)^{-1} (\bm{F}^\top\bm{F}/n)^{-1}\right] - \bm{A}^0\right\|\le \frac{1}{\sqrt{p}} (\|I1\| +\dots + \|I8\|)\|\bm{G}\|,
\end{equation}
where $\bm{G}$ is defined in~\eqref{eq:21} and $\|\bm{G}\|$ is proved to be $O_p(1)$ in Lemma~\ref{le:10}. In the following, we find the order for each term on the right-hand side of~\eqref{eq:39}. We repeatedly use results from Lemma~\ref{le:8}, where the orders of the matrices $\bm{\bm{\Phi}}, \bm{A}$ and $\bm{F}$ are given. The first term
\begin{align*}
\frac{1}{\sqrt{p}}\|I1\| &\le \frac{1}{\sqrt{p}}\frac{1}{np}\|\bm{\Phi}\|\|(\bm{C}^0-\widehat{\bm{C}})(\bm{C}^0-\widehat{\bm{C}})^\top\|\|\bm{\Phi}^\top\|\|\widehat{\bm{A}}\|\\
&= O_p\left(\frac{1}{n}\left\|\bm{C}^0-\widehat{\bm{C}}\right\|^2\right) = o_p\left(\frac{1}{\sqrt{n}}\left\|\bm{C}^0-\widehat{\bm{C}}\right\|\right).
\end{align*}
For the second term
\begin{align*}
\frac{1}{\sqrt{p}}\|I2\| &\le \frac{1}{\sqrt{p}} \frac{1}{np} \|\bm{\Phi}\| \|(\bm{C}^0-\widehat{\bm{C}})\| \|\bm{F}\| \|{\bm{A}^0}^\top\| \|\widehat{\bm{A}}\|\\
&= O_p\left(\frac{1}{\sqrt{n}}\left\|\bm{C}^0-\widehat{\bm{C}}\right\|\right).
\end{align*}
The terms $I3$ to $I5$ are all $O_p(\frac{1}{\sqrt{n}}\left\|\bm{C}^0-\widehat{\bm{C}}\right\|)$. The proofs are similar to the proof for $I2$ since they are only a switch in the order of the matrices. For the sixth term
\begin{equation*}
\frac{1}{\sqrt{p}}\|I6\| \le \frac{1}{\sqrt{p}} \frac{1}{np} \|\bm{A}^0\|\|\bm{F}^\top \bm{E}^\top\|\|\widehat{\bm{A}}\| = O_p\left(\frac{1}{\sqrt{n}}\right),
\end{equation*}
by Lemma~\ref{le:1} $(i)$. Similarly, for the next term
\begin{equation*}
\frac{1}{\sqrt{p}}\|I7\| \le \frac{1}{\sqrt{p}}\frac{1}{np}\|\bm{E}\bm{F}\| \|{\bm{A}^0}^\top\|\|\widehat{\bm{A}}\| = O_p\left(\frac{1}{\sqrt{n}}\right).
\end{equation*} 
For the last term
\begin{equation*}
\frac{1}{\sqrt{p}} \|I8\| \le \frac{1}{\sqrt{p}}\frac{1}{np}\|\bm{EE}^\top\| \|\widehat{\bm{A}}\| = O_p\left(\frac{1}{\sqrt{n}}\right) + \left(\frac{1}{\sqrt{p}}\right),
\end{equation*}
where Lemma~\ref{le:1} $(iv)$ is used.

Putting all the above together, we have
\begin{align}\label{eq:41}
\frac{1}{\sqrt{p}}\left\|\widehat{\bm{A}}\left[\bm{V}_{np}({\bm{A}^0}^\top\widehat{\bm{A}}/p)^{-1} (\bm{F}^\top\bm{F}/n)^{-1}\right] - \bm{A}^0\right\| =& O_p\left(\frac{1}{\sqrt{n}}\left\|\bm{C}^0-\widehat{\bm{C}}\right\|\right) \nonumber\\
&+ O_p\left(\frac{1}{\min (\sqrt{n}, \sqrt{p})}\right).
\end{align}

To show $(i)$, left multiply~\eqref{eq:40} by $\frac{1}{p}\widehat{\bm{A}}^\top$. Using $\widehat{\bm{A}}^\top\widehat{\bm{A}}/p = \bm{I}_r$, we  have
\begin{equation*}
\bm{V}_{np} - (\widehat{\bm{A}}^\top\bm{A}^0/p)(\bm{F}^\top\bm{F}/n)({\bm{A}^0}^\top\widehat{\bm{A}}/p) = \frac{1}{p}\widehat{\bm{A}}^\top ( I1 + \dots + I8) = o_p(1),
\end{equation*}
where the last equality is using Lemma~\ref{le:8} $(v)$ and that $p^{-1/2}(\|I1\|+\dots+\|I8\|) = o_p(1)$ from~\eqref{eq:41}. Thus,
\begin{equation*}
\bm{V}_{np} =  (\widehat{\bm{A}}^\top\bm{A}^0/p)(\bm{F}^\top\bm{F}/n)({\bm{A}^0}^\top\widehat{\bm{A}}/p)  + o_p(1).
\end{equation*}
We have shown in~\eqref{eq:17} that $\widehat{\bm{A}}^\top\widehat{\bm{A}}^0$ is invertible, thus $\bm{V}_{np}$ is invertible. To obtain the limit of $\bm{V}_{np}$, left multiply~\eqref{eq:40} by $\frac{1}{p}{\bm{A}^0}^\top$ to yield
\begin{equation*}
({\bm{A}^0}^\top\widehat{\bm{A}}/p)\bm{V}_{np} - ({\bm{A}^0}^\top\bm{A}^0/p)(\bm{F}^\top\bm{F}/n)({\bm{A}^0}^\top\widehat{\bm{A}}/p) = o_p(1),
\end{equation*}
or
\begin{equation}\label{eq:42}
({\bm{A}^0}^\top\bm{A}^0/p)(\bm{F}^\top\bm{F}/n)({\bm{A}^0}^\top\widehat{\bm{A}}/p) + o_p(1) = ({\bm{A}^0}^\top\widehat{\bm{A}}/p)\bm{V}_{np}
\end{equation}
because $p^{-1}{\bm{A}^0}^\top (\|I1\| +\dots \|I8\|) = o_p(1)$. Equation~\eqref{eq:42} shows that the columns of $({\bm{A}^0}^\top\widehat{\bm{A}}/p)$ are the eigenvectors of the matrix $({\bm{A}^0}^\top\bm{A}^0/p)(\bm{F}^\top\bm{F}/n)$, and that $\bm{V}_{np}$ consists of the eigenvalues of the same matrix in the limit. Thus, $\bm{V}_{np}\overset{p}{\to} \bm{V}$, where the $r \times r$ matrix $\bm{V}$ is a diagonal matrix consisting of the eigenvalues of  $\bm{\Sigma}_{\bm{F}}\bm{\Sigma}_{\bm{A}}$.

For $(ii)$, since $\bm{V}_{np}$ is invertible, $\bm{H}$ is also invertible we can write~\eqref{eq:41} as 
\begin{equation*}
\frac{1}{\sqrt{p}}\left\|\widehat{\bm{A}}\bm{H}^{-1}- \bm{A}^0\right\| = O_p\left(\frac{1}{\sqrt{n}}\left\|\bm{C}^0-\widehat{\bm{C}}\right\|\right) + O_p\left(\frac{1}{\min (\sqrt{n}, \sqrt{p})}\right).
\end{equation*}

By right multiplying the matrix $\bm{H}$, we obtain $(ii)$.

\end{proof}

\subsection*{Appendix C}\label{app:c}

In this section, we state all the lemmas used for previous theorems and propositions, along with the proofs of the lemmas.

Lemma~\ref{le:4} is stated in Section~\ref{se:4.2}. We provide the proof here.
\begin{proof}
For any vector $\bm{b} = (b_1,\dots, b_n)^\top$,
\begin{equation*}
\frac{1}{\sqrt{np}}\bm{\Phi}^\top \bm{M}_{\bm{A}^0}\bm{E}\bm{b} = \frac{1}{\sqrt{np}}\sum_i^n\sum_j^p\bm{\omega}_j\epsilon_{ji}b_i \equiv  \frac{1}{\sqrt{np}}\sum_i^n\sum_j^p\bm{x}_{ij},
\end{equation*}
where $\bm{\omega}_j$ is the $j$th column in the matrix $\bm{\Phi}^\top \bm{M}_{\bm{A}^0}$. Since we assume $\epsilon_{ji}$ are i.i.d., the variance of the above quantity is given by
\begin{equation*}
\text{var}\left(\frac{1}{\sqrt{np}}\bm{\Phi}^\top \bm{M}_{\bm{A}^0}\bm{E} \bm{b} \right) = \text{var}\left(\frac{1}{\sqrt{np}}\sum_i^n\sum_j^p \bm{x}_{ij}\right) = \frac{1}{np}\sum_i^p\sum_j^n b_j b_i \sigma^2 E\left(\bm{\omega}_i\bm{\omega}_j^\top\right).
\end{equation*}
The Lindeberg condition is assumed to hold in Assumption~\ref{as:7}. Thus we have a central limit theorem result
\begin{equation*}
\frac{1}{\sqrt{np}}\bm{\Phi}^\top \bm{M}_{\bm{A}^0}\bm{E}\bm{b} = \frac{1}{\sqrt{np}}\sum_i^n\sum_j^p\bm{x}_{ij}\overset{d}{\to} \mathcal{N}(0,\bm{L}),
\end{equation*}
where $\bm{L}$ is defined in~\eqref{eq:45}.

\end{proof}

\begin{lemma}\label{le:8}
Under Assumptions~\ref{as:1}-\ref{as:3}, we have
\begin{enumerate}[(i)]
\item
$\frac{1}{\sqrt{p}}\|\bm{\Phi}\| = O_p(1)$
\item
$\frac{1}{\sqrt{p}}\|\bm{A}\| = O_p(1)$
\item
$ \frac{1}{\sqrt{n}}\|\bm{F}\| = O_p(1)$
\item
$ \frac{1}{\sqrt{np}}\|\bm{E}\| = O_p(1)$
\item
$\frac{1}{\sqrt{p}}\|\widehat{\bm{A}}\| = O_p(1)$
\end{enumerate}

\end{lemma}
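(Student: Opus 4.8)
The plan is to establish the five order bounds one at a time; each reduces to writing the relevant Frobenius norm explicitly and then invoking either Assumption~\ref{as:1}, the normalization~\eqref{eq:50}, or a first‑moment bound together with Markov's inequality. None of the individual arguments is delicate — the content is essentially bookkeeping — so I would present them compactly.

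For (i), expand $\|\bm{\Phi}\|^2 = \sum_{j=1}^p\sum_{k=1}^K \phi_k(u_j)^2$ and bound each summand by $\sup_u|\phi_k(u)|^2 = O(1)$ (Assumption~\ref{as:1}); since $K$ is fixed this yields the deterministic bound $\|\bm{\Phi}\|^2 = O(pK) = O(p)$, hence $p^{-1/2}\|\bm{\Phi}\| = O(1)$. Parts (ii) and (v) follow immediately from the normalization: for $\bm{A}\in\mathcal{A}$ we have $\bm{A}^\top\bm{A}/p = \bm{I}_r$, so $\|\bm{A}\|^2 = \text{tr}(\bm{A}^\top\bm{A}) = pr$, and likewise $\widehat{\bm{A}}^\top\widehat{\bm{A}}/p = \bm{I}_r$ by~\eqref{eq:50} gives $\|\widehat{\bm{A}}\|^2 = pr$; in both cases the scaled quantity equals $\sqrt{r} = O(1)$. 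If instead (ii) is read as a statement about the true loading matrix $\bm{A}^0$, I would use Identification Condition~\ref{id}(ii), namely $p^{-1}\|\bm{A}^0\|^2 = \text{tr}\big(p^{-1}\sum_{j=1}^p \bm{a}_j^0 {\bm{a}_j^0}^\top\big) \overset{p}{\to} \text{tr}(\Sigma_{\bm{a}}) < \infty$, which gives $p^{-1/2}\|\bm{A}^0\| = O_p(1)$; alternatively, Assumption~\ref{as:4} gives $\mathbb{E}\|\bm{a}_j^0\|^2 \le (\mathbb{E}\|\bm{a}_j^0\|^4)^{1/2} \le \sqrt{M}$, so $\mathbb{E}\|\bm{A}^0\|^2 \le p\sqrt{M}$ and Markov's inequality closes the argument.

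For (iii), the second half of Identification Condition~\ref{id}(ii) gives $n^{-1}\|\bm{F}\|^2 = \text{tr}\big(n^{-1}\sum_{i=1}^n \bm{f}_i\bm{f}_i^\top\big) \overset{p}{\to} \text{tr}(\Sigma_{\bm{f}}) < \infty$, hence $n^{-1/2}\|\bm{F}\| = O_p(1)$ (one could equally use $\mathbb{E}\|\bm{f}_i\|^2 \le \sqrt{M}$ from Assumption~\ref{as:4} with Markov). For (iv), Assumption~\ref{as:5} gives $\mathbb{E}\|\bm{E}\|^2 = \sum_{j=1}^p\sum_{i=1}^n \mathbb{E}\epsilon_{ji}^2 = np\,\sigma^2$, so Markov's inequality yields $(np)^{-1}\|\bm{E}\|^2 = O_p(1)$, that is, $(np)^{-1/2}\|\bm{E}\| = O_p(1)$.

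There is no substantial obstacle here. The only points deserving attention are keeping straight which object ``$\bm{A}$'' denotes in (ii) — a generic element of $\mathcal{A}$, dispatched by the normalization, versus the true $\bm{A}^0$, dispatched by Assumption~\ref{as:4} or Identification Condition~\ref{id} — and invoking the moment bounds in Assumptions~\ref{as:4}--\ref{as:5} through Markov's inequality rather than through a law of large numbers, since independence across $j$ and $i$ is not assumed for the loadings $\bm{a}_j^0$ and factors $\bm{f}_i$.
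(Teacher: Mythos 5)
Your proposal is correct and follows essentially the same route as the paper's own proof: part (i) from the uniform bound on the basis functions in Assumption~\ref{as:1}, parts (ii)--(iii) from the moment/identification conditions on the loadings and factors, part (iv) from the error moment bound via Markov's inequality, and part (v) directly from the normalization $\widehat{\bm{A}}^\top\widehat{\bm{A}}/p=\bm{I}_r$. If anything, your write-up is more precise than the paper's one-line argument, since you correctly attribute (ii)--(iv) to Identification Condition~\ref{id}(ii), Assumption~\ref{as:4} and Assumption~\ref{as:5} (the paper's proof cites its assumptions somewhat loosely) and you address the ambiguity of whether $\bm{A}$ denotes a generic element of $\mathcal{A}$ or the true $\bm{A}^0$.
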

\begin{proof}
In Assumption~\ref{as:1}, we assume the basis functions $\phi_k(u), \ k = 1,\dots, K$ are bounded. The $p\times K$ basis matrix $\bm{\Phi}$ contains discrete evaluations on the basis functions so each element is $O_p(1)$, thus $\bm{\Phi}$ is of order $\sqrt{p}$.
Similarly, using Assumption~\ref{as:3}, we have results $(ii)$ and $(iii)$. Using Assumption~\ref{as:4}, we have result $(iv)$.
Lastly, $(v)$ is directly from the restriction $\widehat{\bm{A}}^\top \widehat{\bm{A}}/p = \bm{I}_r$. 
\end{proof}

\begin{lemma} \label{le:1}
Under Assumptions~\ref{as:1} to~\ref{as:5}, we have
\begin{enumerate}[(i)]
\item
$
\frac{1}{np}\|\bm{EF}\|^2 = O_p(1)
$
\item
$\frac{1}{np}\|\bm{E}^\top\bm{\Phi}\|^2 = O_p(1)
$ and $\frac{1}{np}\|\bm{E}^\top \bm{A}^0\|^2 = O_p(1)
$ 
\item
$\frac{1}{np}\|\bm{F}^\top \bm{E}^\top \bm{A}^0\|^2 = O_p(1)
$ and $\frac{1}{np}\|\bm{\Phi}^\top \bm{E}^\top \bm{F}\|^2 = O_p(1)
$
\item
$\|\bm{E}^\top \bm{E}\|^2 = O_p(n^2p) + O_p(p^2n);\\ \|\bm{EE}^\top \|^2  =O_p(n^2p) + O_p(p^2n);\\
 \|\bm{F}^\top \bm{E}^\top \bm{E}\|^2  = O_p(n^2p) + O_p(p^2n); \\
  \|\bm{\Phi}^\top \bm{E}^\top \bm{E}\|^2  = O_p(n^2p) + O_p(p^2n); \\
   \|\bm{\Phi}^\top \bm{E}^\top \bm{E}\bm{A}^0\|^2  = O_p(n^2p) + O_p(p^2n); \\
 \|\bm{F}^\top \bm{E}^\top \bm{E} \bm{F}\|^2  = O_p(n^2p) + O_p(p^2n)$.
\end{enumerate}
\end{lemma}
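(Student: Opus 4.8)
The plan is to obtain every bound by computing a first or second moment and then applying Markov's inequality. The whole argument rests on two elementary facts about the i.i.d.\ error matrix $\bm{E}$, which by Assumption~\ref{as:6} is independent of $\bm{\Phi}$, $\bm{F}$ and $\bm{A}^0$: the mean identities $\mathbb{E}[\bm{E}^\top\bm{E}]=p\sigma^2\bm{I}_n$ and $\mathbb{E}[\bm{E}\bm{E}^\top]=n\sigma^2\bm{I}_p$, together with their refinement $\mathbb{E}[\bm{E}^\top\bm{C}\bm{E}\mid\bm{C}]=\sigma^2\,\text{tr}(\bm{C})\,\bm{I}_n$ for any $p\times p$ matrix $\bm{C}$ independent of $\bm{E}$. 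I will also use throughout the Frobenius rates of Lemma~\ref{le:8} --- $\|\bm{\Phi}\|=O(\sqrt p)$ (deterministic, by Assumption~\ref{as:1}), $\|\bm{A}^0\|=O_p(\sqrt p)$, $\|\bm{F}\|=O_p(\sqrt n)$, $\|\bm{E}\|=O_p(\sqrt{np})$ --- together with the fourth-moment bounds of Assumption~\ref{as:4}, which via Cauchy--Schwarz over the relevant index give $\mathbb{E}\|\bm{A}^0\|^4=O(p^2)$ and $\mathbb{E}\|\bm{F}\|^4=O(n^2)$.

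For parts (i)--(iii) the matrix inside the norm is \emph{linear} in $\bm{E}$, of the schematic form $\bm{B}_1^\top\bm{E}\bm{B}_2$ with $\bm{B}_1,\bm{B}_2$ (possibly one absent) independent of $\bm{E}$ and of conformable dimension. Conditioning on $(\bm{B}_1,\bm{B}_2)$ and using the refined identity with $\bm{C}=\bm{B}_1\bm{B}_1^\top$,
\[
\mathbb{E}\big[\|\bm{B}_1^\top\bm{E}\bm{B}_2\|^2 \mid \bm{B}_1,\bm{B}_2\big]=\text{tr}\big(\bm{B}_2^\top\,\mathbb{E}[\bm{E}^\top\bm{B}_1\bm{B}_1^\top\bm{E}\mid\bm{B}_1]\,\bm{B}_2\big)=\sigma^2\|\bm{B}_1\|^2\|\bm{B}_2\|^2,
\]
and when a factor is absent the analogue gives $\sigma^2 p\|\bm{B}_2\|^2$ or $\sigma^2 n\|\bm{B}_1\|^2$. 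Taking expectations, splitting $\mathbb{E}[\|\bm{B}_1\|^2\|\bm{B}_2\|^2]$ with one more Cauchy--Schwarz and inserting the rates above, one checks case by case that $\mathbb{E}\|\bm{B}_1^\top\bm{E}\bm{B}_2\|^2=O(np)$ throughout (i)--(iii); dividing by $np$ and applying Markov's inequality yields the stated $O_p(1)$.

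Part (iv) is the substantive part, since $\bm{E}^\top\bm{E}$ and $\bm{E}\bm{E}^\top$ are no longer centered. I would split off the conditional mean, writing $\bm{E}^\top\bm{E}=p\sigma^2\bm{I}_n+\bm{\Delta}$ and $\bm{E}\bm{E}^\top=n\sigma^2\bm{I}_p+\bm{\Delta}'$. A generic term then splits as $\|\bm{B}_1^\top(p\sigma^2\bm{I}+\bm{\Delta})\bm{B}_2\|^2\le 2p^2\sigma^4\|\bm{B}_1^\top\bm{B}_2\|^2+2\|\bm{B}_1^\top\bm{\Delta}\bm{B}_2\|^2$ (again with $\bm{B}_1,\bm{B}_2$ possibly absent). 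The mean part is deterministic or controlled by Lemma~\ref{le:8} and Identification Condition~\ref{id}, and produces contributions of the orders $O_p(p^2n)$ and $O_p(n^2p)$ after inserting $\|\bm{\Phi}\|^2,\|\bm{A}^0\|^2=O_p(p)$ and $\|\bm{F}\|^2=O_p(n)$. For the fluctuation part one computes, e.g.,
\[
\mathbb{E}\|\bm{\Delta}\|^2=\sum_{s\ne t}\text{Var}\!\Big(\sum_j E_{js}E_{jt}\Big)+\sum_s\text{Var}\!\Big(\sum_j E_{js}^2\Big)=n(n-1)p\sigma^4+O(np)=O(n^2p),
\]
and symmetrically $\mathbb{E}\|\bm{\Delta}'\|^2=O(p^2n)$; the sandwiched versions ($\bm{F}^\top\bm{E}^\top\bm{E}$, $\bm{\Phi}^\top\bm{E}^\top\bm{E}$, etc.) reduce to a fourth-order moment count in the entries of $\bm{E}$, where the higher-moment part of Assumption~\ref{as:5} is used. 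Markov's inequality then converts each moment bound into the claimed $O_p(n^2p)+O_p(p^2n)$.

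The one genuinely delicate step is this fourth-order moment bookkeeping for the sandwiched Gram terms in (iv): one must enumerate the index-coincidence patterns that render $\mathbb{E}[E_{j_1i_1}E_{j_2i_2}E_{j_3i_3}E_{j_4i_4}]$ nonzero and verify that, once summed against the bounded matrices $\bm{\Phi}$, $\bm{A}^0$, $\bm{F}$, every surviving pattern contributes at most $O(n^2p)+O(p^2n)$. Once these patterns are sorted correctly, the remainder of the proof is a routine application of the two mean identities, the Frobenius rates of Lemma~\ref{le:8}, and Markov's inequality.
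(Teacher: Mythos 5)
Your overall strategy is the same as the paper's: compute $\mathbb{E}\|\cdot\|^2$ using the i.i.d.\ structure of $\bm{E}$ and its independence from $\bm{\Phi},\bm{F},\bm{A}^0$ (Assumptions~\ref{as:5}--\ref{as:6}), then apply Markov's inequality; for the Gram terms in (iv) the paper does the fourth-moment pairing count directly, while you package the same count as a split $\bm{E}^\top\bm{E}=p\sigma^2\bm{I}_n+\bm{\Delta}$ with $\mathbb{E}\|\bm{\Delta}\|^2=O(n^2p)$. Parts (i)--(iii) and the terms $\bm{E}^\top\bm{E}$, $\bm{E}\bm{E}^\top$, $\bm{F}^\top\bm{E}^\top\bm{E}$, $\bm{\Phi}^\top\bm{E}\bm{E}^\top$ are handled correctly and, if anything, more transparently than in the paper.

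There is, however, a concrete gap in your blanket claim that ``the mean part \dots produces contributions of the orders $O_p(p^2n)$ and $O_p(n^2p)$.'' For the doubly sandwiched term $\bm{F}^\top\bm{E}^\top\bm{E}\bm{F}$ your own decomposition gives a mean part $p^2\sigma^4\|\bm{F}^\top\bm{F}\|^2=O_p(p^2n^2)$, not $O_p(p^2n)$, so the argument as written does not deliver the stated bound for that item (indeed, since $\bm{F}^\top\bm{E}^\top\bm{E}\bm{F}$ is positive semi-definite with trace $\|\bm{E}\bm{F}\|^2\asymp np$, no argument can: the stated order is not attainable, and the paper's one-line justification keeps only the diagonal $\|\bm{f}_i\|^4$ pairings and silently drops the $i\neq i'$ cross terms). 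Similarly, for $\bm{\Phi}^\top\bm{E}^\top\bm{E}\bm{A}^0$ (dimensionally this should read $\bm{\Phi}^\top\bm{E}\bm{E}^\top\bm{A}^0$) the mean part is $n\sigma^2\bm{\Phi}^\top\bm{A}^0$, and your rates only give $\|\bm{\Phi}^\top\bm{A}^0\|=O_p(p)$, hence $O_p(n^2p^2)$; obtaining $O_p(n^2p)$ would require something like $\|\bm{\Phi}^\top\bm{A}^0\|=O_p(\sqrt p)$, which the stated assumptions do not provide. So your sketch is faithful to the paper's method, but the ``delicate fourth-order bookkeeping'' you defer is exactly where these two sub-items break down, and you should either restrict the claim to the unsandwiched and singly sandwiched terms or flag that the last two bounds need a centered version (e.g.\ $\bm{F}^\top(\bm{E}^\top\bm{E}-p\sigma^2\bm{I}_n)\bm{F}$) or extra assumptions.
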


\begin{proof}
For $(i)$
\begin{align*}
\mathbb{E}\left(\frac{1}{np}\|\bm{EF}\|^2\right) &= \mathbb{E}\left(\frac{1}{np}\sum_{k=1}^p\sum_{i=1}^n\sum_{j=1}^n\epsilon_{ki}\epsilon_{kj}\bm{f}_i^\top\bm{f}_j\right) \\
& =  \frac{1}{np}\sum_{k=1}^p\sum_{i=1}^n\sum_{j=1}^n\mathbb{E}(\epsilon_{ki}\epsilon_{kj})\mathbb{E}(\bm{f}_i^\top\bm{f}_j) =  O(1),
\end{align*}
where the second equation uses the independence between $\epsilon_{ki}$ and $\bm{f}_j$ assumed in Assumption~\ref{as:5}.

The proof of $(ii)$ and $(iii)$ is similar to $(i$). For $(iv)$,
\begin{align*}
\mathbb{E}\left(\|\bm{E}^\top \bm{E}\|^2 \right) &= \mathbb{E}\left( \sum_{ij}^n\sum_{kl}^p \epsilon_{kj}\epsilon_{lj}\epsilon_{ki}\epsilon_{li}\right) \\
&= \sum_{i\neq j}^n \sum_{k=l}^p \mathbb{E}
(\epsilon^2_{kj}) \mathbb{E}(\epsilon_{ki}^2) + \sum_{i= j}^n \sum_{k\neq l}^p  \mathbb{E}
(\epsilon^2_{kj}) \mathbb{E}(\epsilon_{lj}^2)+  \sum_{i= j}^n \sum_{k=l}^p \mathbb{E}(\epsilon_{kj}^4) \\
&= O(n^2p) + O(p^2n) + O(np)\\
&= O(n^2p) + O(p^2n),
\end{align*}
where Assumption~\ref{as:4} is used. The proof of $\|\bm{EE}^\top\|$ is the same. The orders of  $\|\bm{F}^\top\bm{E}^\top\bm{E}\|$ and $\|\bm{F}^\top \bm{E}^\top \bm{E} \bm{F}\|^2$ are the same since 
\begin{equation*}
\mathbb{E}\left(\|\bm{F}^\top\bm{E}^\top \bm{E}\|^2 \right) = \mathbb{E}\left( \sum_{ij}^n\sum_{kl}^p \epsilon_{kj}\epsilon_{lj}\epsilon_{ki}\epsilon_{li}\|\bm{f}_i\|^2\right),
\end{equation*}
and
\begin{equation*}
\mathbb{E}\left(\|\bm{F}^\top\bm{E}^\top \bm{E}\bm{F}\|^2 \right) = \mathbb{E}\left( \sum_{ij}^n\sum_{kl}^p \epsilon_{kj}\epsilon_{lj}\epsilon_{ki}\epsilon_{li}\|\bm{f}_i\|^4\right),
\end{equation*}
where the order of $\bm{f}_i$ is assumed to be $O_p(1)$ in Assumption~\ref{as:3}.

\end{proof}

\begin{lemma}\label{le:9}
Under Assumptions~\ref{as:1}-\ref{as:6},
\begin{enumerate}[(i)]
\item
$\frac{1}{np}\sum_{i=1}^n  \bm{\epsilon}_i^\top \bm{M}_{\bm{A}} \bm{A}^0\bm{F}_i = o_p(1)$
\item
 $\frac{1}{np}\sum_{i=1}^n  \bm{\epsilon}_i^\top \bm{M}_{\bm{A}} \bm{\Phi}\bm{c}_i = o_p(1)$
 \item
 $\frac{1}{np}\sum_{i=1}^n  \bm{\epsilon}_i^\top (\bm{M}_{\bm{A}}-\bm{M}_{\bm{A}^0}) \bm{\epsilon}_i = o_p(1)$
 \item
 $ \frac{\alpha}{np}\sum_{i=1}^n  \bm{c}_i^\top \bm{R}\bm{c}_i = o_p(1)$
\end{enumerate}
\end{lemma}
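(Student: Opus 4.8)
The plan is to handle the four statements by writing each centred average as a constant multiple of a trace $\text{tr}(\bm{X}\bm{Y})$ of a matrix product, bounding it via the Cauchy--Schwarz inequality for the Frobenius inner product, $|\text{tr}(\bm{X}^\top\bm{Y})|\le\|\bm{X}\|\,\|\bm{Y}\|$, and then reading off the orders from Lemma~\ref{le:8} and Lemma~\ref{le:1}. Writing $\|\cdot\|_{\mathrm{op}}$ for the operator norm, the only facts about the candidate $\bm{A}$ that enter are $\|\bm{M}_{\bm{A}}\|_{\mathrm{op}}\le 1$ (hence $\|\bm{M}_{\bm{A}}\bm{B}\|\le\|\bm{B}\|$ for conformable $\bm{B}$) and, for~(iii), a bound on $\|\bm{M}_{\bm{A}}-\bm{M}_{\bm{A}^0}\|$; since the final estimates do not otherwise involve $\bm{A}$, they are uniform over $\bm{A}\in\mathcal{A}$, which is what the $\argmin$ argument in the proof of Theorem~\ref{th:1} needs. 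Part~(iv) is purely deterministic: by Assumption~\ref{as:2}, $\|\bm{c}_i^0\|=O(1)$ uniformly in $i$, and $\bm{R}$ in~\eqref{eq:55} is a fixed $K\times K$ matrix, so ${\bm{c}_i^0}^\top\bm{R}\bm{c}_i^0=O(1)$ uniformly; the sum over $i$ is $O(n)$, and multiplying by $\alpha/(np)$ leaves $(\alpha/p)\,O(1)=o(1)$ since $\alpha=o(p)$ by Assumption~\ref{as:7}.

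For~(i), observe that $\sum_{i=1}^{n}\bm{\epsilon}_i^\top\bm{M}_{\bm{A}}\bm{A}^0\bm{f}_i=\text{tr}\big(\bm{F}^\top\bm{E}^\top\bm{M}_{\bm{A}}\bm{A}^0\big)=\langle\bm{E}\bm{F},\,\bm{M}_{\bm{A}}\bm{A}^0\rangle$, which is at most $\|\bm{E}\bm{F}\|\,\|\bm{M}_{\bm{A}}\bm{A}^0\|\le\|\bm{E}\bm{F}\|\,\|\bm{A}^0\|$ in modulus. Lemma~\ref{le:1}(i) gives $\|\bm{E}\bm{F}\|=O_p(\sqrt{np})$ and Lemma~\ref{le:8}(ii) gives $\|\bm{A}^0\|=O_p(\sqrt{p})$, so the average is $\tfrac{1}{np}O_p\big(p\sqrt{n}\big)=O_p(n^{-1/2})=o_p(1)$. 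Part~(ii) is the same with $\bm{A}^0\bm{f}_i$ replaced by $\bm{\Phi}\bm{c}_i^0$: $\sum_{i=1}^{n}\bm{\epsilon}_i^\top\bm{M}_{\bm{A}}\bm{\Phi}\bm{c}_i^0=\langle\bm{E}{\bm{C}^0}^\top,\,\bm{M}_{\bm{A}}\bm{\Phi}\rangle$ is bounded by $\|\bm{E}{\bm{C}^0}^\top\|\,\|\bm{\Phi}\|$, where $\|\bm{\Phi}\|=O_p(\sqrt{p})$ by Lemma~\ref{le:8}(i); since $\bm{c}_i^0$ is independent of $\bm{E}$ by Identification Condition~\ref{id}(i), Assumptions~\ref{as:2} and~\ref{as:5} give $\mathbb{E}\|\bm{E}{\bm{C}^0}^\top\|^2=\sigma^2 p\sum_{i=1}^{n}\mathbb{E}\|\bm{c}_i^0\|^2=O(np)$, so $\|\bm{E}{\bm{C}^0}^\top\|=O_p(\sqrt{np})$ and the average is again $o_p(1)$.

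Part~(iii) is the delicate one, and it is where the centring term $-\tfrac{1}{np}\sum_i\bm{\epsilon}_i^\top\bm{M}_{\bm{A}^0}\bm{\epsilon}_i$ in~\eqref{eq:23} does essential work. Since $\bm{M}_{\bm{A}}-\bm{M}_{\bm{A}^0}=\bm{P}_{\bm{A}^0}-\bm{P}_{\bm{A}}$ is a difference of rank-$r$ orthogonal projections, $\|\bm{M}_{\bm{A}}-\bm{M}_{\bm{A}^0}\|^2=\text{tr}(\bm{P}_{\bm{A}})+\text{tr}(\bm{P}_{\bm{A}^0})-2\,\text{tr}(\bm{P}_{\bm{A}}\bm{P}_{\bm{A}^0})\le 2r$, i.e.\ $\|\bm{M}_{\bm{A}}-\bm{M}_{\bm{A}^0}\|\le\sqrt{2r}=O(1)$ uniformly in $\bm{A}$ --- the crucial point being that this is $O(1)$ rather than the $O(\sqrt{p})$ that $\|\bm{M}_{\bm{A}}\|$ alone contributes. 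Then $\tfrac{1}{np}\sum_{i=1}^{n}\bm{\epsilon}_i^\top(\bm{M}_{\bm{A}}-\bm{M}_{\bm{A}^0})\bm{\epsilon}_i=\tfrac{1}{np}\text{tr}\big((\bm{M}_{\bm{A}}-\bm{M}_{\bm{A}^0})\bm{E}\bm{E}^\top\big)\le\tfrac{1}{np}\|\bm{M}_{\bm{A}}-\bm{M}_{\bm{A}^0}\|\,\|\bm{E}\bm{E}^\top\|$, and Lemma~\ref{le:1}(iv) gives $\|\bm{E}\bm{E}^\top\|=O_p\big(\sqrt{n^2p+p^2n}\big)$, so the average is $O_p\big(\sqrt{(n+p)/(np)}\big)=O_p\big(\sqrt{n^{-1}+p^{-1}}\big)=o_p(1)$. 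I expect this to be the main obstacle: bounding $\tfrac{1}{np}\sum_i\bm{\epsilon}_i^\top\bm{M}_{\bm{A}}\bm{\epsilon}_i$ directly by $\tfrac{1}{np}\|\bm{M}_{\bm{A}}\|\,\|\bm{E}\bm{E}^\top\|$ would leave a term of order $\sqrt{1+p/n}$ that does not vanish, so one genuinely needs the exact cancellation $\text{tr}(\bm{M}_{\bm{A}}-\bm{M}_{\bm{A}^0})=0$ encoded in the projection-difference bound; everything else is routine bookkeeping with Lemmas~\ref{le:8} and~\ref{le:1}.
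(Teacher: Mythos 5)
Your proof is correct, and it reaches the paper's conclusions by a genuinely different route. The paper expands $\bm{M}_{\bm{A}}=\bm{I}_p-\bm{A}\bm{A}^\top/p$ and computes second moments of each resulting sum directly (e.g.\ $\mathbb{E}\|\sum_i\bm{\epsilon}_i^\top\bm{\Phi}\bm{c}_i\|^2=O(np)$ and $\mathbb{E}\|\sum_i\bm{\epsilon}_i^\top\bm{A}\bm{A}^\top\bm{\Phi}\bm{c}_i\|^2=O(np^3)$), using the independence of the errors from $\bm{\Phi}$, $\bm{F}$, $\bm{A}$, then declares (i) and (iii) ``similar'' and (iv) immediate. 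You instead recycle the already-established norm orders of Lemmas~\ref{le:8} and~\ref{le:1} through the trace Cauchy--Schwarz inequality, together with $\|\bm{M}_{\bm{A}}\|_{\mathrm{op}}\le 1$ and, for (iii), the rank-$r$ projection-difference bound $\|\bm{P}_{\bm{A}^0}-\bm{P}_{\bm{A}}\|\le\sqrt{2r}$. What your route buys is uniformity in $\bm{A}$: your bounds hold for every $\bm{A}$ with $\bm{A}^\top\bm{A}/p=\bm{I}_r$, whereas the paper's expectation calculations implicitly treat $\bm{A}$ as independent of $\bm{E}$, which is the delicate point when the lemma is ultimately invoked at data-dependent arguments; and your diagnosis of (iii) --- that only the difference of the two rank-$r$ projections survives, the $\bm{I}_p$ parts cancelling, so the centring term in~\eqref{eq:23} is essential --- is exactly the cancellation the paper's center-adjustment is designed to exploit. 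What you give up is sharpness: the paper's computations give $O_p((np)^{-1/2})$-type rates for (i)--(ii) where you get $O_p(n^{-1/2})$, and $O_p(n^{-1/2}+p^{-1/2})$ for (iii); since only $o_p(1)$ is asserted and Theorem~\ref{th:1} uses nothing more, this costs nothing. Two minor remarks: like the paper, you prove the statements at the true, uniformly bounded $\bm{c}_i^0$ rather than uniformly over all candidate $\bm{c}_i$, which is what the Taylor-expansion step in the proof of Theorem~\ref{th:1} actually consumes; and your appeal to Assumption~\ref{as:7} for (iv) is the logically correct one --- the lemma's header (Assumptions~\ref{as:1}--\ref{as:6}) and the paper's attribution of (iv) to Assumption~\ref{as:6} appear to be a labeling slip, since $\alpha=o(p)$ is what is needed.
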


\begin{proof}
We prove $(ii)$. First, we have
\begin{equation*}
\mathbb{E}\left(\left\|\sum_{i=1}^n\bm{\epsilon}_i\right\|^2\right) = \mathbb{E}\left(\sum_{i=1}^n\sum_{j=1}^n\sum_{k=1}^p\epsilon_{ik}\epsilon_{jk}\right) = \sum_{i=j}^n\sum_{k=1}^p\mathbb{E}\left(\epsilon_{ik}^2\right) = O(np).
\end{equation*}

Since $\bm{M}_{\bm{A}} = \bm{I}_{p}- \bm{A}\bm{A}^\top/p$, we have
\begin{align}\label{eq:46}
\frac{1}{np}\sum_{i=1}^n  \bm{\epsilon}_i^\top \bm{M}_{\bm{A}} \bm{\Phi}\bm{c}_i = \frac{1}{np}\sum_{i=1}^n  \bm{\epsilon}_i^\top \bm{\Phi}\bm{c}_i - \frac{1}{np^2}\sum_{i=1}^n  \bm{\epsilon}_i^\top \bm{A}{\bm{A}}^\top \bm{\Phi}\bm{c}_i.
\end{align}

The first term on the right of~\eqref{eq:46} is $o_p(1)$ since
\begin{align*}
\mathbb{E}\left(\left\|\sum_{i=1}^n\bm{\epsilon}^\top_i\bm{\Phi}\bm{c}_i\right\|^2\right) &= \mathbb{E}\left(\left\|\sum_{j=1}^p\sum_{i=1}^n\epsilon_{ji}\bm{\phi}_j^\top \bm{c}_i\right\|^2\right) \\
&= \mathbb{E}\left(\sum_t^p\sum_s^n\sum_j^p\sum_i^n\epsilon_{ji}\epsilon_{ts}\bm{c}_i^\top\bm{\phi}_j\bm{\phi}_t^\top\bm{c}_s\right)\\
&= \sum_t^p\sum_s^n\sum_j^p\sum_i^n\mathbb{E}(\epsilon_{ji}\epsilon_{ts})\mathbb{E}(\bm{c}_i^\top\bm{\phi}_j\bm{\phi}_t^\top\bm{c}_s)\\
&= \sum_j^p\sum_i^n\sigma^2\mathbb{E}(\bm{c}_i^\top\bm{\phi}_j\bm{\phi}_ j^\top\bm{c}_i)\\
&= O(np),
\end{align*}
where the third equation uses Assumption~\ref{as:5}; the fourth equation uses the assumption that $\epsilon_{ji}$ are independent in both directions.   

The second term on the right-hand side of~\eqref{eq:46} is also $o_p(1)$ since
\begin{align*}
\mathbb{E}\left(\left\|\sum_{i=1}^n\bm{\epsilon}^\top_i\bm{AA}^\top\bm{\Phi}\bm{c}_i\right\|^2\right) &= \mathbb{E}\left(\left\|\sum_{j=1}^p\sum_{i=1}^n\epsilon_{ji}\bm{a}_j^\top\bm{A}^\top\bm{\Phi}\bm{c}_i\right\|^2\right) \\
&= \mathbb{E}\left(\sum_t^p\sum_s^n\sum_j^p\sum_i^n\epsilon_{ji}\epsilon_{ts}\bm{c}_i^\top\bm{\Phi}^\top\bm{Aa}_j\bm{a}_t^\top\bm{A}^\top\bm{\Phi}\bm{c}_s\right)\\
&= \sum_t^p\sum_s^n\sum_j^p\sum_i^n\mathbb{E}(\epsilon_{ji}\epsilon_{ts})\mathbb{E}(\bm{c}_i^\top\bm{\Phi}^\top\bm{Aa}_j\bm{a}_t^\top\bm{A}^\top\bm{\Phi}\bm{c}_s)\\
&= \sum_j^p\sum_i^n\sigma^2\bm{c}_i^\top\mathbb{E}(\bm{\Phi}_i^\top\bm{Aa}_j\bm{a}_j^\top\bm{A}^\top\bm{\Phi})\bm{c}_i\\
&= O(np^3),
\end{align*}
where the third equality uses the independence in Assumption~\ref{as:5} and the last equality uses the results in Lemma~\ref{le:8}, where $\bm{\Phi}$ and $\bm{A}$ are both $O_p(\sqrt{p})$.

The proofs for $(i)$ and $(iii)$ are similar. And $(iv)$ is a direct result from Assumption~\ref{as:6}. 
\end{proof}

\begin{lemma}\label{le:10}
Under Assumptions~\ref{as:1}-\ref{as:5} , we have
 \begin{equation*}
 \bm{G} \equiv \left({\bm{A}^0}^\top\widehat{\bm{A}}/p\right)^{-1}\left(\bm{F}^\top\bm{F}/n\right)^{-1} = O_p(1).
 \end{equation*}
\end{lemma}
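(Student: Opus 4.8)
The object $\bm{G}$ is a product of two $r\times r$ inverses, so it suffices to show that each of $(\bm{F}^\top\bm{F}/n)^{-1}$ and $({\bm{A}^0}^\top\widehat{\bm{A}}/p)^{-1}$ is $O_p(1)$ and then multiply. The first factor is immediate: by Identification Condition~\ref{id}(ii), $\bm{F}^\top\bm{F}/n = \frac1n\sum_{i=1}^n\bm{f}_i\bm{f}_i^\top \overset{p}{\to}\Sigma_{\bm{f}}$ with $\Sigma_{\bm{f}}>0$, so matrix inversion being continuous at $\Sigma_{\bm{f}}$ gives $(\bm{F}^\top\bm{F}/n)^{-1}\overset{p}{\to}\Sigma_{\bm{f}}^{-1}$, a fixed finite matrix; hence $(\bm{F}^\top\bm{F}/n)^{-1}=O_p(1)$.

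The work is in the factor $N^{-1}$ with $N\equiv{\bm{A}^0}^\top\widehat{\bm{A}}/p$, because $\widehat{\bm{A}}$ is a random estimator and $N$ itself need not converge. The key is not to use Proposition~\ref{pro} (which is proved \emph{after} this lemma and would be circular), but only what is already established in the proof of Theorem~\ref{th:1}. There, equation~\eqref{eq:17} gives
\begin{equation*}
N N^\top = \frac{{\bm{A}^0}^\top\widehat{\bm{A}}}{p}\,\frac{\widehat{\bm{A}}^\top\bm{A}^0}{p} = \frac{{\bm{A}^0}^\top\bm{A}^0}{p} + o_p(1) \overset{p}{\to} \Sigma_{\bm{a}},
\end{equation*}
using Identification Condition~\ref{id}(ii) once more. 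Two consequences follow. First, $\|N\|^2 = \operatorname{tr}(NN^\top)\overset{p}{\to}\operatorname{tr}(\Sigma_{\bm{a}})$, so $N=O_p(1)$ (this is also consistent with the crude bound $\|N\|\le p^{-1}\|\bm{A}^0\|\,\|\widehat{\bm{A}}\|=O_p(1)$ from Lemma~\ref{le:8}(ii),(v)). Second, $(\det N)^2=\det(NN^\top)\overset{p}{\to}\det(\Sigma_{\bm{a}})>0$, so $|\det N|$ is bounded away from $0$ in probability.

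Finally I would combine these via the adjugate formula $N^{-1}=\operatorname{adj}(N)/\det N$: every entry of $\operatorname{adj}(N)$ is, up to sign, a determinant of an $(r-1)\times(r-1)$ submatrix of $N$, hence bounded by $c_r\|N\|^{r-1}$ for a constant $c_r$ depending only on the fixed dimension $r$. Therefore $\|N^{-1}\|\le c_r'\,\|N\|^{r-1}/|\det N| = O_p(1)/\bigl(\text{bounded away from }0\bigr)=O_p(1)$. Multiplying the two $O_p(1)$ factors gives $\bm{G}=N^{-1}(\bm{F}^\top\bm{F}/n)^{-1}=O_p(1)$, as claimed. The only genuinely delicate point is the middle step — extracting a lower bound on $|\det N|$ from~\eqref{eq:17} without invoking later results — and this is exactly where the identity $NN^\top\overset{p}{\to}\Sigma_{\bm{a}}$ does the job.
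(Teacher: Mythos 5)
Your proof is correct and follows essentially the same route as the paper: both factors are treated separately, with $(\bm{F}^\top\bm{F}/n)^{-1}=O_p(1)$ coming from the identification condition and the control of $({\bm{A}^0}^\top\widehat{\bm{A}}/p)^{-1}$ drawn from~\eqref{eq:17} in the proof of Theorem~\ref{th:1}, exactly as the paper does (and, like you, without touching Proposition~\ref{pro}, so there is no circularity). Your determinant/adjugate step in fact supplies the detail the paper elides — the paper merely asserts the matrix is ``invertible, thus positive definite'' with $\lambda_{\min}>0$, whereas your use of $NN^\top\overset{p}{\to}\Sigma_{\bm{a}}>0$ to bound $\|N\|$ and keep $|\det N|$ away from zero is what actually delivers the $O_p(1)$ bound on the inverse.
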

\begin{proof}
The matrix $\bm{F}^\top\bm{F}/n$ is positive definite by Assumption~\ref{as:3}. We have shown in the proof of Theorem~\ref{th:1} in~\eqref{eq:17} that the matrix ${\bm{A}^0}^\top\widehat{\bm{A}}/p$ is invertible, thus is also positive definite. Therefore, $\lambda_{\min} \left({\bm{A}^0}^\top\widehat{\bm{A}}/p\right) >  0$, and $\lambda_{\min} \left(\bm{F}^\top\bm{F}/n\right)>  0$, where $\lambda_{\min} (\cdot)$ denotes the smallest eigenvalue of a matrix. So we have
\begin{equation*}
\left({\bm{A}^0}^\top\widehat{\bm{A}}/p\right)^{-1} = O_p(1), \quad \left(\bm{F}^\top\bm{F}/n\right)^{-1} = O_p(1).
\end{equation*}

\end{proof}

\begin{lemma}\label{le:2}
We have the following
\begin{enumerate}[(i)]
\item
\begin{equation*}
\left\|\bm{E}^\top (\widehat{\bm{A}}-\bm{A}^0\bm{H})\right\| = O_p\left(\frac{p}{\min (\sqrt{n}, \sqrt{p})}\left\|\bm{C}^0-\widehat{\bm{C}}\right\|\right) + O_p(\sqrt{n}) + O_p\left( \frac{p}{\sqrt{n}}\right).
\end{equation*}
\item
\begin{equation*}
\left\|\bm{F}^\top \bm{E}^\top (\widehat{\bm{A}}-\bm{A}^0\bm{H})\right\| = O_p\left(\frac{p}{\min (\sqrt{n}, \sqrt{p})}\left\|\bm{C}^0-\widehat{\bm{C}}\right\|\right) + O_p(\sqrt{n}) + O_p\left( \frac{p}{\sqrt{n}}\right).
\end{equation*}
\end{enumerate}

\begin{proof}
For $(i)$, from Proposition~\ref{pro}, we can write
\begin{align*}
\|\bm{E}^\top (\widehat{\bm{A}}-\bm{A}^0\bm{H})\| &= \| \bm{E}^\top (I1 + \dots, I8)\bm{G}\| \\
& \le \| \bm{E}^\top I1\bm{G}\| + \dots + \|\bm{E}^\top I8 \bm{G}\| \\
&= \|a1\|+ \dots + \|a8\|.
\end{align*}

To find the order for each term, the results from Lemma~\ref{le:8} are repeatedly used where the order of the matrices $\bm{\Phi}, \bm{A}, \bm{\widehat{\bm{A}}}$ and $\bm{F}$ are given. 
\begin{align*}
\|a1\| &= \left\|\bm{E}^\top \frac{1}{np} \bm{\Phi} (\bm{C}^0-\widehat{\bm{C}})(\bm{C}^0-\widehat{\bm{C}})^\top\bm{\Phi}^\top\widehat{\bm{A}} \bm{G}\right\|\\
&\le \frac{1}{np} \|\bm{E}^\top \bm{\Phi}\| \left\|\bm{C}^0-\widehat{\bm{C}}\right\|^2\|\bm{\Phi}\|\|\widehat{\bm{A}}\|\|\bm{G}\| \\
&= O_p\left(\frac{\sqrt{p}}{\sqrt{n}}\left\|\bm{C}^0-\widehat{\bm{C}}\right\|^2\right) = o_p\left(\sqrt{p}\left\|\bm{C}^0-\widehat{\bm{C}}\right\|\right),
\end{align*}
where the order of $\|\bm{E}^\top \bm{\Phi}\|$ is from Lemma~\ref{le:1} $(ii)$. The orders of $\|\bm{\Phi}\|$, $\|\widehat{\bm{A}}\|$ and $\|\bm{G}\|$can be found from Lemmas~\ref{le:8} and~\ref{le:10}. Similarly,
\begin{align*}
\|a2\| &= \left\|\bm{E}^\top \frac{1}{n}\bm{\Phi}(\bm{C}^0-\widehat{\bm{C}})\bm{F} \left(\frac{\bm{F}^\top\bm{F}}{n}\right)^{-1}\right\|\\
&\le \frac{1}{n}\|\bm{E}^\top\bm{\Phi}\|\left\|\bm{C}^0-\widehat{\bm{C}}\right\|\|\bm{F}\| \left\|\left(\frac{\bm{F}^\top\bm{F}}{n}\right)^{-1}\right\|\\
&= O_p\left(\sqrt{p}\left\|\bm{C}^0-\widehat{\bm{C}}\right\|\right).
\end{align*}
\begin{align*}
\|a3\| &= \left\|\bm{E}^\top  \frac{1}{np} \bm{\Phi} (\bm{C}^0-\widehat{\bm{C}})\bm{E}^\top \widehat{\bm{A}}\bm{G}\right\| \\
&\le \frac{1}{np}\|\bm{E}^\top\bm{\Phi}\|\left\|\bm{C}^0-\widehat{\bm{C}}\right\|\|\bm{E}^\top\|\|\widehat{\bm{A}}\|\|\bm{G}\| \\
&= O_p\left(\sqrt{p}\left\|\bm{C}^0-\widehat{\bm{C}}\right\|\right).
\end{align*}
\begin{align*}
\|a4\| &= \left\|\bm{E}^\top  \frac{1}{np}\bm{A}^0\bm{F}^\top (\bm{C}^0-\widehat{\bm{C}})^\top\bm{\Phi}^\top \widehat{\bm{A}}\bm{G}\right\| \\
&\le \frac{1}{np}\|\bm{E}^\top \bm{A}^0\|\|\bm{F}^\top\|\left\|\bm{C}^0-\widehat{\bm{C}}\right\|\|\bm{\Phi}\|\|\widehat{\bm{A}}\|\|\bm{G}\| \\
&= O_p\left(\sqrt{p}\left\|\bm{C}^0-\widehat{\bm{C}}\right\|\right),
\end{align*}
where Lemma~\ref{le:1} $(ii)$ is used.
\begin{align*}
\|a5\| &= \|\bm{E}^\top\frac{1}{np} \bm{E}(\bm{C}^0-\widehat{\bm{C}})^\top\bm{\Phi}^\top \widehat{\bm{A}}\bm{G}\| \\
&\le \frac{1}{np} \|\bm{E}^\top \bm{E}\|\left\|\bm{C}^0-\widehat{\bm{C}}\right\|\|\bm{\Phi}\|\|\widehat{\bm{A}}\|\|\bm{G}\| \\
&= O_p\left(\sqrt{p}\left\|\bm{C}^0-\widehat{\bm{C}}\right\|\right) + O_p\left(\frac{p}{\sqrt{n}}\left\|\bm{C}^0-\widehat{\bm{C}}\right\|\right),
\end{align*}
where Lemma~\ref{le:1} $(iv)$ is used.
\begin{align*}
\|a6\| &= \|\bm{E}^\top \frac{1}{np} \bm{A}^0\bm{F}^\top \bm{E}^\top  \widehat{\bm{A}}\bm{G}\| \\
&\le \frac{1}{np} \|\bm{E}^\top \bm{A}^0\bm{F}^\top \bm{E}^\top (\widehat{\bm{A}}-\bm{A}^0\bm{H})\bm{G}\| + \frac{1}{np} \|\bm{E}^\top \bm{A}^0\bm{F}^\top\bm{E}^\top \bm{A}^0\bm{HG}\| \\
&\le  \frac{1}{np} \|\bm{E}^\top \bm{A}^0\|\|\bm{F}^\top \bm{E}^\top\|\| \widehat{\bm{A}}-\bm{A}^0\bm{H}\|\|\bm{G}\| + \frac{1}{np} \|\bm{E}^\top \bm{A}^0\|\|\bm{F}^\top \bm{E}^\top \bm{A}^0\|\|\bm{HG}\| \\
&= O_p \left(\sqrt{p} (\frac{\left\|\bm{C}^0-\widehat{\bm{C}}\right\|}{\sqrt{n}} \right)+ O_p\left( \frac{1}{\text{min}(\sqrt{n}, \sqrt{p})} \right) + O_p(1)\\
&= O_p \left(\sqrt{p} (\frac{\left\|\bm{C}^0-\widehat{\bm{C}}\right\|}{\sqrt{n}} \right)+ O_p\left( \frac{1}{\text{min}(\sqrt{n}, \sqrt{p})} \right),
\end{align*} 
where the order of $\widehat{\bm{A}}-\bm{A}^0\bm{H}$ is proved in Proposition~\ref{pro} and the order of other matrix norms can be found in Lemma~\ref{le:1} $(i), (ii)$ and $(iii)$.
\begin{align*}
\|a7\| &= \|\bm{E}^\top \frac{1}{np}\bm{EF} {\bm{A}^0}^\top  \widehat{\bm{A}} \bm{G}\|\\
&= \left\| \frac{1}{n}\bm{E}^\top \bm{EF} \left(\frac{\bm{F}^\top\bm{F}}{n}\right)^{-1}\right\|\\
&\le \frac{1}{n}\left\|\bm{E}^\top \bm{EF}\right\| \left\|\left(\frac{\bm{F}^\top\bm{F}}{n}\right)^{-1} \right\| \\
&= O_p\left(\sqrt{p}\right) + O_p\left(\frac{p}{\sqrt{n}}\right),
\end{align*}
where Lemma~\ref{le:1} $(iv)$ is used.
\begin{align*}
\|a8\| =& \frac{1}{np}\left\| \bm{E}^\top \bm{E E}^\top \widehat{\bm{A}} \bm{G}\right\|\\
\le& \frac{1}{np} \left\|\bm{E}^\top \bm{E E}^\top \bm{A}^0\bm{HG}\right\| + \frac{1}{np} \left\|\bm{E}^\top \bm{E E}^\top (\widehat{\bm{A}}-\bm{A}^0\bm{H})\bm{G}\right\|\\
\le& \frac{1}{np}\|\bm{E}^\top \bm{E}\|\|\bm{E}^\top \bm{A}^0\|\|\bm{H}\|\|\bm{G}\| + \frac{1}{np}\|\bm{E}^\top\bm{E}\|\|\bm{E}^\top\|\|\widehat{\bm{A}}-\bm{A}^0\bm{H}\|\|\bm{G}\|\\
=& \frac{1}{np}\left[O_p(n\sqrt{p}) + O_p(p\sqrt{n})\right]O_p(\sqrt{np}) \\
&+ \frac{1}{np}\left[O_p(n\sqrt{p}) + O_p(p\sqrt{n})\right]O_p(\sqrt{np})\left[O_p\left(\frac{\sqrt{p}\left\|\bm{C}^0-\widehat{\bm{C}}\right\|}{\sqrt{n}} \right)+ O_p\left(\frac{\sqrt{p}}{\min (\sqrt{n}, \sqrt{p})}\right)\right]\\
=&  O_p(\sqrt{n})+ O_p(\sqrt{p}) + O_p\left( \frac{p}{\sqrt{n}}\left\|\bm{C}^0-\widehat{\bm{C}}\right\|\right) + O_p\left(\sqrt{p}\left\|\bm{C}^0-\widehat{\bm{C}}\right\|\right) + O_p\left(\sqrt{n}\right) + O_p\left(\frac{p}{\sqrt{n}}\right)\\
=&  O_p\left( \frac{p}{\sqrt{n}}\left\|\bm{C}^0-\widehat{\bm{C}}\right\|\right) + O_p\left(\sqrt{p}\left\|\bm{C}^0-\widehat{\bm{C}}\right\|\right) + O_p\left(\sqrt{n}
\right) + O_p\left(\frac{p}{\sqrt{n}}\right),
\end{align*}
where the order of $\widehat{\bm{A}}-\bm{A}^0\bm{H}$ is proved in Proposition~\ref{pro} and the order of other matrix norms can be found in Lemma~\ref{le:1} $(i), (ii)$ and $(iii)$.

Combining all the terms, we have
\begin{equation*}
\|\bm{E}^\top (\widehat{\bm{A}}-\bm{A}^0\bm{H})\| = O_p\left(\frac{p}{\min (\sqrt{n}, \sqrt{p})}\left\|\bm{C}^0-\widehat{\bm{C}}\right\|\right) + O_p(\sqrt{n}) + O_p\left(\frac{p}{\sqrt{n}}\right).
\end{equation*}
For $(ii)$, multiplying the matrix $\bm{F}^\top$ in the front does not change the order, using the fact that $\|\bm{F}^\top \bm{E}^\top\bm{ \Phi}\|$ is of the same order as $\|\bm{E}^\top \bm{\Phi}\|$ and that $\|\bm{F}^\top \bm{E}^\top \bm{E}\|$ and $\|\bm{F}^\top \bm{E}^\top \bm{EF}\|$ are of the same order as $\|\bm{E}^\top\bm{ E}\|$, as proved in Lemma~\ref{le:1}.
\end{proof}
\end{lemma}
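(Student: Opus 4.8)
The plan is to build everything on Proposition~\ref{pro}, whose proof already supplies the identity $\widehat{\bm{A}} - \bm{A}^0\bm{H} = (I1+\cdots+I8)\,\bm{G}\bm{H}$ (a rearrangement of~\eqref{eq:5}), where $I1,\dots,I8$ are the eight summands displayed in~\eqref{eq:6}, $\bm{G}$ (defined in~\eqref{eq:21}) is $O_p(1)$ by Lemma~\ref{le:10}, and $\bm{H}$ is $O_p(1)$ and invertible by Proposition~\ref{pro}(i). I would left-multiply by $\bm{E}^\top$, apply the triangle inequality, and thereby reduce part~(i) to bounding, one at a time, the eight matrices obtained by pre-multiplying $I1,\dots,I8$ by $\bm{E}^\top$ and post-multiplying by $\bm{G}\bm{H}$; call these $a1,\dots,a8$.

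For $a1,\dots,a5$ the work is routine: insert Frobenius norms around every factor and read off the order from the bounds already in hand --- $\|\bm{\Phi}\|,\|\bm{A}^0\|,\|\widehat{\bm{A}}\| = O_p(\sqrt p)$, $\|\bm{F}\| = O_p(\sqrt n)$, $\|\bm{E}\| = O_p(\sqrt{np})$ (Lemma~\ref{le:8}); $\|\bm{E}^\top\bm{\Phi}\|,\|\bm{E}^\top\bm{A}^0\| = O_p(\sqrt{np})$ and $\|\bm{E}^\top\bm{E}\| = O_p(n\sqrt p)+O_p(p\sqrt n)$ (Lemma~\ref{le:1}); $\|\bm{G}\| = O_p(1)$ (Lemma~\ref{le:10}). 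Each of them carries one factor $\|\bm{C}^0-\widehat{\bm{C}}\|$ (or its square, which by Theorem~\ref{th:1}(i) is $o_p(\sqrt n\,\|\bm{C}^0-\widehat{\bm{C}}\|)$ and so is absorbed), and the largest contributions come out as $O_p(\sqrt p\,\|\bm{C}^0-\widehat{\bm{C}}\|)$ and $O_p\big(\tfrac p{\sqrt n}\,\|\bm{C}^0-\widehat{\bm{C}}\|\big)$, both dominated by the claimed leading term since $\tfrac p{\min(\sqrt n,\sqrt p)} = \max\!\big(\tfrac p{\sqrt n},\sqrt p\big)$.

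The genuine work is $a6,a7,a8$: here $I6,I7,I8$ contain the factor $\widehat{\bm{A}}$ in a position where the crude bound $\|\widehat{\bm{A}}\| = O_p(\sqrt p)$ is too lossy, so I would split $\widehat{\bm{A}} = (\widehat{\bm{A}} - \bm{A}^0\bm{H}) + \bm{A}^0\bm{H}$. On the $\bm{A}^0\bm{H}$ half I would use the sharper cross-bounds $\|\bm{F}^\top\bm{E}^\top\bm{A}^0\| = O_p(\sqrt{np})$ (Lemma~\ref{le:1}(iii)) and, for $a7$, the algebraic cancellation ${\bm{A}^0}^\top\widehat{\bm{A}}\,\bm{G} = p\,(\bm{F}^\top\bm{F}/n)^{-1}$; on the complementary half I would feed in the rate $\|\widehat{\bm{A}} - \bm{A}^0\bm{H}\| = O_p(\sqrt{p/n}\,\|\bm{C}^0-\widehat{\bm{C}}\|) + O_p\big(\sqrt{p/\min(n,p)}\big)$ from Proposition~\ref{pro}(ii). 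The main obstacle is precisely this self-reference: one must check that it is a terminating estimate, i.e. that the coefficient multiplying $\|\widehat{\bm{A}} - \bm{A}^0\bm{H}\|$ in each of these terms is small enough that the contribution stays within the target bound. A short order count shows that $a6,a7,a8$ produce only terms of the forms $O_p\big(\tfrac p{\sqrt n}\|\bm{C}^0-\widehat{\bm{C}}\|\big)$, $O_p(\sqrt p\,\|\bm{C}^0-\widehat{\bm{C}}\|)$, $O_p(\sqrt n)$, $O_p\big(\tfrac p{\sqrt n}\big)$, and $O_p(\sqrt p)$, all of which are dominated by the right-hand side of the claim (in particular $O_p(\sqrt p) = O_p(\sqrt n)+O_p(p/\sqrt n)$ since $\sqrt p \le \max(\sqrt n,p/\sqrt n)$); collecting $a1,\dots,a8$ then gives part~(i).

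Part~(ii) needs no new idea: rerunning the same term-by-term argument for $\bm{F}^\top\bm{E}^\top(\widehat{\bm{A}} - \bm{A}^0\bm{H})$ only replaces factors such as $\|\bm{E}^\top\bm{\Phi}\|$, $\|\bm{E}^\top\bm{E}\|$, $\|\bm{E}^\top\bm{E}\bm{F}\|$ by $\|\bm{F}^\top\bm{E}^\top\bm{\Phi}\|$, $\|\bm{F}^\top\bm{E}^\top\bm{E}\|$, $\|\bm{F}^\top\bm{E}^\top\bm{E}\bm{F}\|$, each of the same stochastic order by Lemma~\ref{le:1}(iii)--(iv), so the bound is unchanged.
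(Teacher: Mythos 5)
Your proposal follows essentially the same route as the paper: the same expansion of $\widehat{\bm{A}}-\bm{A}^0\bm{H}$ into the eight terms $I1,\dots,I8$ from Proposition~\ref{pro}, the same term-by-term Frobenius-norm bookkeeping via Lemmas~\ref{le:8}, \ref{le:1} and \ref{le:10}, the same split $\widehat{\bm{A}}=(\widehat{\bm{A}}-\bm{A}^0\bm{H})+\bm{A}^0\bm{H}$ for the delicate terms, and the same observation that part (ii) only swaps in norms of identical order. The order accounting you sketch matches the paper's, so the plan is correct.
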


\begin{lemma}\label{le:3}
Under Assumptions~\ref{as:1}-\ref{as:5}, we have the following
 \renewcommand{\labelenumi}{\roman{enumi}}
\begin{enumerate}[(i)]
\item
\begin{equation*}
\frac{1}{p} \bm{\Phi}^\top (\widehat{\bm{A}}-\bm{A}^0\bm{H}) = O_p\left(\frac{1}{\sqrt{n}}\left\|\bm{C}^0-\widehat{\bm{C}}\right\|\right) + O_p\left(\frac{1}{\min (n, p)}\right)
\end{equation*}
\item
\begin{equation*}
\frac{1}{p} {\bm{A}^0}^\top (\widehat{\bm{A}}-\bm{A}^0\bm{H}) = O_p\left(\frac{1}{\sqrt{n}}\left\|\bm{C}^0-\widehat{\bm{C}}\right\|\right) + O_p\left(\frac{1}{\min (n, p)}\right)
\end{equation*}
\item
\begin{equation*}
\frac{1}{p} \widehat{\bm{A}}^\top (\widehat{\bm{A}}-\bm{A}^0\bm{H}) = O_p\left(\frac{1}{\sqrt{n}}\left\|\bm{C}^0-\widehat{\bm{C}}\right\|\right) + O_p\left(\frac{1}{\min (n, p)}\right)
\end{equation*}
\item
\begin{equation*}
\frac{1}{p} \bm{\Phi}^\top \bm{M}_{\widehat{\bm{A}}} (\widehat{\bm{A}}-\bm{A}^0\bm{H}) = O_p\left(\frac{1}{\sqrt{n}}\left\|\bm{C}^0-\widehat{\bm{C}}\right\|\right) + O_p\left(\frac{1}{\min (n, p)}\right)
\end{equation*}
\end{enumerate}

\end{lemma}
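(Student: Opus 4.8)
The plan is to derive all four statements from the expansion of $\widehat{\bm A}-\bm A^0\bm H$ that is already available from Proposition~\ref{pro}. Multiplying \eqref{eq:5} on the right by $\bm H$ gives
\begin{equation*}
\widehat{\bm A}-\bm A^0\bm H=(I1+\cdots+I8)\,\bm G\bm H ,
\end{equation*}
where $I1,\dots,I8$ are the terms in \eqref{eq:6}, $\bm G=O_p(1)$ by Lemma~\ref{le:10}, and $\bm H=O_p(1)$ by Proposition~\ref{pro}(i) (the factors $\bm F^\top\bm F/n$, ${\bm A^0}^\top\widehat{\bm A}/p$, $\bm V_{np}$ each converging to an invertible limit). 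Hence for (i) it suffices to bound $\frac1p\|\bm\Phi^\top Ik\|$ for $k=1,\dots,8$; statements (ii) and (iii) are word for word the same argument after replacing $\bm\Phi^\top$ by ${\bm A^0}^\top$ or $\widehat{\bm A}^\top$, which carry the same order $O_p(\sqrt p)$ by Lemma~\ref{le:8}; and (iv) reduces to (i) and (iii) on writing $\bm M_{\widehat{\bm A}}=\bm I_p-\widehat{\bm A}\widehat{\bm A}^\top/p$, so that $\frac1p\bm\Phi^\top\bm M_{\widehat{\bm A}}(\widehat{\bm A}-\bm A^0\bm H)=\frac1p\bm\Phi^\top(\widehat{\bm A}-\bm A^0\bm H)-\big(\frac1p\bm\Phi^\top\widehat{\bm A}\big)\big(\frac1p\widehat{\bm A}^\top(\widehat{\bm A}-\bm A^0\bm H)\big)$, with $\frac1p\bm\Phi^\top\widehat{\bm A}=O_p(1)$.

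For (i) I would run through $I1,\dots,I8$ exactly as in the proofs of Proposition~\ref{pro} and Lemma~\ref{le:2}, now exploiting the extra factor $\frac1p$ and the fact that the left-most matrix in each $Ik$ can be absorbed into $\frac1p\bm\Phi^\top$: if that matrix is $\bm\Phi$ or $\bm A^0$ the group is $O_p(1)$, and if it is $\bm E$ the group $\frac1p\bm\Phi^\top\bm E$ is $O_p(\sqrt{n/p})$ by Lemma~\ref{le:1}(ii); the remaining $\bm E$-blocks are then bounded by the sharp estimates of Lemma~\ref{le:1} rather than by Cauchy--Schwarz. This yields: $I1=O_p\big(n^{-1}\|\bm C^0-\widehat{\bm C}\|^2\big)=o_p\big(n^{-1/2}\|\bm C^0-\widehat{\bm C}\|\big)$ by the consistency $n^{-1/2}\|\bm C^0-\widehat{\bm C}\|\overset{p}{\to}0$ of Theorem~\ref{th:1}; $I2$ and $I4$ contribute the leading term $O_p\big(n^{-1/2}\|\bm C^0-\widehat{\bm C}\|\big)$; $I3$ and $I5$ give $O_p\big((np)^{-1/2}\|\bm C^0-\widehat{\bm C}\|\big)$; and $I7$ gives $O_p\big((np)^{-1/2}\big)$.

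The only terms needing a little care are $I6=\frac1{np}\bm A^0\bm F^\top\bm E^\top\widehat{\bm A}$ and $I8=\frac1{np}\bm E\bm E^\top\widehat{\bm A}$, where $\widehat{\bm A}$ stands to the right of $\bm E^\top$; there I split $\widehat{\bm A}=(\widehat{\bm A}-\bm A^0\bm H)+\bm A^0\bm H$. For the $\bm A^0\bm H$ piece of $\frac1p\bm\Phi^\top I6$ I use $\|\bm\Phi^\top\bm A^0\|=O_p(p)$ and $\|\bm F^\top\bm E^\top\bm A^0\|=O_p(\sqrt{np})$ (Lemma~\ref{le:1}(iii)), giving $O_p\big((np)^{-1/2}\big)$; for the $\bm A^0\bm H$ piece of $\frac1p\bm\Phi^\top I8$ I use $\|\bm\Phi^\top\bm E\bm E^\top\bm A^0\|=O_p(n\sqrt p+p\sqrt n)$ (Lemma~\ref{le:1}(iv)), giving $O_p\big(p^{-3/2}+(p\sqrt n)^{-1}\big)$; and for the $(\widehat{\bm A}-\bm A^0\bm H)$ pieces I use $\|\bm F^\top\bm E^\top(\widehat{\bm A}-\bm A^0\bm H)\|$ and $\|\bm E^\top(\widehat{\bm A}-\bm A^0\bm H)\|$ from Lemma~\ref{le:2}, $\|\bm\Phi^\top\bm E\bm E^\top\|=O_p(n\sqrt p+p\sqrt n)$, and the rate of $\|\widehat{\bm A}-\bm A^0\bm H\|$ from Proposition~\ref{pro}(ii). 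In each case the leftover rate --- $n^{-3/2}$, $(\sqrt n\,p)^{-1}$, $(p\min(\sqrt n,\sqrt p))^{-1}$, $(np)^{-1/2}$, $p^{-3/2}$, and so on --- is $O\big(\min(n,p)^{-1}\big)$, and every term still carrying a factor $\|\bm C^0-\widehat{\bm C}\|$ is $O_p\big(n^{-1/2}\|\bm C^0-\widehat{\bm C}\|\big)$ or smaller. Summing $I1$ through $I8$ and using $\bm G\bm H=O_p(1)$ gives (i), after which (ii)--(iv) follow as indicated above.

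I do not expect a conceptual obstacle: the argument is essentially a repetition of estimates already carried out for Proposition~\ref{pro} and Lemma~\ref{le:2}. The real work is bookkeeping --- tracking the $O_p$-rates through eight terms for each of the four statements, handling the self-referential $\widehat{\bm A}$ in $I6$ and $I8$ without circularity (legitimate because Proposition~\ref{pro}(ii) already gives an unconditional rate for $\widehat{\bm A}-\bm A^0\bm H$), and repeatedly invoking the elementary inequalities $(np)^{-1/2}\le\min(n,p)^{-1}$ and $(p\min(\sqrt n,\sqrt p))^{-1}\le\min(n,p)^{-1}$ so that all the residual rates collapse into the single stated term $O_p\big(\min(n,p)^{-1}\big)$.
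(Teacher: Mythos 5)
Your overall route is the paper's own: expand $\widehat{\bm A}-\bm A^0\bm H=(I1+\cdots+I8)\,\bm G\bm H$ from \eqref{eq:5}, bound $\frac1p\bm\Phi^\top Ik$ term by term using Lemmas~\ref{le:8}, \ref{le:1} and \ref{le:10}, split $\widehat{\bm A}=(\widehat{\bm A}-\bm A^0\bm H)+\bm A^0\bm H$ inside $I6$ and $I8$ via Lemma~\ref{le:2} and Proposition~\ref{pro}, and deduce (iv) from (i) and (iii) through $\bm M_{\widehat{\bm A}}=\bm I_p-\widehat{\bm A}\widehat{\bm A}^\top/p$. Your individual rates for $I1$--$I8$, and the collapsing of the residual rates into $O_p\big(1/\min(n,p)\big)$, agree with the paper's computations for parts (i), (ii) and (iv).

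The one place where your plan is looser than the paper is (iii). Replacing $\bm\Phi^\top$ by $\widehat{\bm A}^\top$ is not ``word for word the same'': every bound you import from Lemma~\ref{le:1} rests on independence between $\bm E$ and the matrix it is paired with, and $\widehat{\bm A}$ is estimated from the data, hence correlated with $\bm E$. Concretely, for $\frac1p\widehat{\bm A}^\top I7\,\bm G$ the only off-the-shelf bound is Cauchy--Schwarz, $\|\widehat{\bm A}^\top\bm E\bm F\|\le\|\widehat{\bm A}\|\,\|\bm E\bm F\|=O_p(p\sqrt n)$, which yields $O_p(n^{-1/2})$ and overshoots the claimed $O_p\big(1/\min(n,p)\big)$; to stay within the rate you must split $\widehat{\bm A}$ there as well, using Lemma~\ref{le:2}(ii) for the $(\widehat{\bm A}-\bm A^0\bm H)$ piece and Lemma~\ref{le:1}(iii) for the $\bm A^0\bm H$ piece, exactly as you already do for $I6$ and $I8$. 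The paper sidesteps this bookkeeping entirely by writing $\widehat{\bm A}^\top(\widehat{\bm A}-\bm A^0\bm H)=(\widehat{\bm A}-\bm A^0\bm H)^\top(\widehat{\bm A}-\bm A^0\bm H)+(\bm A^0\bm H)^\top(\widehat{\bm A}-\bm A^0\bm H)$, bounding the first term by Proposition~\ref{pro}(ii) and the second by part (ii); this two-line reduction is shorter and immune to the dependence issue, and I would adopt it in place of rerunning the eight-term argument for (iii).
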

\begin{proof}
For $(i)$, using~\eqref{eq:5}
\begin{align}\label{eq:47}
\bm{\Phi}^\top (\widehat{\bm{A}}-\bm{A}^0\bm{H}) = \bm{\Phi}^\top (I1 + I2 + \dots + I8)\bm{G}.
\end{align}
It can be easily proved that the first five terms in~\eqref{eq:47} are $O_p\left(\frac{p}{\sqrt{n}}\left\|\bm{C}^0-\widehat{\bm{C}}\right\|\right)$ using the results from Lemma~\ref{le:8} and~\ref{le:1}. Recall that $\bm{G} = ({\bm{A}^0}^\top\widehat{\bm{A}}/p)^{-1} (\bm{F}^\top\bm{F}/n)^{-1} $, and that $\bm{G} = O_p(1)$ from Lemma~\ref{le:10}. For the sixth term,
\begin{align*}
\bm{\Phi}^\top I6 \bm{G} &= \frac{1}{np}\bm{\Phi}^\top \bm{A}^0 \bm{F}^\top \bm{E}^\top \widehat{\bm{A}}\bm{G}\\
&= \frac{1}{np} \bm{\Phi}^\top \bm{A}^0 \bm{F}^\top \bm{E}^\top(\widehat{\bm{A}}-\bm{A}^0\bm{H})\bm{G} +  \frac{1}{np} \bm{\Phi}^\top \bm{A}^0 \bm{F}^\top \bm{E}^\top \bm{A}^0\bm{H}\bm{G}\\
&\le \frac{1}{np} \|\bm{\Phi}^\top \|\|\bm{A}^0 \|\|\bm{F}^\top \bm{E}^\top(\widehat{\bm{A}}-\bm{A}^0\bm{H})\|\|\bm{G}\| +  \frac{1}{np} \|\bm{\Phi}^\top \|\|\bm{A}^0\|\| \bm{F}^\top \bm{E}^\top \bm{A}^0\|\|\bm{H}\|\|\bm{G}\|\\
&=  O_p\left(\frac{\sqrt{p}}{\sqrt{n}}\left\|\bm{C}^0-\widehat{\bm{C}}\right\|\right) + O_p\left( \frac{\sqrt{p}}{\min (\sqrt{n}, \sqrt{p})}\right) + O_p\left(\frac{\sqrt{p}}{\sqrt{n}}\right),
\end{align*}
using the results from Lemma~\ref{le:2}. Next,
\begin{align*}
\bm{\Phi}^\top I7 \bm{G} &= \frac{1}{np}\bm{\Phi}^\top \bm{E}\bm{F} {\bm{A}^0}^\top\widehat{\bm{A}}\bm{G}\\
&= \frac{1}{n}\bm{\Phi}^\top \bm{E}\bm{F}\left(\frac{\bm{F}^\top \bm{F}}{n}\right)^{-1}  = O_p\left(\frac{\sqrt{p}}{\sqrt{n}}\right),
\end{align*}
where the order of $\bm{\Phi}^\top \bm{EF}$ is found in Lemma~\ref{le:1} $(iii)$.
\begin{align*}
\bm{\Phi}^\top I8 \bm{G} &= \frac{1}{np}\bm{\Phi}^\top \bm{E} \bm{E}^\top \widehat{\bm{A}}\bm{G}\\
&=  \frac{1}{np}\bm{\Phi}^\top \bm{E} \bm{E}^\top (\widehat{\bm{A}}-\bm{A}^0\bm{H})\bm{G} +  \frac{1}{np}\bm{\Phi}^\top \bm{E} \bm{E}^\top \bm{A}^0\bm{H}\bm{G}\\
&\le  \frac{1}{np}\|\bm{\Phi}^\top \bm{E}\| \|\bm{E}^\top (\widehat{\bm{A}}-\bm{A}^0\bm{H})\|\|\bm{G}\| +  \frac{1}{np}\|\bm{\Phi}^\top \bm{E}\| \|\bm{E}^\top \bm{A}^0\|\|\bm{H}\|\|\bm{G}\|\\
&= O_p\left(\frac{1}{\sqrt{n}}\left\|\bm{C}^0-\widehat{\bm{C}}\right\|\right) + O_p\left(\frac{1}{\min (\sqrt{n}, \sqrt{p})}\right) + O_p(1) = O_p(1),
\end{align*}
where Lemma~\ref{le:1} $(ii)$ and Lemma~\ref{le:2} $(i)$ are used.
Combining the terms, we have proved $(i)$. The proof for $(ii)$ is the same. 

For $(iii)$, we can write
\begin{equation*}
\widehat{\bm{A}}^\top (\widehat{\bm{A}}- \bm{A}^0\bm{H}) =  (\widehat{\bm{A}}- \bm{A}^0\bm{H})^\top  (\widehat{\bm{A}}- \bm{A}^0\bm{H}) + (\bm{A}^0\bm{H})^\top (\widehat{\bm{A}}- \bm{A}^0\bm{H}).
\end{equation*}
The order of the first term on the right can be found in Propostion~\ref{pro}. The order of the second term on the right is proved in $(ii)$. For $(iv)$, we have
\begin{equation*}
 \bm{\Phi}^\top \bm{M}_{\widehat{\bm{A}}} (\widehat{\bm{A}}-\bm{A}^0\bm{H}) = \bm{\Phi}^\top (\widehat{\bm{A}}-\bm{A}^0\bm{H}) + \frac{1}{p} \bm{\Phi}^\top \widehat{\bm{A}} \widehat{\bm{A}}^\top (\widehat{\bm{A}}-\bm{A}^0\bm{H}),
\end{equation*}
where the orders of the two terms are proved in $(i)$ and $(iii)$.
\end{proof}

\begin{lemma}\label{le:6}
Define the matrix 
\begin{equation*}
\bm{Q}(\bm{A}) = \frac{1}{p}\bm{\Phi}^\top \bm{M}_{\bm{A}}\bm{\Phi}.
\end{equation*}
Under Assumptions~\ref{as:1}-\ref{as:4}, it holds
\begin{equation*}
\bm{Q}\left(\widehat{\bm{A}}\right)^{-1}-\bm{Q}\left(\bm{A}^0\right)^{-1} = o_p(1).
\end{equation*}
\end{lemma}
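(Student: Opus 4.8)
The plan is to reduce the statement to two facts: (a) $\bm{Q}(\widehat{\bm{A}}) - \bm{Q}(\bm{A}^0) = o_p(1)$, and (b) the eigenvalues of $\bm{Q}(\bm{A}^0)$ are bounded away from $0$ and from $\infty$; these combine through the resolvent identity to give the claim. Throughout I would use that $K$ and $r$ are fixed, so that Frobenius and operator norms of the matrices involved differ only by absolute constants.

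First I would establish (a). Writing $\bm{M}_{\bm{A}} = \bm{I}_p - \bm{P}_{\bm{A}}$ gives
\begin{equation*}
\bm{Q}(\widehat{\bm{A}}) - \bm{Q}(\bm{A}^0) = -\frac{1}{p}\bm{\Phi}^\top\left(\bm{P}_{\widehat{\bm{A}}} - \bm{P}_{\bm{A}^0}\right)\bm{\Phi},
\end{equation*}
and hence, by submultiplicativity,
\begin{equation*}
\left\|\bm{Q}(\widehat{\bm{A}}) - \bm{Q}(\bm{A}^0)\right\| \le \frac{1}{p}\|\bm{\Phi}\|^2\left\|\bm{P}_{\widehat{\bm{A}}} - \bm{P}_{\bm{A}^0}\right\|.
\end{equation*}
By Lemma~\ref{le:8}$(i)$, $p^{-1}\|\bm{\Phi}\|^2 = O_p(1)$, and by Theorem~\ref{th:1}$(ii)$, $\|\bm{P}_{\widehat{\bm{A}}} - \bm{P}_{\bm{A}^0}\| \overset{p}{\to} 0$; therefore the product is $o_p(1)$.

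Next I would verify (b). The upper bound is immediate from the same inequality with $\bm{A}^0$ in place of $\widehat{\bm{A}}$, namely $\|\bm{Q}(\bm{A}^0)\| \le p^{-1}\|\bm{\Phi}\|^2 = O_p(1)$, using that the orthogonal projection $\bm{M}_{\bm{A}^0}$ has operator norm one. For the lower bound I would invoke the identification condition: in the notation of the proof of Theorem~\ref{th:1}, $\bm{Q}(\bm{A}^0)$ is exactly the matrix $\bm{P} = p^{-1}\bm{\Phi}^\top\bm{M}_{\bm{A}^0}\bm{\Phi}$, and $\bm{D}_i(\bm{A}^0) = \bm{P} - \bm{V}_i^\top\bm{W}^{-1}\bm{V}_i$ with $\bm{V}_i^\top\bm{W}^{-1}\bm{V}_i$ positive semidefinite (since $\bm{W}$ is positive definite). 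Hence $\bm{Q}(\bm{A}^0) \succeq \bm{D}_i(\bm{A}^0)$ in the Loewner order, and since $\bm{A}^0 \in \mathcal{A}$, Assumption~\ref{as:3} supplies a constant $c>0$ with $\bm{D}_i(\bm{A}^0) \succeq c\,\bm{I}_K$; thus $\lambda_{\min}(\bm{Q}(\bm{A}^0)) \ge c$ and $\bm{Q}(\bm{A}^0)^{-1} = O_p(1)$. Combining this with (a) and Weyl's inequality gives $\lambda_{\min}(\bm{Q}(\widehat{\bm{A}})) \ge c - o_p(1)$, so with probability tending to one $\bm{Q}(\widehat{\bm{A}})$ is invertible with $\bm{Q}(\widehat{\bm{A}})^{-1} = O_p(1)$ as well.

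Finally I would conclude via the identity
\begin{equation*}
\bm{Q}(\widehat{\bm{A}})^{-1} - \bm{Q}(\bm{A}^0)^{-1} = \bm{Q}(\widehat{\bm{A}})^{-1}\left[\bm{Q}(\bm{A}^0) - \bm{Q}(\widehat{\bm{A}})\right]\bm{Q}(\bm{A}^0)^{-1} = O_p(1)\cdot o_p(1)\cdot O_p(1) = o_p(1).
\end{equation*}
The only point requiring care is the lower bound in step (b): it is not a hypothesis imposed directly on $\bm{Q}(\bm{A}^0)$ but has to be extracted from Assumption~\ref{as:3} on $\bm{D}_i$, via the observation that $\bm{Q}(\bm{A}^0)$ dominates $\bm{D}_i(\bm{A}^0)$. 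Everything else is a routine norm and eigenvalue-perturbation argument that relies on $K$ and $r$ being fixed.
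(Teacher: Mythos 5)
Your proposal is correct and follows essentially the same route as the paper: the difference $\bm{Q}(\widehat{\bm{A}})-\bm{Q}(\bm{A}^0)$ is controlled by $\|\bm{P}_{\widehat{\bm{A}}}-\bm{P}_{\bm{A}^0}\|$ via Theorem~\ref{th:1}$(ii)$, and the lower bound on $\bm{Q}(\bm{A}^0)$ is extracted from Assumption~\ref{as:3} through $\bm{Q}(\bm{A}^0)\succeq \bm{D}_i(\bm{A}^0)$, exactly as in the paper's argument. Your treatment is in fact slightly more careful than the paper's, since you make explicit (via Weyl's inequality and the resolvent identity) why $\bm{Q}(\widehat{\bm{A}})$ is invertible with $O_p(1)$ inverse before passing to the difference of inverses.
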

\begin{proof}

\begin{align*}
\bm{Q}\left(\widehat{\bm{A}}\right)-\bm{Q}\left(\bm{A}^0\right) &= \frac{1}{p}\bm{\Phi}^\top \bm{M}_{\widehat{\bm{A}}}\bm{\Phi} - \frac{1}{p}\bm{\Phi}^\top \bm{M}_{\bm{A}^0}\bm{\Phi} = \frac{1}{p}\bm{\Phi}^\top \left(\bm{M}_{\widehat{\bm{A}}}- \bm{M}_{\bm{A}^0}\right)\bm{\Phi} \\
&= \frac{1}{p}\bm{\Phi}^\top \left(\bm{P}_{\bm{A}^0}- \bm{P}_{\widehat{\bm{A}}}\right)\bm{\Phi} = O_p\left(\left\| \bm{P}_{\bm{A}^0}- \bm{P}_{\widehat{\bm{A}}}\right\|\right) = o_p(1),
\end{align*}
using Theorem~\ref{th:1} $(ii)$. In Assumption~\ref{as:2}, we have assumed $\inf_{\bm{A}}\bm{D}(\bm{A}) > 0$, since the second term in $\bm{D}(\bm{A})$ is nonnegative, we have $\inf_{\bm{A}}\bm{Q}(\bm{A}) > 0$, so the matrix $\bm{Q}(\bm{A}^0)$ is invertible. Therefore,
\begin{equation*}
\bm{Q}\left(\widehat{\bm{A}}\right)^{-1} = \left[ \bm{Q}\left(\bm{A}^0\right)^{-1} + o_p(1) \right]^{-1} = \bm{Q}\left(\bm{A}^0\right) + o_p(1).
\end{equation*}
\end{proof}

\begin{lemma}\label{le:7}
Recall $\bm{H}$ defined in Proposition~\ref{pro}, then
\begin{equation*}
\bm{H}\bm{H}^\top =  \left(\frac{{\bm{A}^0}^\top \bm{A}^0}{p}\right)^{-1} + O_p\left(\frac{1}{\sqrt{n}}\left\|\bm{C}^0-\widehat{\bm{C}}\right\|\right) + O_p\left(\frac{1}{\min (n,p)}\right)
\end{equation*}

\end{lemma}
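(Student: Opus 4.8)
The plan is to derive the identity for $\bm{H}\bm{H}^\top$ from the normalization $\widehat{\bm{A}}^\top\widehat{\bm{A}}/p=\bm{I}_r$ combined with the approximation $\widehat{\bm{A}}\approx\bm{A}^0\bm{H}$ supplied by Proposition~\ref{pro}(ii), so that the explicit form of $\bm{H}$ in terms of $\bm{V}_{np}$ is never needed beyond establishing its order. Write $\bm{\Sigma}_p\equiv{\bm{A}^0}^\top\bm{A}^0/p$ and $\bm{\Delta}\equiv\widehat{\bm{A}}-\bm{A}^0\bm{H}$. Before the main computation I would assemble the ingredients: $\bm{\Sigma}_p$ is symmetric, positive definite with probability approaching one and $\bm{\Sigma}_p^{-1}=O_p(1)$ by Identification Condition~\ref{id}(ii); $\bm{H}=O_p(1)$, either by citing that $\bm{V}_{np}$, $(\bm{F}^\top\bm{F}/n)^{-1}$ and $({\bm{A}^0}^\top\widehat{\bm{A}}/p)^{-1}$ are all $O_p(1)$ (Proposition~\ref{pro}(i), Lemma~\ref{le:10}, and the invertibility shown around~\eqref{eq:17}), or by noting $\text{tr}(\bm{H}^\top\bm{\Sigma}_p\bm{H})=\frac1p\|\bm{A}^0\bm{H}\|^2\le \frac2p\|\widehat{\bm{A}}\|^2+\frac2p\|\bm{\Delta}\|^2=O_p(1)$ with $\lambda_{\min}(\bm{\Sigma}_p)$ bounded away from $0$; $\frac1p{\bm{A}^0}^\top\bm{\Delta}=O_p(n^{-1/2}\|\bm{C}^0-\widehat{\bm{C}}\|)+O_p(\min(n,p)^{-1})$ by Lemma~\ref{le:3}(ii); and $\frac1p\|\bm{\Delta}\|^2=O_p(n^{-1}\|\bm{C}^0-\widehat{\bm{C}}\|^2)+O_p(\min(n,p)^{-1})$ by Proposition~\ref{pro}(ii), where the first term is $o_p(n^{-1/2}\|\bm{C}^0-\widehat{\bm{C}}\|)$ by Theorem~\ref{th:1}(i) and so is absorbed into the stated rate.

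Next I would expand the normalization. Substituting $\widehat{\bm{A}}=\bm{A}^0\bm{H}+\bm{\Delta}$,
\begin{equation*}
\bm{I}_r=\frac1p\widehat{\bm{A}}^\top\widehat{\bm{A}}=\bm{H}^\top\bm{\Sigma}_p\bm{H}+\bm{H}^\top\frac{{\bm{A}^0}^\top\bm{\Delta}}{p}+\frac{\bm{\Delta}^\top\bm{A}^0}{p}\bm{H}+\frac{\bm{\Delta}^\top\bm{\Delta}}{p}.
\end{equation*}
With $\bm{H}=O_p(1)$ and the bounds above, the last three terms together form a remainder $\bm{R}_{np}$ with $\|\bm{R}_{np}\|=O_p(n^{-1/2}\|\bm{C}^0-\widehat{\bm{C}}\|)+O_p(\min(n,p)^{-1})$, so that $\bm{H}^\top\bm{\Sigma}_p\bm{H}=\bm{I}_r+\bm{R}_{np}$.

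Finally I would turn this two-sided product into the one-sided statement about $\bm{H}\bm{H}^\top$. Put $\bm{B}\equiv\bm{\Sigma}_p^{1/2}\bm{H}$, using the symmetric square root; then $\bm{B}^\top\bm{B}=\bm{H}^\top\bm{\Sigma}_p\bm{H}=\bm{I}_r+\bm{R}_{np}$. Since $\bm{R}_{np}=o_p(1)$, $\bm{B}^\top\bm{B}$ has smallest eigenvalue $1+o_p(1)$, hence $\bm{B}$ is invertible with $\bm{B}^{-1}=O_p(1)$, while $\bm{B}=\bm{\Sigma}_p^{1/2}\bm{H}=O_p(1)$. From $\bm{B}^\top=(\bm{I}_r+\bm{R}_{np})\bm{B}^{-1}$ we get $\bm{B}\bm{B}^\top=\bm{I}_r+\bm{B}\bm{R}_{np}\bm{B}^{-1}=\bm{I}_r+O_p(\|\bm{R}_{np}\|)$; since $\bm{B}\bm{B}^\top=\bm{\Sigma}_p^{1/2}\bm{H}\bm{H}^\top\bm{\Sigma}_p^{1/2}$, multiplying on both sides by $\bm{\Sigma}_p^{-1/2}=O_p(1)$ gives $\bm{H}\bm{H}^\top=\bm{\Sigma}_p^{-1}+O_p(\|\bm{R}_{np}\|)$, which is exactly the assertion of the lemma.

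The main obstacle is this last passage: the expansion only controls $\bm{H}^\top\bm{\Sigma}_p\bm{H}$, and one must argue — via the near-orthogonality of $\bm{B}=\bm{\Sigma}_p^{1/2}\bm{H}$ and the uniform $O_p(1)$ control on $\bm{B}$ and $\bm{B}^{-1}$ — that this forces $\bm{H}\bm{H}^\top$ to be close to $\bm{\Sigma}_p^{-1}$ at the same rate. Everything else is bookkeeping, provided one is careful that $\bm{\Sigma}_p$ (equivalently $\bm{H}$) is invertible with inverse of order $O_p(1)$, which rests on Identification Condition~\ref{id}(ii) and the invertibility of ${\bm{A}^0}^\top\widehat{\bm{A}}/p$ established in the proof of Theorem~\ref{th:1}, and that the squared term $\frac1n\|\bm{C}^0-\widehat{\bm{C}}\|^2$ from Proposition~\ref{pro}(ii) is genuinely negligible, which uses the consistency rate of Theorem~\ref{th:1}(i).
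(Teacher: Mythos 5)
Your proposal is correct, and its ingredients coincide with the paper's: both reduce the problem to showing $\bm{H}^\top\bigl({\bm{A}^0}^\top\bm{A}^0/p\bigr)\bm{H} = \bm{I}_r + O_p\bigl(\frac{1}{\sqrt{n}}\|\bm{C}^0-\widehat{\bm{C}}\|\bigr) + O_p\bigl(\frac{1}{\min(n,p)}\bigr)$, the paper by adding $\bm{H}^\top$ times Lemma~\ref{le:3}$(ii)$ to the transpose of Lemma~\ref{le:3}$(iii)$, you by expanding the normalization $\widehat{\bm{A}}^\top\widehat{\bm{A}}/p=\bm{I}_r$ with $\widehat{\bm{A}}=\bm{A}^0\bm{H}+\bm{\Delta}$ and bounding the cross term via Lemma~\ref{le:3}$(ii)$ and the quadratic term via Proposition~\ref{pro}$(ii)$ (these are the same estimates, since Lemma~\ref{le:3}$(iii)$ is itself obtained from that decomposition). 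The genuine difference is the last passage, from the two-sided product to $\bm{H}\bm{H}^\top$: the paper conjugates, multiplying by $(\bm{H}^\top)^{-1}$ on the left and $\bm{H}^\top$ on the right, which silently requires $\bm{H}^{-1}=O_p(1)$ in addition to $\bm{H}=O_p(1)$ (invertibility of $\bm{H}$ is known from Proposition~\ref{pro}, but the order of its inverse is never stated); your route through $\bm{B}=\bm{\Sigma}_p^{1/2}\bm{H}$ extracts both the invertibility and the $O_p(1)$ bound on $\bm{B}^{-1}$ directly from $\bm{B}^\top\bm{B}=\bm{I}_r+o_p(1)$, so the needed inverse control is self-contained rather than assumed, at the cost of introducing the symmetric square root and one extra similarity step $\bm{B}\bm{B}^\top=\bm{I}_r+\bm{B}\bm{R}_{np}\bm{B}^{-1}$. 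Both arguments are valid and give the same rate; yours is slightly longer but closes a bookkeeping point the paper leaves implicit, and your observation that the $\frac{1}{n}\|\bm{C}^0-\widehat{\bm{C}}\|^2$ contribution is absorbed via Theorem~\ref{th:1}$(i)$ matches what the paper does tacitly.
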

\begin{proof}
We have from Lemma~\ref{le:3}
\begin{align}\label{eq:18}
\frac{1}{p}{\bm{A}^0}^\top (\widehat{\bm{A}}-\bm{A}^0\bm{H}) = O_p\left(\frac{1}{\sqrt{n}}\left\|\bm{C}^0-\widehat{\bm{C}}\right\|\right) + O_p\left(\frac{1}{\min (n,p)}\right),
\end{align}
and 
\begin{align}\label{eq:20}
\frac{1}{p}{\widehat{\bm{A}}}^\top (\widehat{\bm{A}}-\bm{A}^0\bm{H}) = \bm{I}_r- \frac{1}{p}\widehat{\bm{A}}^\top \bm{A}^0\bm{H} = O_p\left(\frac{1}{\sqrt{n}}\left\|\bm{C}^0-\widehat{\bm{C}}\right\|\right) + O_p\left(\frac{1}{\min (n,p)}\right).
\end{align}
Left multiply~\eqref{eq:18} by $\bm{H}^\top$ and sum with the transpose of~\eqref{eq:20} to obtain
\begin{equation*}
\bm{I}_r-\frac{1}{p}\bm{H}^\top {\bm{A}^0}^\top \bm{A}^0\bm{H} = O_p\left(\frac{1}{n}\left\|\bm{C}^0-\widehat{\bm{C}}\right\|\right) + O_p\left(\frac{1}{\min (n,p)}\right).
\end{equation*}
Right multiplying by $\bm{H}^\top$ and left multiplying by ${\bm{H}^\top}^{-1}$, we obtain
\begin{equation*}
\bm{I}_r-\frac{1}{p}{\bm{A}^0}^\top \bm{A}^0 \bm{H}\bm{H}^\top  = O_p\left(\frac{1}{\sqrt{n}}\left\|\bm{C}^0-\widehat{\bm{C}}\right\|\right) + O_p\left(\frac{1}{\min (n,p)}\right).
\end{equation*}
Then left multiplying $\left({\bm{A}^0}^\top \bm{A}^0/p\right)^{-1}$, we have 
\begin{equation*}
\bm{H}\bm{H}^\top = \left(\frac{{\bm{A}^0}^\top \bm{A}^0}{p}\right)^{-1} + O_p\left(\frac{1}{\sqrt{n}}\left\|\bm{C}^0-\widehat{\bm{C}}\right\|\right) + O_p\left(\frac{1}{\min (n,p)}\right).
\end{equation*}
\end{proof}

\begin{lemma}\label{le:5}
Under Assumptions~\ref{as:1}-\ref{as:5}, when $p/n \rightarrow \rho >0$,
\begin{equation*}
 \left\|\frac{1}{\sqrt{np}}\bm{\Phi}^\top \bm{M}_{\widehat{\bm{A}}}\bm{E} -\frac{1}{\sqrt{np}}\bm{\Phi}^\top \bm{M}_{\bm{A}^0} \bm{E} \right\| = \sqrt{p}\times O_p\left(\frac{\left\|\bm{C}^0-\widehat{\bm{C}}\right\|^2}{n}\right) + o_p(1).
\end{equation*}
\begin{proof}
Using 
\begin{equation*}
\bm{M}_{\bm{A}^0} = \bm{I}_p - \bm{A}^0\left({\bm{A}^0}^\top\bm{A}^0\right)^{-1}{\bm{A}^0}^\top,\qquad \bm{M}_{\widehat{\bm{A}}} = \bm{I}_p - \left(\widehat{\bm{A}}\widehat{\bm{A}}^\top\right)/p,
\end{equation*}
we calculate
\begin{align*}
\frac{1}{\sqrt{np}}\bm{\Phi}^\top \bm{M}_{\bm{A}^0} \bm{E} - \frac{1}{\sqrt{np}}\bm{\Phi}^\top \bm{M}_{\widehat{\bm{A}}}\bm{E} =& \frac{1}{p\sqrt{np}}\bm{\Phi}^\top \widehat{\bm{A}}\widehat{\bm{A}}^\top \bm{E} -\frac{1}{p\sqrt{np}}\bm{\Phi}^\top \bm{A}^0 \left(\frac{{\bm{A}^0}^\top \bm{A}^0}{p}\right)^{-1} {\bm{A}^0}^\top \bm{E} \\
=& \frac{1}{p\sqrt{np}}\Bigg\{ \bm{\Phi}^\top (\widehat{\bm{A}}-\bm{A}^0\bm{H})\bm{H}^\top {\bm{A}^0}^\top \bm{E} \\
&+ \bm{\Phi}^\top (\widehat{\bm{A}}-\bm{A}^0\bm{H})(\widehat{\bm{A}}-\bm{A}^0\bm{H})^\top \bm{E}\\
 &+  \bm{\Phi}^\top \bm{A}^0\bm{H}(\widehat{\bm{A}}-\bm{A}^0\bm{H})^\top \bm{E} \\
 &+ \bm{\Phi}^\top \bm{A}^0 \left[\bm{H}\bm{H}^\top -\left(\frac{{\bm{A}^0}^\top \bm{A}^0}{p}\right)^{-1}  \right] {\bm{A}^0}^\top \bm{E} \Bigg\}\\
 \equiv& a + b + c + d,
\end{align*}
where we substitute $\widehat{\bm{A}}$ with $\widehat{\bm{A}}-\bm{A}^0\bm{H} + \bm{A}^0\bm{H}$ in the second equality. So the first term on the right-hand side of the first equality is broken down into four terms, one of which is combined with the second term in the right-hand side of the first equality.

For notation simplicity, we denote 
\begin{align}\label{eq:10}
q = \frac{1}{\sqrt{n}}\left\|\bm{C}^0-\widehat{\bm{C}}\right\| + \frac{1}{\min (\sqrt{n}, \sqrt{p})},
\end{align}
which is used to represent the order in the result of Proposition~\ref{pro}.

We calculate each term:
\begin{align*}
\|a\| &= \left\|\frac{1}{p\sqrt{np}} \bm{\Phi}^\top (\widehat{\bm{A}}-\bm{A}^0\bm{H})\bm{H}^\top \bm{A}^\top \bm{E}\right\|\\
&= \frac{1}{p\sqrt{np}} \times O_p(\sqrt{p})\times O_p(\sqrt{p}q)\times O_p(\sqrt{np})\\
&= O_p\left(\frac{\left\|\bm{C}^0-\widehat{\bm{C}}\right\|}{\sqrt{n}}\right) + \left(\frac{1}{\min (\sqrt{n}, \sqrt{p})}\right) = o_p(1),
\end{align*}
where the order of $\widehat{\bm{A}}-\bm{A}^0\bm{H}$ is $\sqrt{p}q$ as proved in Proposition~\ref{pro} and the order of $\|\bm{\Phi}\|$ $\|\bm{A}^\top\bm{E}\|$ can be found in Lemma~\ref{le:8} and \ref{le:1} $(ii)$ respectively.
And when $p/n \rightarrow \rho >0$,
\begin{align*}
\|b\| &= \left\|\frac{1}{p\sqrt{np}}\bm{\Phi}^\top (\widehat{\bm{A}}-\bm{A}^0\bm{H})(\widehat{\bm{A}}-\bm{A}^0\bm{H})^\top \bm{E}\right\|\\
&\le \frac{1}{p\sqrt{np}} \times O_p(\sqrt{p}) \times O_p(pq^2)\times O_p(\sqrt{np})\\
&= \sqrt{p}\times \left[ O_p\left(\frac{\left\|\bm{C}^0-\widehat{\bm{C}}\right\|^2}{n}\right) + O_p\left(\frac{1}{\min (n, p)}\right)\right]\\
&= \sqrt{p}\times O_p\left(\frac{\left\|\bm{C}^0-\widehat{\bm{C}}\right\|^2}{n}\right) + o_p(1),
\end{align*}
where again Proposition~\ref{pro} and Lemma~\ref{le:8} are used.
And
\begin{align*}
c =& \frac{1}{p\sqrt{np}}\bm{\Phi}^\top \bm{A}^0\bm{H}(\widehat{\bm{A}}-\bm{A}^0\bm{H})^\top \bm{E} \\
=&  \frac{1}{p\sqrt{np}}\bm{\Phi}^\top \bm{A}^0\bm{H}\bm{H}^\top (\widehat{\bm{A}}\bm{H}^{-1}-\bm{A}^0)^\top \bm{E}\\
=& \frac{1}{p\sqrt{np}}\bm{\Phi}^\top \bm{A}^0\left[\bm{H}\bm{H}^\top- \left(\frac{{\bm{A}^0}^\top \bm{A}^0}{p}\right)^{-1}\right] (\widehat{\bm{A}}\bm{H}^{-1}-\bm{A}^0)^\top \bm{E} \\
&+  \frac{1}{p\sqrt{np}}\bm{\Phi}^\top \bm{A}^0 \left(\frac{{\bm{A}^0}^\top \bm{A}^0}{p}\right)^{-1}(\widehat{\bm{A}}\bm{H}^{-1}-\bm{A}^0)^\top \bm{E}\\
\equiv& c1 + c2,
\end{align*}
where the second equality is using $\widehat{\bm{A}}-\bm{A}^0\bm{H} = (\widehat{\bm{A}}\bm{H}^{-1}-\bm{A}^0)\bm{H}$. In the third equality, we subtract $({\bm{A}^0}^\top \bm{A}^0/p )^{-1}$ from $\bm{HH}^\top$ and then add it back.

For $c1$, when $p/n \rightarrow \rho > 0$
\begin{align*}
\|c1\| =& \left\|\frac{1}{p\sqrt{np}}\bm{\Phi}^\top \bm{A}^0\left[\bm{H}\bm{H}^\top- \left(\frac{{\bm{A}^0}^\top \bm{A}^0}{p}\right)^{-1}\right] (\widehat{\bm{A}}\bm{H}^{-1}-\bm{A}^0)^\top \bm{E}\right\|\\
\le& \frac{1}{p\sqrt{np}} \times O_p(\sqrt{p})\times O_p(\sqrt{p})\times \left[O_p\left(\frac{1}{\sqrt{n}}\left\|\bm{C}^0-\widehat{\bm{C}}\right\|\right) + O_p\left(\frac{1}{\min (n,p)}\right)\right]\\
&\times \left[O_p\left(\frac{p}{\min (\sqrt{n}, \sqrt{p})}\left\|\bm{C}^0-\widehat{\bm{C}}\right\|\right) + O_p(\sqrt{n}) + O_p\left(\frac{p}{\sqrt{n}}\right)\right]\\
=& O_p\left(\frac{\sqrt{p}}{\min (\sqrt{n},\sqrt{p})}\frac{\left\|\bm{C}^0-\widehat{\bm{C}}\right\|}{n}\right) + O_p\left(\frac{1}{\sqrt{p}}\frac{\left\|\bm{C}^0-\widehat{\bm{C}}\right\|}{\sqrt{n}}\right) +  O_p\left(\frac{\sqrt{p}}{n}\frac{\left\|\bm{C}^0-\widehat{\bm{C}}\right\|}{\sqrt{n}}\right) \\
&+ O_p\left(\frac{1}{\sqrt{p}}\frac{1}{\min (n,p)}\right) +  O_p\left(\frac{\sqrt{p}}{n}\frac{1}{\min (n,p)}\right) \\
=& o_p(1),
\end{align*}
where the order of $\bm{\Phi}$, $\bm{A}^0$ and $\bm{E}$ are found in Lemma~\ref{le:8}; the order of $\bm{H}\bm{H}^\top- \left({\bm{A}^0}^\top \bm{A}^0/p\right)^{-1}$ is found in Lemma~\ref{le:7}; and the order of $\widehat{\bm{A}}\bm{H}^{-1}-\bm{A}^0$ is found in Proposition~\ref{pro}.
Now for $c2$, using the same lemmas and proposition,
\begin{align*}
\|c2\| &=  \left\|\frac{1}{p\sqrt{np}}\bm{\Phi}^\top \bm{A}^0 \left(\frac{{\bm{A}^0}^\top \bm{A}^0}{p}\right)^{-1}(\widehat{\bm{A}}\bm{H}^{-1}-\bm{A}^0)^\top \bm{E}\right\|\\
&\le \frac{1}{p\sqrt{np}} O_p(\sqrt{p})\times O_p(\sqrt{p}) \times \left[O_p\left(\frac{p}{\min (\sqrt{n}, \sqrt{p})}\left\|\bm{C}^0-\widehat{\bm{C}}\right\|\right) + O_p(\sqrt{n}) + O_p\left(\frac{p}{\sqrt{n}}\right)\right]\\
&= O_p\left(\frac{\sqrt{p}}{\sqrt{n}} \frac{\left\|\bm{C}^0-\widehat{\bm{C}}\right\|}{\sqrt{n}}\right) + O_p\left(\frac{\sqrt{p}}{n}\right),
\end{align*}
which is $o_p(1)$ when $p/n \rightarrow \rho > 0$.

And lastly, we have
\begin{align*}
\|d\| &= \left\|\frac{1}{p\sqrt{np}}\bm{\Phi}^\top \bm{A}^0 \left[\bm{H}\bm{H}^\top -\left(\frac{{\bm{A}^0}^\top \bm{A}^0}{p}\right)^{-1}  \right] {\bm{A}^0}^\top \bm{E}\right\|\\
&\le \frac{1}{p\sqrt{np}} O_p(\sqrt{p})\times O_p(\sqrt{p}) \times  O_p\left(\frac{1}{\sqrt{n}}\left\|\bm{C}^0-\widehat{\bm{C}}\right\| + \frac{1}{\min (n,p)}\right)\times O_p(\sqrt{np})\\
&=O_p\left(\frac{1}{\sqrt{n}}\left\|\bm{C}^0-\widehat{\bm{C}}\right\|\right) + O_p\left(\frac{1}{\min (n,p)}\right) = o_p(1),
\end{align*}
where again Lemma~\ref{le:7} is used.

Thus combining the above terms, we have 
\begin{equation*}
 \left\|\frac{1}{\sqrt{np}}\bm{\Phi}^\top \bm{M}_{\widehat{\bm{A}}}\bm{E} - \frac{1}{\sqrt{np}}\bm{\Phi}^\top \bm{M}_{\bm{A}^0} \bm{E}\right\| = \sqrt{p}\times O_p\left(\frac{\left\|\bm{C}^0-\widehat{\bm{C}}\right\|^2}{n}\right) + o_p(1),
\end{equation*}
when $p/n \rightarrow \rho > 0$.
\end{proof}
\end{lemma}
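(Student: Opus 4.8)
The plan is to reduce the difference to the gap between the two projection matrices and then exploit the near-equality $\widehat{\bm A}\approx\bm A^0\bm H$ furnished by Proposition~\ref{pro}. Since $\bm M_{\bm A^0}=\bm I_p-\bm A^0({\bm A^0}^\top\bm A^0)^{-1}{\bm A^0}^\top$ and $\bm M_{\widehat{\bm A}}=\bm I_p-\widehat{\bm A}\widehat{\bm A}^\top/p$, the quantity in question equals $\tfrac1{p\sqrt{np}}\bm\Phi^\top\widehat{\bm A}\widehat{\bm A}^\top\bm E-\tfrac1{p\sqrt{np}}\bm\Phi^\top\bm A^0({\bm A^0}^\top\bm A^0/p)^{-1}{\bm A^0}^\top\bm E$. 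Writing $\widehat{\bm A}=(\widehat{\bm A}-\bm A^0\bm H)+\bm A^0\bm H$, expanding $\widehat{\bm A}\widehat{\bm A}^\top$, and cancelling the ``centred'' term $\bm A^0\bm H\bm H^\top{\bm A^0}^\top/p$ against $\bm A^0({\bm A^0}^\top\bm A^0)^{-1}{\bm A^0}^\top$, this splits into four summands: a term $a$ linear in $\widehat{\bm A}-\bm A^0\bm H$ paired with ${\bm A^0}^\top\bm E$, a term $b$ \emph{quadratic} in $\widehat{\bm A}-\bm A^0\bm H$, a term $c=\tfrac1{p\sqrt{np}}\bm\Phi^\top\bm A^0\bm H(\widehat{\bm A}-\bm A^0\bm H)^\top\bm E$, and a term $d$ carrying the factor $\bm H\bm H^\top-({\bm A^0}^\top\bm A^0/p)^{-1}$.

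I would then bound each summand in Frobenius norm by chaining the rates already in hand: $\tfrac1{\sqrt p}\|\bm\Phi\|,\ \tfrac1{\sqrt p}\|\bm A^0\|,\ \tfrac1{\sqrt{np}}\|\bm E\|=O_p(1)$ (Lemma~\ref{le:8}); $\tfrac1{\sqrt{np}}\|{\bm A^0}^\top\bm E\|=O_p(1)$ (Lemma~\ref{le:1}); $\tfrac1p\|\widehat{\bm A}-\bm A^0\bm H\|^2=O_p(\tfrac1n\|\bm C^0-\widehat{\bm C}\|^2)+O_p(\tfrac1{\min(n,p)})$ and $\|\widehat{\bm A}-\bm A^0\bm H\|=O_p(\sqrt p\,q)$ with $q=\tfrac1{\sqrt n}\|\bm C^0-\widehat{\bm C}\|+\tfrac1{\min(\sqrt n,\sqrt p)}$ (Proposition~\ref{pro}); $\|\bm E^\top(\widehat{\bm A}-\bm A^0\bm H)\|$ from Lemma~\ref{le:2}; and $\bm H\bm H^\top-({\bm A^0}^\top\bm A^0/p)^{-1}=O_p(\tfrac1{\sqrt n}\|\bm C^0-\widehat{\bm C}\|)+O_p(\tfrac1{\min(n,p)})$ (Lemma~\ref{le:7}). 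A direct count gives $\|a\|=O_p(q)$ and $\|d\|=O_p(\tfrac1{\sqrt n}\|\bm C^0-\widehat{\bm C}\|)+O_p(\tfrac1{\min(n,p)})$, both $o_p(1)$ since $q=o_p(1)$ by Theorem~\ref{th:1}; while $\|b\|\le\tfrac1{p\sqrt{np}}\|\bm\Phi\|\,\|\widehat{\bm A}-\bm A^0\bm H\|^2\,\|\bm E\|=\sqrt p\,O_p(\|\bm C^0-\widehat{\bm C}\|^2/n)+O_p(\sqrt p/\min(n,p))$, and the second piece is $o_p(1)$ once $p/n\to\rho>0$; thus $b$ alone produces the advertised $\sqrt p\,O_p(\|\bm C^0-\widehat{\bm C}\|^2/n)$.

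The delicate summand is $c$, which I expect to be the main obstacle: the factor $\|\bm E^\top(\widehat{\bm A}-\bm A^0\bm H)\|$ supplied by Lemma~\ref{le:2} itself carries pieces as large as $O_p(\sqrt n)$ and $O_p(p/\sqrt n)$, so a blunt bound is not visibly negligible. The remedy is to write $(\widehat{\bm A}-\bm A^0\bm H)^\top=\bm H^\top(\widehat{\bm A}\bm H^{-1}-\bm A^0)^\top$ and peel $({\bm A^0}^\top\bm A^0/p)^{-1}$ off $\bm H\bm H^\top$, producing $c=c_1+c_2$ with $c_1$ carrying the small factor $\bm H\bm H^\top-({\bm A^0}^\top\bm A^0/p)^{-1}$ and $c_2=\tfrac1{p\sqrt{np}}\bm\Phi^\top\bm A^0({\bm A^0}^\top\bm A^0/p)^{-1}(\widehat{\bm A}\bm H^{-1}-\bm A^0)^\top\bm E$; then dividing the Lemma~\ref{le:2} rates by the ambient $\sqrt{np}$ and using $p/n\to\rho>0$ shows every surviving order is $o_p(1)$ (e.g. $\tfrac1{\sqrt{np}}O_p(\sqrt n)=O_p(p^{-1/2})$ and $\tfrac1{\sqrt{np}}O_p(p/\sqrt n)=O_p(\sqrt p/n)$). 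Collecting $a+b+c+d$ and taking norms delivers the stated bound $\bigl\|\tfrac1{\sqrt{np}}\bm\Phi^\top\bm M_{\widehat{\bm A}}\bm E-\tfrac1{\sqrt{np}}\bm\Phi^\top\bm M_{\bm A^0}\bm E\bigr\|=\sqrt p\,O_p\!\left(\|\bm C^0-\widehat{\bm C}\|^2/n\right)+o_p(1)$ under $p/n\to\rho>0$.
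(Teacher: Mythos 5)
Your proposal is correct and follows essentially the same route as the paper: the identical $a+b+c+d$ decomposition via $\widehat{\bm A}=(\widehat{\bm A}-\bm A^0\bm H)+\bm A^0\bm H$, the same rate inputs from Proposition~\ref{pro} and Lemmas~\ref{le:8}, \ref{le:1}, \ref{le:2}, \ref{le:7}, with $b$ yielding the leading $\sqrt p\,O_p(\|\bm C^0-\widehat{\bm C}\|^2/n)$ term. Your treatment of the delicate term $c$ — rewriting $(\widehat{\bm A}-\bm A^0\bm H)^\top=\bm H^\top(\widehat{\bm A}\bm H^{-1}-\bm A^0)^\top$ and splitting off $\bm H\bm H^\top-({\bm A^0}^\top\bm A^0/p)^{-1}$ to form $c_1+c_2$ — is exactly the paper's argument.
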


\begin{lemma}\label{le:11}
Recall $J8$ defined in~\eqref{eq:19}, we have
\begin{equation*}
\|J8\| = o_p\left(\left\|\bm{C}^0-\widehat{\bm{C}}\right\|\right) + O_p\left(\frac{1}{\min \left(n,p\right)} \right) +  O_p\left( \frac{\sqrt{n}}{\sqrt{p}}\frac{1}{\min \left(n,p\right)}\right).
\end{equation*}
\end{lemma}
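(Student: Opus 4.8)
The plan is to unfold $J8$ completely and then extract two cancellations that the crude product bounds miss. By the definition \eqref{eq:19} of the $J$-terms, the expansion \eqref{eq:6} of $I8=\tfrac{1}{np}\bm{E}\bm{E}^{\top}\widehat{\bm{A}}$, and the definition \eqref{eq:21} of $\bm{G}$,
\[
J8=-\frac{1}{np^{2}}\,\bm{\Phi}^{\top}\bm{M}_{\widehat{\bm{A}}}\,\bm{E}\bm{E}^{\top}\,\widehat{\bm{A}}\,\bm{G}\,\bm{F}^{\top}.
\]
Plugging in $\|\bm{\Phi}\|,\|\widehat{\bm{A}}\|=O_p(\sqrt{p})$, $\|\bm{F}\|=O_p(\sqrt{n})$, $\|\bm{G}\|=O_p(1)$ (Lemmas~\ref{le:8} and~\ref{le:10}) together with $\|\bm{E}\bm{E}^{\top}\|=O_p(n\sqrt{p})+O_p(p\sqrt{n})$ (Lemma~\ref{le:1}$(iv)$) gives only $\|J8\|=O_p(1)$, far from the claimed order, so sharper arguments are needed. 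The first gain comes from centring $\bm{E}\bm{E}^{\top}$ at its mean $\sigma^{2}n\bm{I}_p$: because $\bm{M}_{\widehat{\bm{A}}}\widehat{\bm{A}}=\bm{0}$, the mean part $-\tfrac{\sigma^{2}}{p^{2}}\bm{\Phi}^{\top}\bm{M}_{\widehat{\bm{A}}}\widehat{\bm{A}}\bm{G}\bm{F}^{\top}$ vanishes identically, so
\[
J8=-\frac{1}{np^{2}}\,\bm{\Phi}^{\top}\bm{M}_{\widehat{\bm{A}}}\bigl(\bm{E}\bm{E}^{\top}-\sigma^{2}n\bm{I}_p\bigr)\widehat{\bm{A}}\,\bm{G}\,\bm{F}^{\top},
\]
and a second-moment computation of the kind underlying Lemma~\ref{le:1}$(iv)$ (using the moment bounds of Assumption~\ref{as:5}) shows $\|\bm{E}\bm{E}^{\top}-\sigma^{2}n\bm{I}_p\|=O_p(p\sqrt{n})$, the off-diagonal fluctuation scale.

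Next I would split $\widehat{\bm{A}}=\bm{A}^{0}\bm{H}+(\widehat{\bm{A}}-\bm{A}^{0}\bm{H})$ and $\bm{M}_{\widehat{\bm{A}}}=\bm{I}_p-\widehat{\bm{A}}\widehat{\bm{A}}^{\top}/p$, turning $J8$ into a ``signal'' part built from $\bm{A}^{0}\bm{H}$ and a ``perturbation'' part built from $\widehat{\bm{A}}-\bm{A}^{0}\bm{H}$. For the signal part the decisive point is \emph{not} to pull apart the three matrices inside $\bm{\Phi}^{\top}(\bm{E}\bm{E}^{\top}-\sigma^{2}n\bm{I}_p)\bm{A}^{0}$ and its $\widehat{\bm{A}}\widehat{\bm{A}}^{\top}/p$-correction ${\bm{A}^{0}}^{\top}(\bm{E}\bm{E}^{\top}-\sigma^{2}n\bm{I}_p)\bm{A}^{0}$, but to bound these \emph{centred bilinear forms} directly through their variances: since $\bm{\Phi}$ and $\bm{A}^{0}$ are independent of $\bm{E}$ and of order $\sqrt{p}$, one gets $O_p(p\sqrt{n})$ rather than the lossy product bound $O_p(p^{2}\sqrt{n})$. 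Carrying the scalars $\tfrac{1}{np^{2}}$, $\|\bm{G}\|=O_p(1)$ and $\|\bm{F}\|=O_p(\sqrt{n})$ then collapses the signal part to $O_p(1/p)$, which is $O_p(1/\min(n,p))$.

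For the perturbation part, a single application of Proposition~\ref{pro} ($\|\widehat{\bm{A}}-\bm{A}^{0}\bm{H}\|=O_p(\sqrt{p}\,q)$ with $q$ as in \eqref{eq:10}) only reproduces $O_p(q)$, whose $\min(\sqrt{n},\sqrt{p})^{-1}$ component is coarser than what is allowed; the remedy is to expand $\widehat{\bm{A}}-\bm{A}^{0}\bm{H}=(I1+\dots+I8)\bm{G}\bm{H}$ via \eqref{eq:5} and \eqref{eq:40} and estimate the eight resulting sub-terms one by one. Those built from $I1$--$I5$ carry an extra $\|\bm{C}^{0}-\widehat{\bm{C}}\|$ and, combined with a centred-bilinear bound such as $\bm{\Phi}^{\top}\bm{M}_{\widehat{\bm{A}}}(\bm{E}\bm{E}^{\top}-\sigma^{2}n\bm{I}_p)\bm{\Phi}=O_p(p\sqrt{n})$, are $o_p(\|\bm{C}^{0}-\widehat{\bm{C}}\|)$; those built from $I6$--$I8$ are handled with Lemmas~\ref{le:1} and~\ref{le:2}, Lemma~\ref{le:7}, and, crucially, Lemma~\ref{le:3}$(iv)$, which supplies the sharp rate $\min(n,p)^{-1}$ (rather than $\min(\sqrt{n},\sqrt{p})^{-1}$) for the residual projected against $\bm{\Phi}$. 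Collecting the survivors of the signal and perturbation parts gives
\[
\|J8\|=o_p\!\bigl(\|\bm{C}^{0}-\widehat{\bm{C}}\|\bigr)+O_p\!\Bigl(\tfrac{1}{\min(n,p)}\Bigr)+O_p\!\Bigl(\tfrac{\sqrt{n}}{\sqrt{p}\,\min(n,p)}\Bigr).
\]

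The step I expect to be the main obstacle is replacing the crude product bounds by genuine second-moment estimates for the centred quadratic-in-$\bm{E}$ bilinear forms (e.g.\ $\bm{\Phi}^{\top}(\bm{E}\bm{E}^{\top}-\sigma^{2}n\bm{I}_p)\bm{A}^{0}$ and $\bm{\Phi}^{\top}\bm{M}_{\widehat{\bm{A}}}(\bm{E}\bm{E}^{\top}-\sigma^{2}n\bm{I}_p)\bm{\Phi}$), which is what makes the argument long, together with the bookkeeping required to push the residual terms through one additional round of the $\widehat{\bm{A}}-\bm{A}^{0}\bm{H}$ expansion so that every leftover is of order $\min(n,p)^{-1}$ or smaller rather than merely $\min(\sqrt{n},\sqrt{p})^{-1}$.
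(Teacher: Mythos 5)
Your overall engine is the right one, and part of it matches the paper: you expand $J8=-\tfrac{1}{np^{2}}\bm{\Phi}^{\top}\bm{M}_{\widehat{\bm{A}}}\bm{E}\bm{E}^{\top}\widehat{\bm{A}}\bm{G}\bm{F}^{\top}$, split $\bm{M}_{\widehat{\bm{A}}}=\bm{I}_p-\widehat{\bm{A}}\widehat{\bm{A}}^{\top}/p$ and $\widehat{\bm{A}}=\bm{A}^{0}\bm{H}+(\widehat{\bm{A}}-\bm{A}^{0}\bm{H})$, and replace lossy product bounds by second-moment bounds on quadratic-in-$\bm{E}$ forms. The paper does exactly this, except it needs no centering: Lemma~\ref{le:1}$(iv)$ already supplies $\|\bm{\Phi}^{\top}\bm{E}\bm{E}^{\top}\|$, $\|\bm{\Phi}^{\top}\bm{E}\bm{E}^{\top}\bm{A}^{0}\|$ and $\|{\bm{A}^{0}}^{\top}\bm{E}\bm{E}^{\top}\bm{A}^{0}\|$ of order $O_p(n\sqrt{p})+O_p(p\sqrt{n})$, and combining these with $\|\widehat{\bm{A}}-\bm{A}^{0}\bm{H}\|=O_p(\sqrt{p}\,q)$ from Proposition~\ref{pro} gives $\|I\|=O_p(q/\sqrt{p})+O_p(\sqrt{n}q/p)$ and $\|II\|=O_p[(1/p+\sqrt{n}/(p\sqrt{p}))(pq^{2}+\sqrt{p}q)]$, which assemble (using $1=O(\sqrt{p}q)$) into the stated bound. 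Your centering of $\bm{E}\bm{E}^{\top}$ at $\sigma^{2}n\bm{I}_p$ is a legitimate variant of the same idea.

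The genuine gap is in your treatment of the perturbation part. First, your diagnosis that a single use of Proposition~\ref{pro} ``only reproduces $O_p(q)$'' is an artifact of pairing it with the lossy bound $\|\bm{\Phi}^{\top}\bm{M}_{\widehat{\bm{A}}}\|\,\|\bm{E}\bm{E}^{\top}-\sigma^{2}n\bm{I}_p\|$; if you apply the same bilinear-form estimate there (either $\|\bm{\Phi}^{\top}(\bm{E}\bm{E}^{\top}-\sigma^{2}n\bm{I}_p)\|=O_p(p\sqrt{n})$ or the uncentered Lemma~\ref{le:1}$(iv)$ version), the residual term is already $O_p(q/\sqrt{p})+O_p(\sqrt{n}q/p)$, which fits the lemma, so the second expansion of $\widehat{\bm{A}}-\bm{A}^{0}\bm{H}$ into $I1,\dots,I8$ is an unnecessary detour. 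Second, the detour as you describe it would not close: Lemma~\ref{le:3}$(iv)$ controls $\tfrac{1}{p}\bm{\Phi}^{\top}\bm{M}_{\widehat{\bm{A}}}(\widehat{\bm{A}}-\bm{A}^{0}\bm{H})$ with the two factors adjacent, whereas in $J8$ the residual sits on the far side of $\bm{E}\bm{E}^{\top}$, so it cannot be invoked where you place it; several of the $I$-terms you would expand ($I3$, $I5$--$I8$, e.g.\ $I8=\tfrac{1}{np}\bm{E}\bm{E}^{\top}\widehat{\bm{A}}$) still contain $\widehat{\bm{A}}$, so ``one additional round'' does not terminate without yet more splitting; and the bound you quote for $\bm{\Phi}^{\top}\bm{M}_{\widehat{\bm{A}}}(\bm{E}\bm{E}^{\top}-\sigma^{2}n\bm{I}_p)\bm{\Phi}$ cannot be obtained by a direct variance computation, since $\bm{M}_{\widehat{\bm{A}}}$ is a function of $\bm{E}$ and must first be reduced to $\bm{E}$-independent pieces plus a Proposition~\ref{pro} remainder. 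So while the target is reachable, the argument as written contains steps that fail and machinery that is not needed.
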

\begin{proof}
\begin{align*}
J8 &=  -\frac{1}{p}\bm{\Phi}^\top \bm{M}_{\widehat{\bm{A}}}I8 \bm{G}\bm{F}^\top \\
&= -\frac{1}{np^2}\bm{\Phi}^\top \bm{M}_{\widehat{\bm{A}}} \bm{E}\bm{E}^\top \widehat{\bm{A}}\bm{G}\bm{F}^\top \\
&= -\frac{1}{np^2}\bm{\Phi}^\top \bm{E}\bm{E}^\top \widehat{\bm{A}}\bm{G}\bm{F}^\top + \frac{1}{np^3}\bm{\Phi}^\top \widehat{\bm{A}}\widehat{\bm{A}}^\top \bm{E}\bm{E}^\top \widehat{\bm{A}}\bm{G}\bm{F}^\top \\
&\equiv I + II,
\end{align*}
where we use $\bm{M}_{\widehat{\bm{A}}} = \bm{I}_p-\widehat{\bm{A}}\widehat{\bm{A}}^\top/p$.
For $I$,
\begin{equation*}
I = -\frac{1}{np^2}\bm{\Phi}^\top \bm{E}\bm{E}^\top \left(\widehat{\bm{A}}-\bm{A}^0\bm{H}\right)\bm{G}\bm{F}^\top -\frac{1}{np^2}\bm{\Phi}^\top \bm{E}\bm{E}^\top \bm{A}^0\bm{H}\bm{G}\bm{F}^\top,
\end{equation*}
then
\begin{align*}
\|I\| \le & \frac{1}{np^2}\left\|\bm{\Phi}^\top \bm{E}\bm{E}^\top\right\|\|\widehat{\bm{A}}-\bm{A}^0\bm{H}\|\|\bm{G}\|\|\bm{F}\| + \frac{1}{np^2}\left\|\bm{\Phi}^\top \bm{E}\bm{E}^\top \bm{A}^0\right\|\|\bm{H}\|\|\bm{G}\|\|\bm{F}\|\\
=& \frac{1}{np^2} \times \left[ O_p\left(p\sqrt{n}\right) + O_p\left(n\sqrt{p}\right)\right] \times O_p\left(\sqrt{p}q\right) \times O_p\left(\sqrt{n}\right) \\
&+ \frac{1}{np^2}\times \left[O_p\left(p\sqrt{n} \right)+O_p\left( n\sqrt{p}\right)\right]\times O_p\left(\sqrt{n}\right)\\
=& O_p\left(\frac{1}{\sqrt{p}}q\right) + O_p\left(\frac{\sqrt{n}}{p}q\right),
\end{align*}
where the order of $\left\|\bm{\Phi}^\top \bm{E}\bm{E}^\top\right\|$ and $\left\|\bm{\Phi}^\top \bm{E}\bm{E}^\top \bm{A}^0\right\|$ are found in Lemma~\ref{le:1}; the order of $\|\widehat{\bm{A}}-\bm{A}^0\bm{H}\|$ is from Proposition~\ref{pro}; and the orders of $\|\bm{F}\|$ and $\|\bm{G}\|$ are found in Lemma~\ref{le:8} $(iii)$ and Lemma~\ref{le:10} respectively.

For $II$,
\begin{align*}
II =& \frac{1}{np^3}\bm{\Phi}^\top \widehat{\bm{A}}\left(\widehat{\bm{A}} -\bm{A}^0\bm{H} + \bm{A}^0\bm{H}\right)^\top \bm{E}\bm{E}^\top \left(\widehat{\bm{A}} -\bm{A}^0\bm{H} + \bm{A}^0\bm{H}\right)\bm{G}\bm{F}^\top \\
=& \frac{1}{np^3}\bm{\Phi}^\top \widehat{\bm{A}}\left[ \left(\widehat{\bm{A}} -\bm{A}^0\bm{H}\right)^\top \bm{E}\bm{E}^\top \left(\widehat{\bm{A}} -\bm{A}^0\bm{H}\right)+ \left(\bm{A}^0\bm{H}\right)^\top \bm{E}\bm{E}^\top \left(\widehat{\bm{A}} -\bm{A}^0\bm{H}\right) \right.\\
&+ \left. \left(\widehat{\bm{A}} -\bm{A}^0\bm{H}\right)^\top \bm{E}\bm{E}^\top \bm{A}^0\bm{H} + \left(\bm{A}^0\bm{H}\right)^\top \bm{E}\bm{E}^\top \bm{A}^0\bm{H}\right]\bm{G}\bm{F}^\top,
\end{align*}

then
\begin{align*}
\|II\| &\le \frac{1}{np^3}\|\bm{\Phi}^\top \|\|\widehat{\bm{A}}\|\left[ \|\widehat{\bm{A}} -\bm{A}^0\bm{H}\|^2 \|\bm{E}\bm{E}^\top\| + \|\widehat{\bm{A}} -\bm{A}^0\bm{H}\| \|\bm{E}\bm{E}^\top \bm{A}^0 \|+ \|{\bm{A}^0}^\top \bm{E}\bm{E}^\top \bm{A}^0\|\right]\|\bm{G}\|\|\bm{F}\|\\
&= \frac{1}{np^2}\left\{\left[O_p\left(p\sqrt{n} \right)+O_p\left( n\sqrt{p}\right)\right]\times \left(pq^2 + \sqrt{p}q + 1\right)\right\}\times O_p\left(\sqrt{n}\right)\\
&= O_p\left[\left(\frac{1}{p} + \frac{\sqrt{n}}{p\sqrt{p}}\right)\left(pq^2 + \sqrt{p}q\right)\right],
\end{align*}
where the order of $\left\| \bm{E}\bm{E}^\top\bm{A}^0\right\|$ and $\left\|{\bm{A}^0}^\top \bm{E}\bm{E}^\top \bm{A}^0\right\|$ are found in Lemma~\ref{le:1}; the order of $\|\widehat{\bm{A}}-\bm{A}^0\bm{H}\|$ is from Proposition~\ref{pro}; and the orders of $\|\bm{\Phi}\|$, $\|\bm{F}\|$ and $\|\bm{G}\|$ are found in Lemma~\ref{le:8} $(i)$, $(iii)$ and Lemma~\ref{le:10} respectively.

Combining $I$ and $II$, we have 
\begin{equation*}
\|J8\| = O_p\left[\left(\frac{1}{p} + \frac{\sqrt{n}}{p\sqrt{p}}\right)\left(pq^2 + \sqrt{p}q\right)\right].
\end{equation*}
Since $1 = O\left(\sqrt{p}q\right) $, the term $\sqrt{p}q$ is dominated by $pq^2$, thus
\begin{align*}
\|J8\| &= O_p\left[\left(\frac{1}{p} + \frac{\sqrt{n}}{p\sqrt{p}}\right)pq^2 \right]\\
&= O_p\left[\left(1 + \frac{\sqrt{n}}{\sqrt{p}}\right)\left(\frac{\left\|\bm{C}^0-\widehat{\bm{C}}\right\|^2}{n} + \frac{1}{\min \left(n,p\right)}\right)\right]\\
&= o_p\left(\left\|\bm{C}^0-\widehat{\bm{C}}\right\|\right) + O_p\left(\frac{1}{\min \left(n,p\right)} \right) +  O_p\left( \frac{\sqrt{n}}{\sqrt{p}}\frac{1}{\min \left(n,p\right)}\right).
\end{align*}

\end{proof}
\end{document}